\newcommand{\mnote}[3]{\todo[color=#3!40,size=\footnotesize]{\textbf{#2:} #1}}
\newcommand{\pushkar}[1]{\mnote{#1}{P}{blue}}
\title{A Multivariate to Bivariate Reduction for Noncommutative
  Rank and Related Results}
\author{V. Arvind\thanks{The Institute of Mathematical Sciences (HBNI),
    Chennai, India,\texttt{  email: arvind@imsc.res.in} and Chennai Mathematical Institute, Siruseri, Kelambakkam, India} \and Pushkar S
  Joglekar\thanks{Vishwakarma Institute of Technology, Pune, India,
    \texttt{email: joglekar.pushkar@gmail.com}}}
\date{}
\newtheorem{theorem}{Theorem}[section]
\newtheorem{definition}[theorem]{Definition}
\newtheorem{lemma}[theorem]{Lemma}
\newtheorem{fact}[theorem]{Fact}
\newtheorem{claim}[theorem]{Claim}
\newtheorem{remark}[theorem]{Remark}
\newtheorem{proposition}[theorem]{Proposition}
\def\qed{\hspace*{\fill} $\Box$\par\medskip}
\DeclareMathOperator{\ncrk}{\rm{ncrk}}
\newcommand{\op}[1]{\ensuremath{\operatorname{#1}}}
\newcommand{\fR}{\FX}
\newcommand{\F}{\mathbb{F}}
\newcommand{\N}{\mathbb{N}}
\renewcommand{\angle}[1]{\langle #1 \rangle}
\newcommand{\FX}{\F\angle{X}}
\newcommand{\FZ}{\F\angle{Z}}
\newcommand{\M}{\mathbb{M}}
\newcommand{\skewf}{\mathbb{F}\newbrak{X}}
\newcommand{\fF}{\skewf}
\newcommand{\fFxy}{\F\newbrak{x,y}}
\DeclareSymbolFont{symbols2stix}{LS1}{stixfrak} {m} {n}
\DeclareMathSymbol{\lparenless}{\mathopen} {symbols2stix}{"32}
\DeclareMathSymbol{\rparengtr}{\mathclose}{symbols2stix}{"33}
\newcommand{\newbrak}[1]{{\lparenless} #1 {\rparengtr}}
\newcommand{\ratX}{\mathbb{F}\newbrak{X}}
\newcommand{\NC}{\mathsf{NC}}
\newcommand{\poly}{\op{poly}}
\newcommand{\polylog}{\op{polylog}}
\newcommand{\RIT}{\textrm{RIT}}
\newcommand{\NCRANK}{\textrm{ncRANK}}
\newcommand{\pNCRANK}{\NCRANK_{poly}}
\newcommand{\SDIT}{\textrm{SDIT}}
\newcommand{\SING}{\textrm{SINGULAR}}
\begin{document}

\maketitle

\begin{abstract}
  We study the \emph{noncommutative rank} problem, $\NCRANK$, of
  computing the rank of matrices with linear entries in $n$
  noncommuting variables and the problem of \emph{noncommutative
    Rational Identity Testing}, $\RIT$, which is to decide if a given
  rational formula in $n$ noncommuting variables is zero on its domain
  of definition.

  Motivated by the question whether these problems have
  \emph{deterministic} $\NC$ algorithms, we revisit their
  interrelationship from a parallel complexity point of view.
  We show the following results:
  \begin{enumerate}
  \item Based on Cohn's embedding theorem \cite{Co90,Cohnfir}
    we show deterministic $\NC$ reductions from multivariate
    $\NCRANK$ to bivariate $\NCRANK$ and from multivariate $\RIT$
    to bivariate $\RIT$.
  \item We obtain a deterministic $\NC$-Turing reduction from
    bivariate $\RIT$ to bivariate $\NCRANK$, thereby proving that
    a deterministic $\NC$ algorithm for bivariate $\NCRANK$
    would imply that both multivariate $\RIT$ and multivariate
    $\NCRANK$ are in deterministic $\NC$.
 \end{enumerate}

\end{abstract}

\section{Introduction}

There are two main algorithmic problems of interest in this
paper. These are the \emph{noncommutative Rational Identity Testing
  problem} ($\RIT$) and the \emph{noncommutative rank} ($\NCRANK$)
problem for matrices with linear entries.

The $\RIT$ problem is a generalization of multivariate polynomial
identity testing to identity testing of multivariate rational
expressions. When the variables are commuting, rational identity
testing and polynomial identity testing are equivalent problems. On
the other hand, if the variables are all noncommuting, the $\RIT$
problem needs different algorithmic techniques as rational expressions
in noncommuting variables are more complicated. Mathematically,
rational expressions over noncommuting variables are quite well
studied.  They arise in the construction of the so-called free skew
fields \cite{Cohnfir}. Hrubes and Wigderson \cite{HW15} initiated the
algorithmic study of $\RIT$ for \emph{rational
  formulas} and gave a deterministic polynomial time reduction from $\RIT$ to $\NCRANK$. Subsequently, deterministic polynomial-time algorithms
were obtained independently by Ivanyos et al \cite{IQS18,IQS17} and by
Garg et al \cite{GGOW15, GGOW20} for the $\RIT$ problem, in fact they obtain deterministic polynomial time algorithms for $\NCRANK$, and using Hrubes-Wigderson reduction from $\RIT$ to $\NCRANK$ get a polynomial time algorithm for $\RIT$. The Ivanyos et al algorithm is algebraic and works for
fields of all characteristics. The Garg et al algorithm has an
analytic flavor and is for the characteristic zero case.

\paragraph{The Edmonds' Problem and $\NCRANK$} 
The $\NCRANK$ problem is essentially the noncommutative version of the
well-known Edmonds' problem: determine the rank of a matrix $M$ whose
entries are linear forms in commuting variables (see
\cite{IQS18,GGOW20,BJP,CM23} for more details). A special case of it
is to determine if a square matrix $M$ with linear entries in
commuting variables is \emph{singular}. This is also known as the
symbolic determinant identity testing problem, $\SDIT$. There is an
easy randomized $\NC$ algorithm for it, based on the Polynomial
Identity Lemma \cite{AJMR19,Sch80,Zippel79,DL78}, by randomly
substituting scalar values for the variables from the field (or a
suitable extension of it) and evaluating the determinant using a
standard $\NC$ algorithm.  However, a deterministic polynomial-time
algorithm for $\SDIT$ is an outstanding open problem \cite{BJP}.

Recently, for the $\RIT$ problem the first deterministic quasi-$\NC$
algorithm has been obtained \cite{ACM24a}. Another recent development
--- building on the connection between the noncommutative Edmonds'
problem and identity testing for noncommutative algebraic branching
programs \cite{CM23} --- is a generalization of the Edmond's problem
to a partially commutative setting with application to the weighted
$k$-tape automata equivalence problem \cite{ACM24b}.

\subsection{This paper: overview of results and proofs}

With this background, the natural algorithmic questions are whether
$\RIT$ for noncommutative rational formulas and $\NCRANK$ have
deterministic $\NC$ algorithms. We revisit the problems from this
perspective and obtain the following new results.



\begin{itemize}
\item[1.]~ We show that multivariate $\RIT$ for formulas is
  deterministic $\NC$ reducible to bivariate $\RIT$ for formulas. More
  precisely, given a rational formula $\Phi(x_1,x_2,\ldots,x_n)$,
  computing an element of the skew field $\ratX$, where
  $X=\{x_1,x_2,\dots,x_n\}$, the deterministic NC reduction replaces
  each $x_i$ by a formula $\Phi_i(x,y)$ computing a polynomial in
  $\F\angle{x,y}$. Then the resulting rational formula
  \[
  \Psi(x,y) = \Phi(\Phi_1(x,y),\Phi_2(x,y),\dots,\Phi_n(x,y))
  \]
  has the property that
  \[
    \Phi(x_1,x_2,\dots,x_n)\ne 0 \textrm{ iff } \Psi(x,y)\ne 0.
  \]

\item[2.]~ We next show that multivariate $\NCRANK$ is deterministic
  $\NC$ reducible to bivariate $\NCRANK$. More precisely, given a
  $d\times d$ linear matrix $A=A_0+\sum_{i=1}^n A_i x_i$ in
  noncommuting variables $X=\{x_1,x_2,\ldots,x_n\}$, where the $A_i$
  are matrices over the scalar field $\F$, we first give a
  deterministic NC reduction that transforms $A$ to a $d\times d$ matrix $B$
  whose entries are bivariate polynomials in $\F\angle{x,y}$, where
  $x$ and $y$ are two noncommuting variables, where its entries
  $B[i,j]$ are given by \emph{polynomial size noncommutative
    formulas}, with the property that $\ncrk(A)=\ncrk(B)$. Then we
  examine the Higman linearization process \cite{HW15} that transforms
  $B$ into a matrix $B'$ with linear entries in $x$ and $y$ such that
  the noncommutative rank $\ncrk(B)$ of $B$ can be easily recovered
  from $\ncrk(B')$. We show that this process can be implemented in
  deterministic $\NC$ (the earlier works
  \cite{HW15,IQS17,IQS18,GGOW20} only consider its polynomial-time
  computability).

  
  Additionally, we consider the more general problem $\pNCRANK$ of
  computing the noncommutative rank of a matrix whose entries are
  noncommutative formulas computing polynomials. We show using our
  parallel Higman linearization algorithm that $\pNCRANK$ is also
  deterministic $\NC$ reducible to bivariate $\NCRANK$.

  Both the multivariate to bivariate reductions, stated above, are
  crucially based on a theorem of Cohn \cite{Co90} (also see 
  \cite[Theorem 4.7.3]{Cohnfir}) which we will refer to as Cohn's 
  embedding theorem and describe it later in the introduction.

\item[3.]~ Finally, obtaining a deterministic $\NC$ reduction from
  $\RIT$ to $\NCRANK$ turns out to be quite subtle. From the work of
  Hrubes and Wigderson \cite{HW15}, who initiated this line of
  research on $\RIT$, we can only obtain a sequential deterministic
  polynomial-time reduction from $\RIT$ to $\NCRANK$. However, for
  our result we require an $\NC$ reduction. If the given rational
  formula has logarithmic depth, then their result already implies an
  $\NC$ reduction.

  Now, in the same paper \cite{HW15}, Hrubes and Wigderson have also
  shown a \emph{depth reduction} result for multivariate
  noncommutative rational formulas: every rational formulas of size
  $s$ is equivalent to a logarithmic depth rational formula of size
  $\poly(s)$. Their construction is based on Brent's depth reduction
  result for commutative arithmetic formulas. 
  However, due to noncommutativity and the presence of inversion
  gates, the formula constructed in their proof needs to be different
  based on whether certain rational subformulas, arising in the
  construction procedure, are identically zero or not. To algorithmize
  such steps in the construction we need to use $\RIT$ as a
  subroutine. As $\RIT$ has a
  polynomial-time algorithm \cite{IQS18,GGOW20}, the depth-reduction
  in \cite{HW15} also has a polynomial time algorithm.\footnote{In the
    commutative case, Brent's result is parallelizable to yield an
    $\NC$ algorithm. For noncommutative formulas without inversion
    gates we can obtain the depth-reduced formula in $\NC$, as we will
    observe later in the paper.}

  As the third result of this paper, building on the Hrubes-Wigderson
  depth-reduction construction, we are able to show that, with oracle
  access to $\RIT$, rational formula depth reduction can be done in
  deterministic $\NC$. Using this we are able to obtain a
  deterministic $\NC$-Turing reduction from $\RIT$ to $\NCRANK$.
  Hence, if bivariate $\NCRANK$ is in deterministic $\NC$ we will
  obtain a deterministic $\NC$ algorithm also for $\RIT$. We leave
  open the question whether depth reduction of noncommutative rational
  formulas is unconditionally in $\NC$.


\end{itemize}

We now outline the proof ideas in more detail with some intuitive
explanations.

\section{Preliminaries}\label{prel-sec}

In this section we recall the essential basic definitions and fix the
notation.

Let $\F$ be a (commutative) field\footnote{In this paper, $\F$ will
either be the field of rationals or a finite field.} and
$X=\{x_1,x_2,\ldots,x_n\}$ be $n$ free noncommuting variables. The
free monoid $X^*$ is the set of all monomials in the variables $X$. A
\emph{noncommutative polynomial} $f(X)$ is a finite $\F$-linear
combination of monomials in $X^*$, and the \emph{free noncommutative
ring} $\FX$ consists of all noncommutative polynomials.

\paragraph{Noncommutative Rational Formulas} An \emph{arithmetic
  circuit} computing an element of $\FX$ is a directed acyclic graph
with each indegree $0$ node labeled by either an input variable
$x_i\in X$ or some scalar $c\in\F$. Each internal node $g$ has
indegree $2$ and is either a $+$ gate or a $\times$ gate: it computes
the sum (resp.\ left to right product) of its inputs. Thus, each gate
of the circuit computes a polynomial in $\FX$ and the polynomial
computed by the circuit is the polynomial computed at the \emph{output
  gate}. A \emph{formula} is restricted to have fanout $1$ or $0$.

When we allow the formulas/circuits to have \emph{inversion gates} we
get \emph{rational formulas and rational circuits}.


\paragraph{The Free Skew Field}

We now briefly explain the free skew field construction. 
The elements of the free skew field are noncommutative rational
functions which are more complicated than their commutative
counterparts. Rational formulas in the commutative setting can be
canonically expressed as ratios of two polynomials.  There is no such
canonical representation for noncommutative rational
formulas.

Following
Hrubes-Wigderson \cite{HW15}, we use Amitsur's approach \cite{Am66}
for formally defining skew fields.\footnote{There are other ways to defining free
skew fields \cite{Am66,Be70,Ro80,Co71,Co72,Cohnrelations,Cohnfir}.}


It involves defining appropriate notion of equivalence of formulas (intuitively, two formulas are equivalent if they agree on their \emph{domain of definition}). The equivalence classes under this equivalence relation are the elements of the free skew field. We give the formal definitions below.


Let $\M_k(\F)$ denote the ring of $k\times k$ matrices
with entries from field $\F$. Note that a rational formula $\Phi$
defines a partial function
\[
\hat{\Phi}:\M_k(\F)^n \mapsto \M_k(\F)
\]
that on input $(a_1, a_2, \ldots, a_n)\in \M_k(\F)^n$ evaluates $\Phi$
by substituting $x_i \leftarrow a_i$ for $i\in [n]$.
$\hat{\Phi}(a_1, \ldots, a_n)$ is undefined if the input to some
inversion gate in $\Phi$ is not invertible in $\M_k(\F)$.

\begin{definition}
  Let $\Phi$ be a rational formula in variables $X$. For each
  $k\in\N$, let $\mathcal{D}_{k,\Phi}$ be the set of all matrix tuples
  $(a_1, a_2, \ldots, a_n)\in\M_k(\F)^n$ such that $\hat{\Phi}(a_1,
  a_2, \ldots, a_n)$ is defined. The \emph{domain of definition} of
  $\Phi$ is the union $\mathcal{D}_\Phi=\bigcup_k \mathcal{D}_{k,\Phi}$.
\end{definition}

\begin{definition}{\rm\cite{HW15}}\hfill{~}\label{correct-def}
\begin{itemize}
\item A rational formula $\Phi$ is called \emph{correct} if for every
  gate $u$ of $\Phi$ the subformula $\Phi_u$ has a nonempty domain of
  definition.
\item Correct rational formulas $\Phi_1,\Phi_2$ are said to be
  \emph{equivalent} (denoted $\Phi_1 \equiv \Phi_2$) if the
  intersection $\mathcal{D}_{\Phi_1}\cap\mathcal{D}_{\Phi_2}$ of their
  domains of definitions is nonempty and they agree on all the points
  in the intersection.

\end{itemize}  
\end{definition}


We note that equivalent formulas need not have the same domain of
definition. For example, $\Phi_1 = z_1z_2z_3$ and $\Phi_2=(z_1z_2z_3
\cdot (z_2z_3 - z_3z_2)^{-1})\cdot (z_2z_3 - z_3z_2)$ are
equivalent. However, the domain of definition of $\Phi_1$ includes all
matrix tuples, whereas the domain of definition of $\Phi_2$ contains
only matrix tuples $(Z_1,Z_2, Z_3)$ such that $Det(Z_2 Z_3 - Z_3 Z_2)
\neq 0$.


The relation $\equiv$ as defined above is an equivalence relation on
rational formulas and the equivalence classes, called \emph{rational
functions}, are the elements of the skew field $\fF$ \cite{HW15,Am66}.

\paragraph{Noncommutative Rank}

We now recall the notion of rank for matrices over the noncommutative
ring $\FX$. 

\begin{definition}[inner rank]\label{inner}
  Let $M$ be a matrix over $\FX$. Its \emph{inner rank} is the least
  $r$ such that $M$ can be written as a matrix product $M=PQ$ where
  $Q$ has $r$ rows (and $P$ has $r$ columns).
\end{definition}

\begin{definition}[full matrices]
  An $n\times n$ square matrix $M$ over $\FX$ is \emph{full} if it
  cannot be decomposed as a matrix product $M=PQ$ where $P$ is
  $n\times r$ and $Q$ is $r\times n$ for $r<n$. In other words, an
  $n\times n$ matrix is called full precisely when its inner rank is
  $n$.
\end{definition}

We can also define the rank of a matrix $M$ to be the maximum $r$ such
that $M$ contains an $r\times r$ full submatrix. For matrices over
$\FX$ these notions of \emph{noncommutative rank} coincide as
summarized below.\footnote{For a ring $R$ in general, a full matrix
  $R$ need not be invertible (see \cite{HW15} for an example).}

\begin{proposition}{\rm\cite{Cohnfir}}\hfill{~}
  
  Let $M$ be an $n\times n$ matrix over the ring $\FX$. Then
  \begin{itemize}
  \item $M$ is a full matrix (that is, $M$ has inner rank $n$) iff it is invertible over the skew field
    $\ratX$.
  \item More generally, $M$ has inner rank $r$ iff the largest full
    submatrix of $M$ is $r\times r$.
  \end{itemize}
\end{proposition}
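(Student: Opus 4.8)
This is a classical result of Cohn; I sketch how one would argue it. The plan is to derive the proposition from the structure of $\FX$ as a \emph{free ideal ring} together with Cohn's theory of universal localization. The bridge to the excerpt's definitions is the classical identification of $\ratX$ — defined above via Amitsur's construction — with Cohn's \emph{universal field of fractions} of $\FX$; I would take this identification as known. Concretely, I would rely on three facts from \cite{Cohnfir,Co90}: (i) $\FX$ is a semifir (it satisfies Cohn's weak algorithm), hence a Sylvester domain, and on such a ring the inner rank $\rho$ of Definition~\ref{inner} is a Sylvester matrix rank function — in particular it satisfies the law of nullity $\rho(AB)\ge\rho(A)+\rho(B)-d$ for $A$ of size $m\times d$ and $B$ of size $d\times n$, and it is additive on diagonal sums; (ii) the universal localization $\FX_\Sigma$ at the set $\Sigma$ of all full matrices over $\FX$ is a skew field, a square matrix over $\FX$ is invertible in $\FX_\Sigma$ exactly when it is full over $\FX$, and for every matrix $M$ over $\FX$ one has $\rank_{\FX_\Sigma}(M)=\rho(M)$; and (iii) $\FX_\Sigma\cong\ratX$. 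Granting (i)--(iii), what remains is bookkeeping, which I sketch next.

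For the first bullet, one direction is elementary: if $M$ is not full, fix a factorization $M=PQ$ with $P$ of size $n\times r$ and $Q$ of size $r\times n$ for some $r<n$; this is also a factorization over the skew field $\ratX$, so the column space of $M$ over $\ratX$ has dimension at most $r<n$, and $M$ is not invertible over $\ratX$. The converse is the substantive direction: if $M$ is full then $M\in\Sigma$, so $M$ is invertible in $\FX_\Sigma\cong\ratX$ by (ii). Essentially all the difficulty sits here — this is Cohn's construction of the universal field of fractions, whose core is checking that $\FX$ is a semifir so that $\Sigma$ is both large enough and well-behaved enough (by the standard properties of the inner rank on a Sylvester domain, products and diagonal sums of full matrices are again full, so $\Sigma$ is multiplicatively closed and closed under diagonal sums) for the universal localization at $\Sigma$ to collapse to a field, rather than merely formally inverting the matrices of $\Sigma$.

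For the second bullet, let $r$ be the largest size of a full submatrix of $M$. First, $\rho(M)\ge r$: taking a minimal factorization $M=PQ$ with inner dimension $\rho(M)$ and restricting it to the rows and columns of a full $r\times r$ submatrix $N$ writes $N$ as a product of an $r\times\rho(M)$ matrix and a $\rho(M)\times r$ matrix, so $r=\rho(N)\le\rho(M)$. Conversely, by (ii) we have $\rho(M)=\rank_{\ratX}(M)$, and over the skew field $\ratX$ ordinary linear algebra applies, so $M$ has a $\rho(M)\times\rho(M)$ submatrix that is invertible over $\ratX$; by the first bullet applied to that submatrix it is full over $\FX$. Hence $M$ has a full submatrix of size $\rho(M)$, so $r\ge\rho(M)$, and the two inequalities give the claim.

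The main obstacle, as indicated, is the full $\Rightarrow$ invertible direction of the first bullet — equivalently, producing the inverse of a full matrix inside $\ratX$. A self-contained proof would have to develop a substantial part of Cohn's theory: the weak (division) algorithm showing $\FX$ is a fir, the deduction that it is a Sylvester domain, and a criterion (Malcolmson's, or Cohn's) identifying which matrices become invertible under universal localization and showing that the localization map $\FX\to\FX_\Sigma$ is injective. In the paper I would simply cite \cite{Cohnfir,Co90} for these points and include only the elementary reductions above, since the proposition is used here as a black box.
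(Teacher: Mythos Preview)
The paper does not prove this proposition at all; it is stated with a bare citation to \cite{Cohnfir} and used thereafter as a black box. Your proposal correctly anticipates this, and the sketch you give of the underlying Cohn theory (weak algorithm $\Rightarrow$ semifir/Sylvester domain $\Rightarrow$ universal field of fractions inverts exactly the full matrices, together with the elementary rank bookkeeping) is the standard route and is sound as an outline.
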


\paragraph{The Algorithmic Problems of Interest}

\vspace{2mm}

At this point we formally define the problems of interest in
this paper.
\begin{enumerate}
\item The multivariate $\RIT$ problem takes as input a rational
  formula $\Phi$, computing a rational function $\hat{\Phi}$ in $\fF$,
  and the problem is to check if $\Phi$ is equivalent to $0$?
    In the bivariate $\RIT$ problem $\Phi$ computes 
  a rational function in $\fFxy$.


\item The multivariate $\NCRANK$ problem takes as input a matrix
  $M$ with affine linear form entries over $X$ and the problem is
  to determine its noncommutative rank $\ncrk(M)$. Bivariate $\NCRANK$
  is similarly defined.
\item A more general version of $\NCRANK$ is $\pNCRANK$ in which the
  matrix entries are allowed to be polynomials in $\FX$ computed by
  noncommutative formulas. A closely related problem is $\SING$ where
  the problem is to test if a square matrix $M$ over $\FX$ with
  entries computed by formulas is singular or not.
\end{enumerate}

\paragraph{The complexity class $\NC$ and $\NC$ reductions}
The class $\NC$ consists of decision problems that can be
solved in $\polylog(n)$ time with $\poly(n)$ many processors.\footnote{
This model is widely accepted as the right theoretical notion for efficient
parallel algorithms.} For two decision problems $A$ and $B$
we say $A$ is many-one $\NC$ reducible to $B$ if there is
a reduction from $A$ to $B$ that is $\NC$ computable. Similarly,
$A$ is $\NC$-Turing reducible to $B$ if there is an oracle
$\NC$ algorithm for $A$ that has oracle access to $B$.

  It turns out that $\SING$ and $\NCRANK$ are equivalent even under
  deterministic $\NC$ reductions. \footnote{As for $M \in \FX^{m\times n}$, $\ncrk(M)=r$ iff $r$ is a size of a largest sized invertible minor of $M$, so to compute $\ncrk(M)$, it suffices to test singularity of matrix $UMV$, where $U,V$ are generic $r\times m, n \times r$ matrices respectively with entries as fresh noncommuting variables for $r\leq \min(m,n)$. See e.g. \cite[Lemma A.3]{GGOW20} for details.}

\paragraph{Cohn's Embedding Theorem}

\vspace{2mm}

We now give an outline of Cohn's embedding theorem and how
it gives us the desired reduction from multivariate to
bivariate $\RIT$ and also from multivariate to bivariate
$\NCRANK$. However, for multivariate to bivariate reduction for $\NCRANK$ we 
will require additional $\NC$ algorithms for formula depth reduction and Higman linearization. 

Let $X=\{x_1, x_2,\ldots,x_n\}$ be a set of $n$ noncommuting variables, and let
$x, y$ be a pair of noncommuting variables. We first recall the
following well-known fact, observed in the early papers on
noncommutative polynomial identity testing \cite{BW05,RS05}: for
noncommutative polynomials in $\FX$, the problem of  polynomial identity
testing (PIT) is easily reducible to PIT for bivariate noncommutative
polynomials in $\F\angle{x,y}$. Indeed, more formally, we have the
following easy to check fact.

\begin{fact}
The map
\[
x_i\mapsto x^{i-1}y, 1\le i\le n
\]
extends to an injective homomorphism (i.e.\ a \emph{homomorphic
  embedding}) from the ring $\FX$ to the ring $\F\angle{x,y}$.
\end{fact}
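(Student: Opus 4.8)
The plan is to verify directly that the map $\phi: x_i \mapsto x^{i-1}y$ extends to a ring homomorphism and then argue injectivity by tracking what happens to monomials. First I would note that since $\F\angle{X}$ is the free associative $\F$-algebra on the generators $x_1,\ldots,x_n$, there is a unique $\F$-algebra homomorphism $\phi:\F\angle{X}\to\F\angle{x,y}$ sending each $x_i$ to $x^{i-1}y$; this is just the universal property of the free algebra, so the only real content of the fact is injectivity. Concretely, $\phi$ sends a monomial $x_{i_1}x_{i_2}\cdots x_{i_k}$ to the word $x^{i_1-1}y\,x^{i_2-1}y\cdots x^{i_k-1}y$ and extends $\F$-linearly.

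Next I would prove that $\phi$ is injective on monomials, i.e.\ that distinct monomials in $X^*$ map to distinct words in $\{x,y\}^*$. The image word $w = x^{i_1-1}y\,x^{i_2-1}y\cdots x^{i_k-1}y$ always ends in the letter $y$ and contains exactly $k$ occurrences of $y$; moreover, reading $w$ from left to right, the maximal blocks of consecutive $x$'s occurring immediately before each $y$ have lengths $i_1-1, i_2-1, \ldots, i_k-1$ respectively. Hence from $w$ one can uniquely read off $k$ and then recover each index $i_j$ by adding $1$ to the length of the $j$-th such block. This shows the restriction of $\phi$ to $X^*$ is injective, and in fact its image is precisely the set of words over $\{x,y\}$ ending in $y$ (together with the empty word, the image of the empty monomial).

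Finally I would lift injectivity from monomials to all of $\F\angle{X}$: the monomials $X^*$ form an $\F$-basis of $\F\angle{X}$, and by the previous step $\phi$ maps this basis injectively into the $\F$-basis $\{x,y\}^*$ of $\F\angle{x,y}$. A linear map sending a basis injectively into a basis is injective, so $\phi$ is an injective $\F$-algebra homomorphism, i.e.\ a homomorphic embedding. There is essentially no obstacle here; the only point that needs a moment's care is the combinatorial claim that the block-length encoding is unambiguously invertible, which is why each $x_i$ is sent to a word ending in the "separator" letter $y$ rather than to a bare power of $x$.
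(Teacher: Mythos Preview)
Your argument is correct; the paper does not actually supply a proof of this fact, merely calling it ``easy to check'' and citing earlier observations in \cite{BW05,RS05}. Your write-up is exactly the natural verification the reader is expected to supply: use the universal property of the free algebra to get the homomorphism, then observe that the letter $y$ acts as a separator so that distinct monomials map to distinct words.
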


However, in order to obtain our multivariate to bivariate reductions,
we need a mapping $\beta:X\to \F\angle{x,y}$ which has the following
properties:
  \begin{itemize}
  \item For each $i$, there is a small noncommutative
    arithmetic formula that computes $\beta(x_i)$.
  \item $\beta$ extends to an injective homomorphism\footnote{That is,
      a homomorphic embedding.}, not just from the ring $\FX$ to
    $\F\angle{x,y}$, but also to an injective homomorphism from the
    skew field $\fF$ to the skew field $\fFxy$. This will guarantee
    that for two rational formulas $\Phi_1,\Phi_2$ computing
    inequivalent rational functions in $\fF$ their images
    $\beta(\Phi_1)$ and $\beta(\Phi_2)$ also compute inequivalent
    rational functions in $\F\angle{x,y}$. 
  \item Furthermore, in order to get the multivariate to bivariate
    reduction for $\NCRANK$, we will additionally require of the map
    $\beta$ that for any matrix $M$ over $\FX$ its image $\beta(M)$,
    which is a matrix over $\F\angle{x,y}$ obtained by applying
    $\beta$ to each entry of $M$, has the same rank as $M$. Such a
    homomorphic embedding is called an \emph{honest embedding}
    \cite{Co90}. Here we note that, full matrices over $\FX$ are invertible over $\fF$ \cite{Cohnfir}. Consequently if one can lift embedding $\beta$ to one between the corresponding free skew fields, it enforces $\beta$ to be an honest embedding.  
  \end{itemize}

  The mapping $x_i\mapsto x^{i-1}y$ actually \emph{does not} extend to an
  honest embedding as observed in \cite{Co90}. Indeed, the rank $2$
  matrix $\left(
\begin{array}{cc}
x_1 & x_2  \\
x_3 & x_4
\end{array}
\right)$
has image
\[
\left(
\begin{array}{cc}
y & xy \\
x^2y & x^3y
\end{array}
\right)=\left(\begin{array}{c}1\\x^2\end{array}\right)
\left(\begin{array}{cc}y & xy\end{array}\right)
\]
which is rank $1$. In general, a homomorphic embedding from a ring $R$
to a ring $R'$ is an \emph{honest embedding} if it maps full matrices
over $R$ to full matrices over $R'$. We now state Cohn's embedding
theorem.

For polynomials $f,g\in \F\angle{x,y}$ let $[f,g]$ denotes the
commutator polynomial $fg-gf$. Cohn's embedding map
$\beta:\FX\to \F\angle{x,y}$ is defined as follows.
\begin{itemize}
\item Let $\beta(x_1)=y$. For $i\geq 2$, define
  $\beta(x_i)= [\beta(x_{i-1}),x]$.
\item We can then naturally extend $\beta$ to a homomorphism from
  $\FX$ to $\F\angle{y,x}$, and it is easy to check that it is
  injective. In fact, we can even assume $|X|$ to be countably
  infinite.
\end{itemize}  

\begin{theorem}[Cohn's embedding theorem]{\rm\cite[Theorem 7.5.19]{Cohnfir}}\label{cohn-thm}
  The embedding map $\beta:\FX\to \F\angle{x,y}$ defined above extends to an embedding between the corresponding skew fields
  $\beta : \F\newbrak{Z}\to \F\newbrak{x,y}$ and hence is an honest
  embedding.
\end{theorem}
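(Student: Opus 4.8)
Cohn's embedding theorem is a classical result; the work here is to recall the structure of its proof and identify the essential algebraic ingredient. The statement to be proved is that $\beta:\FX\to\F\angle{x,y}$, defined by $\beta(x_1)=y$ and $\beta(x_i)=[\beta(x_{i-1}),x]$, extends to an embedding of the free skew field $\F\newbrak{Z}$ into $\F\newbrak{x,y}$, and hence (by the remark in the excerpt that any embedding liftable to the skew fields is automatically honest) is an honest embedding.

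\medskip

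\noindent\textbf{Step 1: Reduce to honesty of the ring embedding.} The first thing I would do is invoke the general principle, due to Cohn, that a homomorphism $\varphi: R\to S$ between firs (free ideal rings) extends to the respective universal skew fields of fractions precisely when $\varphi$ is \emph{honest}, i.e.\ maps full matrices to full matrices. Both $\FX$ and $\F\angle{x,y}$ are firs, and their universal skew fields of fractions are $\F\newbrak{Z}$ and $\F\newbrak{x,y}$ respectively. Hence the theorem is equivalent to the assertion that $\beta$ is an honest embedding of rings. (The excerpt already notes the easy converse direction; here we need the substantive direction, that honesty suffices to lift $\beta$ to the skew fields, which is Cohn's specialisation/localisation theorem for firs.) So the real content is: \emph{$\beta$ takes full matrices over $\FX$ to full matrices over $\F\angle{x,y}$.}

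\medskip

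\noindent\textbf{Step 2: Exhibit $\beta$ as a composite of honest maps.} The clean way to prove honesty is not to argue about an arbitrary full matrix directly, but to factor $\beta$ through an intermediate ring where the relevant maps are manifestly honest. Concretely, I would consider the free algebra $\F\angle{x}[y] = \F\angle{x,y}$ and view $\F\angle{x,y}$ as an Ore-type / skew polynomial construction over $\F\angle{x}$, with the inner derivation $\delta = \op{ad}(x): f\mapsto [f,x]$. The images $\beta(x_2),\beta(x_3),\dots$ are exactly $\delta(y),\delta^2(y),\dots$, i.e.\ the iterated $\delta$-derivatives of $y$. The key lemma is then: \emph{inside the free algebra $\F\angle{x,y}$, the elements $y, \delta(y),\delta^2(y),\dots$ are free (generate a free subalgebra), and this free subalgebra is embedded \emph{honestly}} — i.e.\ a matrix over $\F\angle{y,\delta y,\delta^2 y,\dots}$ that is full over that subring stays full over $\F\angle{x,y}$. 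This is where one uses the structure of $\F\angle{x,y}$ as a fir together with the fact that $\F\angle{x,y}$ is a free (left and right) module over the subalgebra generated by $x$ and $y$'s $\delta$-orbit — the honest-embedding criterion for firs (Cohn, \emph{Free Rings and their Relations}, Thm.\ 5.5.x on honest embeddings via "$S$ is a faithful, flat, full-matrix-preserving extension") then applies.

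\medskip

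\noindent\textbf{Step 3: Assemble.} Combining Steps 1 and 2: $\beta$ factors as the identification $\FX \xrightarrow{\ \sim\ } \F\angle{y,\delta y,\delta^2 y,\ldots}\hookrightarrow \F\angle{x,y}$, the first map an isomorphism onto a free subalgebra and the second an honest inclusion; a composite of honest embeddings is honest, so $\beta$ is honest. By Step 1 it therefore lifts to $\beta:\F\newbrak{Z}\to\F\newbrak{x,y}$, proving the theorem. The \textbf{main obstacle} is Step 2 — establishing that the iterated commutators $\delta^i(y)$ are not merely algebraically independent (that already follows from the easy injectivity noted in the excerpt) but generate a subalgebra over which $\F\angle{x,y}$ is "full-matrix honest." One has to be careful that honesty is strictly stronger than injectivity: the $x_i\mapsto x^{i-1}y$ example in the excerpt shows that an injective ring homomorphism can collapse the rank of a $2\times 2$ matrix, so the argument must genuinely use the commutator structure (e.g.\ a degree/filtration argument in $x$: the top-$\delta$-degree part of $\beta$ applied to a matrix recovers a nonzero multiple of the original matrix's image under $x_i\mapsto$ the leading terms, in a way that preserves fullness). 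I would carry this out by a leading-term argument with respect to the grading of $\F\angle{x,y}$ by total $x$-degree, reducing fullness of $\beta(M)$ to fullness of $M$ over $\FX$ via the associated graded ring, which is again free.
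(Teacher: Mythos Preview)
Your overall framework (firs, honesty, skew polynomial rings, Cohn's specialisation theory) is the right one, but you have inverted the skew polynomial construction, and that is exactly where the argument lives.

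In the paper's sketch of Cohn's proof the base ring of the skew polynomial ring is \emph{not} $\F\angle{x}$ but $R=\FX=\F\angle{x_1,x_2,\dots}$ itself, equipped with the derivation $\delta(x_i)=x_{i+1}$. The crucial structural fact is the \emph{isomorphism}
\[
\F\angle{x,y}\;\cong\; R[x;1,\delta],
\]
under which $\beta$ becomes nothing more than the inclusion $R\hookrightarrow R[x;1,\delta]$ of the coefficient ring. Once you see this, the extension to skew fields is essentially off-the-shelf: for a fir $R$, the skew polynomial ring $R[x;1,\delta]$ is again a fir, and its universal field of fractions contains that of $R$; hence $\beta$ lifts to $\F\newbrak{Z}\hookrightarrow\F\newbrak{x,y}$, and honesty follows a posteriori. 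So the paper's route is ``identify the skew polynomial structure, then quote the field-of-fractions theory for $R[x;1,\delta]$,'' rather than your route ``prove honesty directly, then lift.''

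Your Step~2 as written does not work: $\F\angle{x}$ is the commutative ring $\F[x]$, on which $\op{ad}(x)$ is identically zero, so $\F\angle{x,y}$ is certainly not a skew polynomial ring over it with that derivation. What you are groping towards is exactly Cohn's picture with the roles reversed: the ``coefficient'' ring is the free algebra on the $\delta$-orbit of $y$ (i.e.\ the image of $\beta$), and $x$ is the skew variable. Once you make that correction, your Step~2 collapses into the paper's argument and there is no need for the leading-term/graded attack in your Step~3 --- which, as you yourself flag, is the unfinished part and would be delicate to carry out directly (the $x_i\mapsto x^{i-1}y$ counterexample shows that naive degree arguments do not by themselves force honesty).
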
  
Cohn's construction is based on skew polynomial rings, which explains
the appearance of the iterated commutators
$\beta(x_i)=[\beta(x_{i-1}),x]$. We briefly sketch the underlying ideas
in the appendix (see Section~\ref{cohn-sec}). For more details see
\cite{Co90,Cohnfir}.


\section{The Reduction from multivariate $\RIT$ to bivariate $\RIT$}

The reduction follows quite easily from Theorem~\ref{cohn-thm}. However,
we present some complexity details in this section showing that it is
actually a deterministic $\NC$ reduction. The following lemma is
useful to describe the reduction.

\begin{lemma}
  Recall the embedding map $\beta$ defined above. $\beta(z_0)=y$ and
  $\beta(z_{i+1})=[\beta(z_i),x]$ are polynomials in $\F\angle{x,y}$
  for each $i\ge 0$. Then, for $n\geq 1$ we have
  \[\beta(z_n) = \sum_{i=0}^{n} (-1)^i {n \choose i} x^iyx^{n-i}.
  \]
  As a consequence, there is a deterministic $\NC$ algorithm that
  constructs a $\poly(n)$-sized formula for $\beta(z_n)$. 

\end{lemma}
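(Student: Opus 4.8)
The plan is to prove the displayed identity by a straightforward induction on $n$, and then to read the $\NC$ construction directly off the resulting closed form rather than off the recursive definition of $\beta$.

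For the identity: the base case $n=0$ is immediate, since $\beta(z_0)=y=\binom{0}{0}x^0yx^0$ (and $n=1$ gives $[\beta(z_0),x]=[y,x]=yx-xy$ as a sanity check). For the inductive step I would assume $\beta(z_n)=\sum_{i=0}^n(-1)^i\binom{n}{i}x^iyx^{n-i}$ and compute $\beta(z_{n+1})=[\beta(z_n),x]=\beta(z_n)\,x-x\,\beta(z_n)$. Right-multiplication by $x$ sends the $i$-th summand to $(-1)^i\binom{n}{i}x^iyx^{n+1-i}$; in $-x\,\beta(z_n)$ I would reindex $i\mapsto i-1$ to get $\sum_{i=1}^{n+1}(-1)^i\binom{n}{i-1}x^iyx^{n+1-i}$. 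Adding the two sums and using Pascal's rule $\binom{n}{i}+\binom{n}{i-1}=\binom{n+1}{i}$, with the conventions $\binom{n}{-1}=\binom{n}{n+1}=0$, yields $\beta(z_{n+1})=\sum_{i=0}^{n+1}(-1)^i\binom{n+1}{i}x^iyx^{n+1-i}$, closing the induction.

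For the $\NC$ construction the point is precisely that one should \emph{not} unfold the recursion $\beta(z_{i+1})=[\beta(z_i),x]$: each commutator uses its argument twice, so unfolding produces a formula of size exponential in $n$, and even the shared dag has depth $\Theta(n)$. Using the closed form instead, the algorithm I would use has four parallel phases. (i) For each $i\in\{0,\dots,n\}$, compute the integer $\binom{n}{i}=\bigl(\prod_{j=0}^{i-1}(n-j)\bigr)\big/\bigl(\prod_{j=1}^{i}j\bigr)$; this is an iterated product of at most $n$ integers of $O(\log n)$ bits followed by one exact integer division, and iterated integer multiplication and division are in $\NC$ (when $\F$ is finite one additionally reduces modulo the characteristic). (ii) For each $i$, output a balanced binary multiplication tree computing the monomial $x^i$, of size $O(i)$ and depth $O(\log i)$. (iii) For each $i$, form the term $(-1)^i\binom{n}{i}\cdot\bigl(x^i\cdot(y\cdot x^{n-i})\bigr)$ by attaching a scalar leaf and two product gates. (iv) Combine the $n+1$ terms by a balanced binary tree of $+$ gates of depth $O(\log n)$. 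The resulting formula has size $O(n^2)=\poly(n)$ (indeed logarithmic depth), and laying out this explicitly indexed graph together with the arithmetic of phase (i) is done by $\poly(n)$ processors in $\polylog(n)$ time, so the whole procedure is deterministic $\NC$.

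I do not expect a genuine obstacle here: the induction is routine, and the only step invoking an external fact is phase (i), which rests on the standard result that iterated integer multiplication (hence division) lies in $\NC$ (in fact in uniform $\mathsf{TC}^0$). The real content of the lemma is the conceptual observation that replacing the recursion for $\beta(z_n)$ by the explicit monomial expansion is what makes a $\poly(n)$-size, $\NC$-constructible formula possible in the first place.
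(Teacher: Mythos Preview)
Your proof is correct and follows essentially the same route as the paper: the same induction with Pascal's rule for the identity, and the same observation that the closed form yields a $\poly(n)$-size formula constructible in $\NC$. The only minor difference is that the paper computes the binomial coefficients in $\NC$ via Pascal's identity (just parallel additions) rather than invoking iterated integer multiplication and division as you do.
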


\begin{proof}
We will use induction on $n$. The base case follows from the fact that
$\beta(z_1)= yx - xy$. Inductively assume the claim is true for all
$n$. Now, $\beta(z_{n+1})= [\beta(z_{n}), x]$
\begin{eqnarray*}
&=& \sum_{i=0}^n (-1)^i {n \choose i}x^i y x^{n-i+1}- \sum_{j=0}^n (-1)^j {n \choose j}x^{j+1} y x^{n-j} ~ \text{by induction hypothesis}\\
&=& yx^{n+1}+ \sum_{i=1}^n (-1)^i {n \choose i}x^i y x^{n-i+1} + \sum_{i=1}^{n+1} (-1)^i {n+1-1 \choose i-1}x^{i} y x^{n+1-i}\\
&=& yx^{n+1}+ \sum_{i=1}^n (-1)^i ~~\left[{n+1-1 \choose i} + {n+1-1 \choose i-1} \right]~~ x^i y x^{n-i+1} + (-1)^{n+1} x^{n+1}y \\
&=& yx^{n+1} + (-1)^{n+1}x^{n+1} y + \sum_{i=1}^{n} (-1)^i {n+1 \choose i} x^i yx^{n+1-i}~~\text{from Pascal's identity}\\
&=& \sum_{i=0}^{n+1} (-1)^i {n+1 \choose i} x^i yx^{n+1-i}
\end{eqnarray*}
This completes the inductive proof.

As the binomial coefficients can be computed in $\NC$ using Pascal's identity, the expression for $\beta(z_n)$ obtained above immediately implies an $\NC$ algorithm for construction of a $\poly(n)$ sized formula for $\beta(z_n)$. 
\end{proof}


\begin{theorem}\label{thm-n-rit-2-rit}
  The multivariate $\RIT$ problem is deterministic $\NC$ (in fact,
  logspace) reducible to bivariate $\RIT$. More precisely, given as
  input a rational formula $\Phi$ computing an element of $\ratX$,
  $X=\{x_1,x_2,\ldots,x_n\}$ there is a deterministic $\NC$
  algorithm that computes a rational formula $\Psi$ computing an
  element of $\F\newbrak{x,y}$ such that $\Phi$ is nonzero in its
  domain of definition iff $\Psi$ is nonzero in its domain of
  definition.
\end{theorem}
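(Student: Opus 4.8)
The plan is to combine Cohn's embedding theorem (Theorem~\ref{cohn-thm}) with the explicit formula for $\beta(z_n)$ from the preceding lemma, and then to check that the substitution step is $\NC$ (indeed logspace) computable. First I would apply the embedding map $\beta$ to the input formula $\Phi$: syntactically, $\Psi$ is obtained from $\Phi$ by replacing each leaf labeled $x_i$ with (a pointer to) the formula $\beta(x_i)=\beta(z_{i-1})$ produced by the lemma, and leaving all $+$, $\times$, and inversion gates untouched. Since the lemma gives a $\poly(n)$-sized formula for each $\beta(z_{i-1})$ constructible in deterministic $\NC$, and there are at most $\size(\Phi)$ leaves, the resulting $\Psi$ has size $\poly(\size(\Phi), n)$ and is produced in $\NC$; in fact, since the formulas $\beta(z_j)$ have a completely uniform description via binomial coefficients, one can see the whole transformation is logspace.

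Next I would argue correctness, i.e.\ $\Phi \neq 0$ in its domain of definition iff $\Psi \neq 0$ in its domain of definition. The key point is that $\beta$ extends to an embedding (injective homomorphism) of skew fields $\beta:\F\newbrak{Z}\to\F\newbrak{x,y}$ by Theorem~\ref{cohn-thm}. A rational formula $\Phi$ represents an element of the skew field $\ratX$, and $\Psi=\beta(\Phi)$ represents the image of that element under $\beta$. Since $\beta$ is injective and a homomorphism, it maps the zero element of $\ratX$ to the zero element of $\F\newbrak{x,y}$ and maps every nonzero element to a nonzero element; hence $\Phi$ computes the zero rational function iff $\Psi$ does. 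I should be a little careful here about the notion of ``nonzero on its domain of definition'': one has to note that $\Phi$ is nonzero in its domain of definition exactly when it computes a nonzero element of $\ratX$ (this is essentially how the skew field is built from equivalence classes of correct formulas), and likewise for $\Psi$. I would also remark that $\Psi$ is correct whenever $\Phi$ is, because a homomorphic embedding of skew fields sends invertible elements to invertible elements, so every inversion gate of $\Psi$ still has a nonempty domain of definition; more concretely one can evaluate $\Psi$ on the matrix substitutions induced by pushing forward any generic point witnessing correctness of $\Phi$ through $\beta$.

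The main obstacle, such as it is, is not the correctness argument — that is essentially immediate from Cohn's theorem — but making the reduction genuinely parallel and uniform: one must verify that substituting the explicit polynomials $\beta(z_{i-1})=\sum_{k=0}^{i-1}(-1)^k\binom{i-1}{k}x^k y x^{i-1-k}$ into $\Phi$ can be done without any sequential bottleneck. This is handled by the preceding lemma, which already packages the binomial-coefficient computation (via Pascal's identity) in $\NC$ and yields $\poly(n)$-sized formulas for each $\beta(z_i)$; wiring these into the leaves of $\Phi$ is a purely local, hence logspace, operation. So the proof reduces to: (1) invoke the lemma to get the formulas $\beta(x_i)$; (2) form $\Psi=\Phi(\beta(x_1),\dots,\beta(x_n))$ by leaf substitution, noting the size and $\NC$/logspace bounds; (3) invoke Theorem~\ref{cohn-thm} to conclude $\Phi\equiv 0 \iff \Psi\equiv 0$, together with the observation that correctness of $\Phi$ transfers to $\Psi$.
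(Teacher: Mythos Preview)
Your proposal is correct and follows essentially the same approach as the paper: substitute the $\poly(n)$-sized formulas for $\beta(z_i)$ (from the preceding lemma) into the leaves of $\Phi$, observe this is a logspace operation, and invoke Theorem~\ref{cohn-thm} for the ``$\Phi\equiv 0 \iff \Psi\equiv 0$'' direction together with the fact that an embedding of skew fields preserves invertibility (hence correctness/nontrivial domain of definition). Your treatment is, if anything, slightly more explicit than the paper's about why $\Psi$ remains a correct formula.
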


\begin{proof}
  We can identify $\ratX$ with $\F\newbrak{z_0,z_1,\ldots,z_{n-1}}$.
  Let $\Phi_i(x,y)$ be the $\poly(i)$ size noncommutative formula
  computing the nested commutator $\beta(z_i)$ for each $i$. In the
  rational formula $\Phi$, for each $i$ we replace the input $z_i$ to
  $\Phi$ by $\Phi_i(x,y)$. The new formula we obtain is
  \[
  \Psi(x,y) = \Phi(\Phi_1(x,y),\Phi_2(x,y),\ldots,\Phi_{n-1}(x,y)).
  \]
  By Theorem~\ref{cohn-thm}, $\Psi(x,y)\ne 0$ on its domain of
  definition iff $\Phi(z_0,z_1,\ldots,z_{n-1})$ is nonzero on its
  domain of definition. Furthermore, because $\beta$ is an
  embedding, it is guaranteed that if $\Phi$ has a nontrivial
  domain of definition then $\Psi$ also has a nontrivial
  domain of definition.

  As computation of $\Psi$ from $\Phi$ involves only replacing the
  $z_i$ by $\Phi_i$, the reduction is clearly logspace computable.
\end{proof}

\section{Reduction from $n$-variate $\pNCRANK$ to $2$-variate
  $\NCRANK$}

In this section we give a deterministic $\NC$ reduction from
$n$-variate $\pNCRANK$ to bivariate $\NCRANK$. The basic idea of the
reduction is as follows. Given a polynomial matrix\footnote{We can
  assume it is a square matrix without loss of generality.}
$M\in \FX^{m\times m}$ such that each entry of $M$ is computed by
formula of size at most $s$. We will use Cohn's embedding theorem
\ref{cohn-thm} to get a matrix $M_1$ with \emph{bivariate} polynomial
entries such that each entry of $M_1$ is computed by a $\poly(n,s)$
size noncommutative formula and $\ncrk(M)=\ncrk(M_1)$. Notice that
$M_1$ is an instance of bivariate $\pNCRANK$. Next, we need to give an
$\NC$ reduction transforming $M_1$ to an instance of bivariate
$\NCRANK$ (which will be a matrix with linear entries in $x$ and $y$).

In order to do this transformation in $\NC$, we will first apply the
depth-reduction algorithm of Lemma
\ref{lem-simple-formula-depth-reduction} to get matrix $M_2$ whose
entries are $\poly(n,s)$ size log-depth formulas that compute the same
polynomials as the corresponding entries of $M_1$. Then we apply
Higman Linearization to $M_2$ to obtain a bivariate linear matrix
$M_3$. For this we will use our parallel algorithm for Higman
Linearization described in Theorem
\ref{thm-parallel-effective-Higman}. From the properties of Higman
linearization we can easily recover $\ncrk(M_2)$ from
$\ncrk(M_3)$.  In what follows, first we give a deterministic $\NC$
algorithm for the depth reduction of noncommutative formulas and
Higman linearization process. We conclude the section by giving an
$\NC$ reduction from multivariate to bivariate $\NCRANK$ using above
$NC$ algorithms combined with Cohn's embedding theorem \ref{cohn-thm}.

\subsection{Depth reduction for noncommutative formulas without divisions}

In the commutative setting Brent \cite{Br74} obtained a deterministic
$\NC$ algorithm to transform a given \emph{rational} formula (which
may have division gates) to a log-depth rational formula. In the
noncommutative setting, Hrubes and Wigderson \cite{HW15} proved the
\emph{existence} of log-depth \emph{rational} formula equivalent to
any given rational formula.  Their proof is based on \cite{Br74}.
However, it is not directly algorithmic as explained in the
introduction. We will discuss it in more detail in Section
\ref{sec-rit-ncrank-reduction}. However, it turns out that, if the
noncommutative formula doesn't have division gates then the depth
reduction is quite easy and we obtain a simple deterministic $\NC$
algorithm for it that computes a log-depth noncommutative formula
equivalent to the given noncommutative formula. The proof is based
Brent's commutative version. We just highlight the distinctive points
arising in the noncommutative version in the proof presented in the
appendix.
    
We introducing some notation. Let $\Phi$ be a noncommutative
arithmetic formula computing a polynomial in
$\F\langle x_1, x_2, \ldots, x_n \rangle$. Let $\hat{\Phi}$ denote the
polynomial computed by $\Phi$. For a node $v \in \Phi$, let $\Phi_v$
denote the subformula of $\Phi$ rooted at node $v$, so $\hat{\Phi_v}$
is the polynomial computed by the subformula rooted at $v$.
For a node $v\in \Phi$, let $\Phi_{v\leftarrow z}$ be a formula obtained from $\Phi$ by replacing the sub-formula $\Phi_v$ by single variable $z$. For a node $v\in \Phi$, let $wt(v)= |\Phi_v|$ denote the number of nodes in the subformula rooted at $v$. By size of formula $\Phi$ we refer to number of gates in $\Phi$.  

\begin{lemma}\label{lem-simple-formula-depth-reduction}
  Given a formula $\Phi$ of size $s$ computing a noncommutative
  polynomial $f \in \F\langle x_1, x_2, \ldots, x_n\rangle$ there is
  an $\NC$ algorithm to obtain an equivalent formula $\Phi'$ for $f$
  with depth $O(\log s)$.
\end{lemma}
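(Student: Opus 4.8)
The plan is to adapt Brent's classical depth-reduction argument for commutative arithmetic formulas to the noncommutative (division-free) setting. The key structural fact I would exploit is that a division-free noncommutative formula $\Phi$ computing a polynomial $f$ is \emph{multilinear in the hole}: if $v$ is any gate of $\Phi$, then substituting a fresh variable $z$ for the subformula $\Phi_v$ yields a formula $\Phi_{v\leftarrow z}$ whose output polynomial is \emph{affine} in $z$, i.e.\ $\widehat{\Phi_{v\leftarrow z}} = A \cdot z \cdot B + C$ for polynomials $A,B,C$ not involving $z$ --- except that, unlike the commutative case, $z$ sits between the left context $A$ and the right context $B$, so we must keep track of two "sides" of the hole rather than a single cofactor. (A gate $v$ appears exactly once since $\Phi$ is a formula, so linearity in $z$ is immediate; noncommutativity only forces us to record both the left and right multiplicative context.) Thus the recursion has the form $f = A\cdot \widehat{\Phi_v}\cdot B + C$ where $A,B,C$ are computed by formulas derived from $\Phi_{v\leftarrow z}$.

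First I would recall the balanced-separator step: for a formula of size $s$ there is a gate $v$ with $s/3 \le wt(v) \le 2s/3$, found greedily by walking down from the root (this search is itself $\NC$-computable, or one can simply precompute all subtree weights $wt(\cdot)$ in $\NC$ by the standard tree-contraction / pointer-jumping technique and then pick $v$). Next I would argue that both $\Phi_v$ and $\Phi_{v\leftarrow z}$ have size at most $2s/3 + O(1)$, and that from $\Phi_{v\leftarrow z}$ one extracts formulas for the left context $A$, the right context $B$, and the constant part $C$, each of size at most $|\Phi_{v\leftarrow z}| \le 2s/3$; concretely $C = \widehat{\Phi_{v\leftarrow z}}|_{z=0}$ and $A,B$ are read off by tracing the unique $z$-to-root path and collecting, at each $\times$ gate on that path, the sibling subformula on the appropriate side. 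Then the depth-reduced formula is built as $\Phi' = A' \cdot \Phi_v' \cdot B' + C'$ where $A', B', C', \Phi_v'$ are the recursively depth-reduced versions. One recursion level multiplies formula size by a constant factor and adds $O(1)$ to depth while shrinking the size parameter from $s$ to $2s/3$; after $O(\log s)$ levels the size parameter drops to $O(1)$, giving depth $O(\log s)$ and size $s^{O(1)}$. To see that this whole process runs in $\NC$ (not merely that a log-depth formula exists), I would unfold the recursion explicitly into a DAG of depth $O(\log s)$ whose nodes are labeled by (sub)formulas of the original $\Phi$ together with a "mode" in $\{A, B, C, \text{full}\}$; all nodes at a given recursion level can be processed simultaneously since each only needs the precomputed weights and the combinatorial structure of $\Phi$, and the final assembly of $\Phi'$ is a straightforward parallel substitution.

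The main obstacle I expect is not the algebra but the careful bookkeeping needed to keep both the size and the depth bounds under control through the recursion while simultaneously arguing $\NC$-computability: one has to verify that the auxiliary formulas $A, B, C$ extracted from $\Phi_{v\leftarrow z}$ really do have size at most (roughly) $2s/3$ --- so that the "two halves" $\Phi_v$ and the context formulas both shrink --- and that the number of distinct subproblems generated over all $O(\log s)$ levels stays polynomial, so the parallel implementation has $\poly(s)$ processors. The noncommutativity itself is a mild nuisance (tracking left vs.\ right context, and being careful that $\times$ gates on the $z$-path contribute their sibling to $A$ or to $B$ according to whether the sibling is the left or the right input), but it does not change the asymptotics. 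I would isolate these size/weight inequalities as the technical heart of the argument and relegate the routine verification to the appendix, as the paper indicates it does.
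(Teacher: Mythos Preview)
Your proposal is correct and follows essentially the same route as the paper's proof. The paper finds a gate $v$ with $s/3 \le wt(v) < 2s/3$ (after computing all subtree weights in $\NC$ via pointer doubling), writes $\hat\Phi = \Psi_1\cdot\hat{\Phi_v}\cdot\Psi_2 + \Psi_3$ where $\Psi_1,\Psi_2$ are the products of the sibling subformulas at the $\times$-gates on the root-to-$v$ path (split according to left/right) and $\Psi_3 = \Phi_{v\leftarrow 0}$, recurses on all four pieces (each of size at most $2s/3$), and solves the resulting recurrences $d(s)\le d(2s/3)+3$ and $t(s)\le t(2s/3)+(\log s)^k$ --- exactly your $A,B,C$ decomposition and your size/depth analysis.
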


The proof of the lemma is in the appendix.


\begin{remark}
  In Lemma~\label{lem-simple-formula-depth-reduction}, we avoid using
  the depth-reduction approach for noncommutative rational formulas in
  \cite{HW15}. This is because it can introduce inversion gates even
  if the original formula has no inversion gates. Instead, we directly
  adapt ideas from Brent's construction for commutative formulas to
  obtain the NC algorithm. We note here that Nisan's seminal work
  \cite{Ni91} also briefly mentions noncommutative formula depth
  reduction (but not its parallel complexity or even in an algorithmic
  context).
\end{remark} 


Higman linearization which is sometimes called Higman's trick was
first used by Higman in \cite{Hig}. Cohn extensively used Higman
Linearization in his factorization theory of free ideal rings. Given a
matrix with noncommutative polynomials as its entries, Higman
linearization process transforms it into a matrix with linear
entries. This transformation process has several nice properties such
as: it preserves fullness of the matrix (that is the input polynomial
matrix is full rank iff final linear matrix is full rank), it
preserves irreducibility of the matrix, etc.

We first describe a single step of the linearization process applied
to a single noncommutative polynomial, which easily generalizes to
matrices with polynomial entries. Given an $m \times m$ matrix $M$
over $\fR$ such that $M[m,m]=f+g\times h$, apply the following:
\begin{itemize}
\item Expand $M$ to an  $(m+1)\times (m+1)$ matrix by adding a new last
  row and last column with diagonal entry $1$ and remaining new entries
  zero:
\[
\left[
\begin{array}{c|c}
M & 0 \\
\hline
0 & 1
\end{array}
\right]. 
\]
\item Then the bottom right $2\times 2$ submatrix is transformed as
  follows by elementary row and column operations

\[
\left(
\begin{array}{cc}
f+gh & 0 \\
0 & 1
\end{array}
\right)\rightarrow
\left(
\begin{array}{cc}
f+gh & g \\
0 & 1
\end{array}
\right)\rightarrow
\left(
\begin{array}{cc}
f & g \\
-h & 1
\end{array}
\right)
\]

\end{itemize}

Given a polynomial $f\in\FX$ by repeated application of the above step
we will finally obtain a \emph{linear matrix} $L=A_0+\sum_{i=1}^n A_i
x_i$, where each $A_i, 0\le i\le n$ is an $\ell\times \ell$ over $\F$,
for some $\ell$. The following theorem summarizes its properties.

\begin{theorem}[Higman Linearization]{\rm\cite{Cohnfir}}\label{higthm}
  Given a polynomial $f\in\fR$, there are matrices $P, Q\in\fR^{\ell \times \ell}$
  and a linear matrix $L\in\fR^{\ell\times \ell}$ such that
\begin{equation}\label{higeq}
P \left(
\begin{array}{c|c}
f & 0 \\
\hline
0 & I_{\ell-1}
\end{array}
\right) ~Q=~L
\end{equation}
with $P$ upper triangular, $Q$ lower triangular, and the diagonal
entries of both $P$ and $Q$ are all $1$'s. (Hence, $P$ and $Q$ are both
invertible over $\fF$, moreover entries of $P^{-1}$ and $Q^{-1}$ are in $\FX$).
\end{theorem}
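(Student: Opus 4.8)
The plan is to prove the Higman linearization theorem (Theorem~\ref{higthm}) by induction on the "defect" of the matrix, i.e.\ on how far it is from being linear, following the single-step description given just before the statement. First I would set up the induction measure: given a matrix $M\in\fR^{\ell\times\ell}$ that is not yet linear, pick an entry $M[i,j]$ that is not an affine linear form and write it as $M[i,j] = f + g\times h$ where $g,h$ are polynomials of smaller "degree/size" than the original entry (e.g.\ split off one multiplication gate from a formula, or peel off one factor of a highest-degree monomial); the base case is when every entry is already affine linear, in which case $P=Q=I_\ell$ and $L=M$ works. The measure that decreases could be the total number of multiplication gates (or the sum of degrees) across all entries, and I would argue that each step strictly decreases it while only adding one to the dimension $\ell$.

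Second, I would carry out the single linearization step carefully as a matrix identity. Starting from $\left(\begin{array}{c|c} M & 0 \\ \hline 0 & 1\end{array}\right)$, the column operation "add $g$ times the new last column is wrong — rather add an appropriate multiple so that the new entry becomes $g$" and the row operation "subtract $h$ times row $m$ from the new row" are each realized by multiplying on the left by an upper-triangular elementary matrix with $1$'s on the diagonal and on the right by a lower-triangular one (or vice versa); I need to check the triangularity orientations match the statement. The key point is that an elementary matrix $I + c\, E_{pq}$ is upper triangular with unit diagonal when $p<q$ and lower triangular with unit diagonal when $p>q$, its inverse is $I - c\,E_{pq}$ which has entries in $\fR$, and the product of finitely many such (all upper, resp.\ all lower) is again upper (resp.\ lower) triangular with unit diagonal and with inverse over $\fR$. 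Composing the single steps, the accumulated $P$ is a product of upper-triangular unit-diagonal matrices hence itself upper triangular with unit diagonal (same for $Q$ lower), giving the final form $P\left(\begin{array}{c|c} f & 0 \\ \hline 0 & I_{\ell-1}\end{array}\right)Q = L$. I would also remark that invertibility of $P,Q$ over $\fF$ is immediate since they are triangular with unit diagonal, and in fact their inverses already lie in $\fR$.

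The main obstacle I anticipate is bookkeeping the triangularity directions consistently through the recursion: when I add a new row and column and perform the two elementary operations, I must make sure the row operations all go "one way" and the column operations all go "the other way," and that when I then recurse on the resulting $(\ell+1)\times(\ell+1)$ matrix the newly introduced operations still have the correct orientation relative to the old block — padding a matrix with an identity block in the bottom-right and embedding previously-constructed triangular matrices as the top-left block of a larger identity preserves upper/lower triangularity, so this should go through, but it requires care. A secondary subtlety is ensuring that the decomposition $M[i,j]=f+gh$ can always be chosen so the induction measure strictly drops; splitting at the top multiplication gate of a formula computing the entry does this cleanly, and one should note the choice of entry and of the split is not canonical but any valid choice works. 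Finally I would note (for the $\NC$/effective versions referenced elsewhere) that the depth of the recursion and hence $\ell$ is controlled by the total formula size, but for this theorem statement only existence of $P,Q,L$ with the stated structural properties is required.
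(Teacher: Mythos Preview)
Your proposal is correct and follows exactly the approach the paper sketches: the paper does not give a detailed proof of Theorem~\ref{higthm} (it is cited from \cite{Cohnfir}) but describes the single linearization step and then asserts that repeated application yields the linear matrix $L$, which is precisely the induction you outline. One small correction: in your description you swapped which operation is the row operation and which is the column operation---in the paper's step, $(f+gh,0;0,1)\to(f+gh,g;0,1)$ is the \emph{row} operation (left-multiply by $I+gE_{m,m+1}$, upper triangular) and $(f+gh,g;0,1)\to(f,g;-h,1)$ is the \emph{column} operation (right-multiply by $I-hE_{m+1,m}$, lower triangular)---but you explicitly flagged this as something to verify, and once oriented correctly your triangularity bookkeeping goes through as you describe, including for entries not at position $(m,m)$ since the new row/column index is always the largest.
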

Instead of a single $f$, we can apply Higman linearization to a matrix
of polynomials $M\in\fR^{m\times m}$ to obtain a linear matrix $L$
such that $P(M \oplus I_k)Q = L$ for invertible matrices $P,Q$.  Garg et
al.~\cite{GGOW20} gave polynomial time algorithm to carry out Higman
linearization for polynomial matrix whose entries are given by
noncommutative formulas. We will give an $\NC$ algorithm to implement
this transformation.

\begin{theorem}\label{thm-parallel-effective-Higman}
Let $A \in \FX^{n \times n}$ be a polynomial matrix such that each entry of $A$ is given by a noncommutative formula of size at most $s$. Then there is a deterministic $\NC$ algorithm (with parallel time complexity $\poly( \log s, \log n)$) to compute invertible upper and lower triangular matrices $P, Q \in \FX^{\ell \times \ell}$ with all diagonal entries $1$ and a linear matrix $L \in \FX^{\ell \times \ell}$ such that $ P ~\left(
\begin{array}{c|c}
A & 0 \\
\hline
0 & I_{k}
\end{array}
\right)~Q = L$, where $\ell = n+k$ and $k$ is $O(n^2\cdot s)$. 
All the entries of $P,Q$ are computable by algebraic branching programs of size $\poly(n,s)$. 
Moreover $ncrk(A)+k = ncrk(L)$, hence the rank of $A$ is easily computable from the rank of $L$.  
\end{theorem}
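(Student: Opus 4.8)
The plan is to combine the parallel depth reduction of Lemma~\ref{lem-simple-formula-depth-reduction} with a ``level-by-level'' parallelization of Higman's trick, and to read off the rank identity $\ncrk(A)+k=\ncrk(L)$ from Theorem~\ref{higthm} once the unitriangular witnesses $P,Q$ have been produced. As a first step I would apply Lemma~\ref{lem-simple-formula-depth-reduction} in parallel to all $n^2$ entries of $A$, replacing the formula for $A[i,j]$ by an equivalent formula $\Phi_{ij}$ of depth $O(\log s)$ and size $\poly(s)$; this is an $\NC$ step that does not change $A$ as a matrix over $\FX$. Thereafter every entry is given by a logarithmic-depth formula whose underlying tree has $\poly(s)$ gates arranged into $O(\log s)$ levels.

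Next, set $\widetilde A=A\oplus I_k$ and reserve for each entry $(i,j)$ a private block $C_{ij}$ of new coordinates among the last $k$, of size the number of $+/\times$ gates of $\Phi_{ij}$, so that $k=\sum_{i,j}|C_{ij}|$ is polynomially bounded in $n$ and $s$. A single Higman step applied to an occurrence of a polynomial $f'+g\cdot h$ sitting at position $(p,q)$ consists of adding $g$ times row $q'$ to row $p$ and then adding $-h$ times column $q'$ to column $q$, where $q'\in C_{ij}$ is a fresh coordinate; this replaces the entry by $f'$ at $(p,q)$, $g$ at $(p,q')$, $-h$ at $(q',q)$, $1$ at $(q',q')$, and changes no other entry. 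I would run Higman's recursion on $\Phi_{ij}$ \emph{top-down, one level at a time}: level $0$ is the root, and once the gates at depth $t$ have been placed (each at a known position carrying its subformula), they are all broken up simultaneously, since they use pairwise disjoint fresh coordinates from $C_{ij}$. Crucially, the operations coming from different entries $(i,j)$ also never interfere: entry $(i,j)$'s operations only touch row $i$, column $j$ and the block $C_{ij}$, and their side effects on a shared row $i$ (resp.\ shared column $j$) lie in the pairwise disjoint coordinate ranges $C_{ij}\cup\{j\}$ (resp.\ $C_{ij}\cup\{i\}$). Hence the entire transformation of $\widetilde A$ is completed in $O(\log s)$ parallel rounds.

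In round $t$ the update to $\widetilde A$ is a left and a right multiplication by matrices whose only nonzero non-scalar entries are subformulas of the $\Phi_{ij}$ (hence are computed by $\poly(s)$-size ABPs); ordering the coordinates so that new ones follow old and later rounds follow earlier ones makes these uni-upper- and uni-lower-triangular respectively, so the accumulated products $P$ and $Q$ are unitriangular and, being products of $O(\log s)$ such matrices, have entries computed by $\poly(n,s)$-size ABPs. Each round is an $\NC$ computation on $\poly(n,s)$-dimensional matrices, so the linearization runs in parallel time $\poly(\log s,\log n)$ with $\poly(n,s)$ processors. By Theorem~\ref{higthm} the resulting matrix $L$ has affine entries and $P(A\oplus I_k)Q=L$; and since $P,Q$ are unitriangular, hence invertible over $\FX$, and inner rank is invariant under multiplication by invertible matrices, we get $\ncrk(L)=\ncrk(A\oplus I_k)=\ncrk(A)+k$, so $\ncrk(A)$ is recovered from $\ncrk(L)$ by subtracting $k$.

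The main obstacle is the coordinate bookkeeping in the middle step: one must pin down the assignment of fresh coordinates so that ``break up an entire level, simultaneously for all $n^2$ entries'' is provably sound, i.e., the elementary row/column operations of one subformula's linearization demonstrably cannot disturb the entries being assembled for another. Once the disjointness of the affected coordinates is nailed down, verifying that the accumulated $P,Q$ stay unitriangular with small-ABP entries and that the output $L$ is exactly the Higman linearization guaranteed by Theorem~\ref{higthm} is a routine induction on formula structure, and the complexity bounds are then straightforward.
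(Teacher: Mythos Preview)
Your proposal is correct and follows the same high-level strategy as the paper: depth-reduce every entry via Lemma~\ref{lem-simple-formula-depth-reduction}, then exploit the $O(\log s)$ depth to carry out Higman linearization in $O(\log s)$ parallel rounds. The organization, however, is different. The paper works \emph{bottom-up}: for a single log-depth formula it recursively computes the linearizations $(P_g,Q_g,L_g)$ and $(P_w,Q_w,L_w)$ of the two children and then writes down explicit block-matrix formulas composing them into $(P,Q,L)$, with separate formulas for $+$ and $\times$ gates; only afterwards does it treat the $n^2$ entries, linearizing each one independently and assembling the per-entry outputs into one large block matrix by a second explicit placement. Your approach instead runs a single \emph{top-down} process directly on the $\ell\times\ell$ matrix, breaking one formula level at a time for all $n^2$ entries simultaneously, and argues correctness through a disjoint-coordinate non-interference invariant rather than through closed-form block decompositions.

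Both routes give the same bounds; the trade-off is explicit formulas versus an invariant. The paper's block expressions make the shape of $L$ and the ABP sizes of $P,Q$ readable by inspection, at the cost of tracking the auxiliary pieces $R_g,C_g,P_g',Q_g'$ through two gate-cases and then again through the matrix assembly. Your route avoids that algebra at the price of verifying non-interference, which you rightly flag as the crux and which indeed goes through: within each entry the active subformulas at any round occupy pairwise distinct columns, and across entries the positions touched by the $(i,j)$-linearization lie in $(\{i\}\cup C_{ij})\times(\{j\}\cup C_{ij})$, so two entries can share at most a single original row or column, never both. In fact your $P,Q$ are simpler than you suggest: writing the round-$t$ row-operation matrix as $P_t=I+N_t$, the nonzero columns of $N_t$ are round-$t$ fresh coordinates while its nonzero rows are earlier coordinates, so $N_sN_t=0$ whenever $s\ge t$ and the product collapses to $P=I+\sum_tN_t$ (similarly for $Q$). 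Every off-diagonal entry of $P,Q$ is thus a single subformula of some $\Phi_{ij}$, and the ABP bound is immediate.
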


The proof of the above theorem is in the appendix.

Using Theorem \ref{cohn-thm}, Lemma
\ref{lem-simple-formula-depth-reduction} and Theorem
\ref{thm-parallel-effective-Higman} we get a deterministic $\NC$ reduction
from multivariate $\NCRANK$ to bivariate $\NCRANK$.

\begin{theorem}
There is a deterministic $\NC$ reduction from the multivariate $\NCRANK$ problem to the bivariate $\NCRANK$ problem.
\end{theorem}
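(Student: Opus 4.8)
The plan is to realize the three-stage pipeline described at the start of this section and to argue that composing the three already-established $\NC$ procedures yields a single deterministic $\NC$ reduction. Write the input as a $d\times d$ linear matrix $A=A_0+\sum_{i=1}^{n}A_i x_i$ over $\FX$, and identify $\FX$ with $\F\angle{z_0,\ldots,z_{n-1}}$ via $x_i\leftrightarrow z_{i-1}$. First I would apply Cohn's honest embedding $\beta$ of Theorem~\ref{cohn-thm} entrywise: replace each $z_k$ by the nested commutator $\beta(z_k)$, so that the $(i,j)$ entry of $A$ becomes the bivariate polynomial $(A_0)[i,j]+\sum_{k=1}^{n}(A_k)[i,j]\,\beta(z_{k-1})$, producing a matrix $M_1$ over $\F\angle{x,y}$. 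Using the closed form $\beta(z_k)=\sum_{t=0}^{k}(-1)^t\binom{k}{t}x^t y x^{k-t}$ together with the earlier observation that such a formula of size $\poly(k)$ is $\NC$-constructible (binomial coefficients being $\NC$-computable), each entry of $M_1$ is a $\poly(n)$-size noncommutative formula produced in $\NC$; thus $M_1$ is an instance of bivariate $\pNCRANK$.

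The key point for correctness is that $\ncrk(M_1)=\ncrk(A)$. Since $\beta$ is a ring homomorphism, any factorization $N=PQ$ of inner size $r$ of a submatrix $N$ of $A$ maps to $\beta(N)=\beta(P)\beta(Q)$ of inner size $r$, so $\ncrk(\beta(N))\le\ncrk(N)$ for every submatrix $N$; and since $\beta$ is honest it sends an $r\times r$ full submatrix of $A$ to an $r\times r$ full submatrix of $M_1$. Invoking the characterization of $\ncrk(\cdot)$ as the size of the largest full submatrix for matrices over $\FX$, these two facts together give $\ncrk(M_1)=\ncrk(A)$. Next I would apply the $\NC$ depth-reduction algorithm of Lemma~\ref{lem-simple-formula-depth-reduction} to each (division-free) entry of $M_1$ in parallel, obtaining $M_2$ with the same entries but computed by $\poly(n)$-size, $O(\log n)$-depth formulas, so trivially $\ncrk(M_2)=\ncrk(M_1)$. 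Finally I would run the parallel Higman linearization of Theorem~\ref{thm-parallel-effective-Higman} on $M_2$: this produces, in $\NC$, an integer $k=\poly(d,n)$, invertible triangular matrices $P,Q$, and a linear matrix $M_3$ over $\F\angle{x,y}$ with $P(M_2\oplus I_k)Q=M_3$ and $\ncrk(M_2)+k=\ncrk(M_3)$. Hence $\ncrk(A)=\ncrk(M_3)-k$, and since $M_3$ is an instance of bivariate $\NCRANK$, this is the desired reduction.

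It remains to observe that the whole transformation is in $\NC$. Each of the three stages is an $\NC$ procedure, and $\NC$ reductions compose; the only quantity to monitor is the formula-size bound, which stays $\poly(n)$ throughout — Stage 1 outputs $\poly(n)$-size formulas, Stage 2 preserves polynomial size, and the $\poly(\log s,\log n)$ running time in Theorem~\ref{thm-parallel-effective-Higman} then becomes $\polylog(n,d)$. Given that Theorems~\ref{cohn-thm} and \ref{thm-parallel-effective-Higman} and Lemma~\ref{lem-simple-formula-depth-reduction} are in hand, there is essentially no remaining obstacle; the only subtlety worth double-checking is the correctness claim $\ncrk(M_1)=\ncrk(A)$ — that Cohn's embedding preserves the noncommutative rank exactly, not merely the full/non-full dichotomy — which is the largest-full-submatrix argument above, together with the bookkeeping of the additive shift $k$ through Higman linearization.
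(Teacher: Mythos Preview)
Your proposal is correct and follows essentially the same three-stage pipeline the paper lays out just before the theorem: apply Cohn's honest embedding entrywise to pass to a bivariate polynomial matrix, depth-reduce the entry formulas in $\NC$ via Lemma~\ref{lem-simple-formula-depth-reduction}, and then Higman-linearize in $\NC$ via Theorem~\ref{thm-parallel-effective-Higman}, recovering the rank from the additive shift. Your extra care in justifying $\ncrk(M_1)=\ncrk(A)$ via the largest-full-submatrix characterization together with honesty of $\beta$ is exactly the right way to upgrade ``full $\mapsto$ full'' to exact rank preservation, and is left implicit in the paper.
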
 


\section{$\NC$ Reduction from $\RIT$ to bivariate $\NCRANK$} \label{sec-rit-ncrank-reduction}

In this section we give an $\NC$-Turing reduction from $\RIT$ to
bivariate $\NCRANK$. That is, we design an $\NC$ algorithm for $\RIT$
assuming we have an oracle for bivariate $\NCRANK$. Hrubes and
Wigderson in \cite{HW15} give a polynomial time reduction from $\RIT$
to $\NCRANK$ problem \cite[Theorem 2.6]{HW15}. Also they show that
for any given rational formula $\Phi$ there is a log-depth rational
formula that is equivalent to $\Phi$ \cite[Proposition 4.1]{HW15}.


Our key contribution here is to use Cohn's embedding theorem to
transform $\RIT$ problem to the \emph{bivariate} case. Then we
parallelize the Hrubes-Wigderson reduction from $\RIT$ to
$\NCRANK$. In fact, if the input rational formula is already
logarithmic depth then the Hrubes-Wigderson reduction from $\RIT$ to
$\NCRANK$ can be implemented in $\NC$. In this section we design an
$\NC$ algorithm for depth reduction of rational formulas (possibly
with division gates) assuming $\NC$ oracle for bivariate $\NCRANK$.\footnote{ It is an interesting problem to devise an
  $\NC$ algorithm for rational formula depth reduction without oracle
  access to singularity test.} Indeed, the construction of an
equivalent log-depth rational formula, as described in \cite{HW15},
does not appear to be directly parallelizable, as its description
crucially requires rational formula identity testing.\footnote{The
  overall proof in \cite{HW15} is based on the Brent's depth-reduction
  of commutative rational formulas \cite{Br74}. In the commutative
  case addressed by Brent, it turns out that the construction
  procedure does not require oracle access to identity testing and he
  obtains an $\NC$ algorithm for obtaining the depth-reduced formula.}

We show that, indeed, the depth-reduction proof in \cite{HW15} can be
parallelized step by step. However, there are some key points where
our algorithmic proof is different. Firstly, to solve the $\RIT$
instance arising in the depth-reduction proof, we need to recursively
depth-reduce the corresponding subformula and then apply Hrubes-Wigderson
reduction from RIT to $\NCRANK$ on the constructed log-depth subformula.
Secondly, we need to handle an important case arising in the proof (namely, the case (2) in the description of the Normal-Form procedure in the proof of the Lemma \ref{lem-rational-formula-depth-reduction} in the Appendix) which was not significant for the existential argument in
\cite{HW15}. In fact to handle this case, we require an argument based
on Brent's commutative formula depth reduction \cite{Br74}.

In the detailed proof of Lemma \ref{lem-rational-formula-depth-reduction} given in the Appendix, we first sketch our $\NC$ algorithm for depth
reduction of rational formula assuming oracle access to bivariate $\NCRANK$. We highlight and elaborate the key
steps where our proof differs from \cite{HW15}. We need to reproduce
some parts of their proof for completeness, for these parts we just
sketch the argument and refer to \cite{HW15} for the details. Using
this depth reduction algorithm we give an $\NC$ Turing reduction from
$\RIT$ to bivariate $\NCRANK$, which is a main
result of this section.

\subsection{Depth reduction for noncommutative formulas with inversion gates}

The following is a consequence of results in \cite{HW15} and \cite{DM15}.

\begin{theorem}[\cite{HW15}, \cite{DM15}] \label{thm-hwdm}

  Let $\Phi$ be a rational formula of size $s$ computing a
  \emph{non-zero} rational function in $\fF$. If the field $\F$ is
  sufficiently large and $k>2s$ then, at a matrix tuple
  $(M_1, M_2, \ldots, M_n)$ chosen uniformly at random from
  $M_{k \times k} (\F)^n$, the matrix $\hat{\Phi}(M_1, \ldots, M_n)$
  is nonsingular with "high" probability.
\end{theorem}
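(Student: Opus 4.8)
The plan is to combine the linear-pencil representation of rational functions from \cite{HW15} with the regularity lemma of Derksen and Makam \cite{DM15}, together with the Polynomial Identity Lemma \cite{Sch80,Zippel79,DL78}. Write $r := \hat\Phi$, so by hypothesis $r$ is a nonzero element of $\fF$.

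The first step reduces the claim to the nonemptiness of a single domain of definition. Since $r \ne 0$ in $\fF$, the element $r^{-1} \in \fF$ is well defined and nonzero, and it is computed by the rational formula $\Psi$ of size $s+1$ obtained by appending an inversion gate to $\Phi$. The key observation is that for every $k$ and every $\vec{M} \in \M_k(\F)^n$, membership $\vec{M} \in \mathcal{D}_{k,\Psi}$ already forces $\hat\Phi(\vec{M})$ to be nonsingular: the output gate of $\Psi$ is the inversion gate whose input is exactly $\hat\Phi$, so $\Psi$ being defined at $\vec{M}$ means $\hat\Phi(\vec{M})$ is defined and invertible in $\M_k(\F)$ (and $\vec{M} \in \mathcal{D}_{k,\Phi}$ holds automatically, $\Phi$ being a subformula of $\Psi$). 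Thus it suffices to show that $\mathcal{D}_{k,\Psi}$ is nonempty for $k > 2s$; being a Zariski-open subset of the irreducible variety $\M_k(\F)^n$ --- the locus where every input to an inversion gate of $\Psi$ is defined and invertible --- it is then Zariski-dense, and the Polynomial Identity Lemma will show that a uniformly random $\vec{M}$ lands in it with high probability once $|\F|$ is large.

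The second step proves the nonemptiness via \cite{HW15} and \cite{DM15}. The pencil construction of Hrubes and Wigderson, which underlies their reduction of $\RIT$ to $\NCRANK$, applied to $\Psi$ produces a linear matrix $L = L_0 + \sum_i L_i x_i$ over $\FX$ of dimension $m = O(s)$ such that: (i) $L$ is full, i.e.\ $\ncrk(L) = m$, equivalently $L$ is invertible over $\fF$ --- forced here by the fact that $\hat\Psi = r^{-1}$ is a nonzero element of $\fF$, together with the characterization that a square matrix over $\FX$ is full iff invertible over $\fF$ \cite{Cohnfir}; and (ii) the pencil is built so that any invertible evaluation $L(\vec{M})$ at a $d$-dimensional matrix tuple produces a tuple $\vec{M} \in \mathcal{D}_{d,\Psi}$, invertibility of $L$ at a point subsuming invertibility of the sub-pencils attached to the inversion gates of $\Psi$. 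Now apply the Derksen--Makam regularity lemma \cite{DM15}: a full $m \times m$ linear matrix over $\FX$ admits an invertible evaluation at some tuple of $d \times d$ matrices as soon as $d \ge m - 1$; equivalently, for such $d$ the polynomial $p(\vec{M}) := \det L(\vec{M})$ in the $d^2 n$ entries of $\vec{M} \in \M_d(\F)^n$ is not identically zero, and $\deg p \le dm$. Taking $d = k$ with $k > 2s$ places us in the valid range $k \ge m - 1$ (here one uses the precise bound $m \le 2s + O(1)$ on the pencil dimension from \cite{HW15}), so $p$ is a nonzero polynomial of degree at most $km$; by the Polynomial Identity Lemma, a uniformly random $\vec{M} \in \M_k(\F)^n$ satisfies $p(\vec{M}) \ne 0$ --- hence $L(\vec{M})$ is invertible and $\vec{M} \in \mathcal{D}_{k,\Psi}$ --- with probability at least $1 - km/|\F|$. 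By the first step, $\hat\Phi(\vec{M})$ is then defined and nonsingular with the same probability, as required.

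The main obstacle is the bookkeeping that makes the constant in ``$k > 2s$'' exactly $2$: it requires pairing the tight pencil-dimension bound of \cite{HW15} for a size-$(s+1)$ rational formula (dimension about $2(s+1)$) with the ``$m-1$'' (rather than ``$m$'') threshold of the Derksen--Makam regularity lemma \cite{DM15}, so that the two off-by-one contributions cancel and exactly the range $k \ge 2s+1$ emerges. A second point needing care is property (ii) above: that invertibility of the single full pencil $L$ at an evaluation point genuinely certifies definedness of the whole formula $\Psi$ there, not merely of its output inversion gate --- this is implicit in the Hrubes--Wigderson construction but must be extracted from it, and along the way it also yields that $\Psi$ is a correct formula. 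If one is content with a bound of the form $k > cs$ for an absolute constant $c$, all the steps above are routine.
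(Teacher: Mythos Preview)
The paper does not give a proof of this theorem; it merely states it as ``a consequence of results in \cite{HW15} and \cite{DM15}'' and uses it as a black box. Your reconstruction is the standard derivation one would extract from those sources: pass from $\Phi$ to $\Psi=\Phi^{-1}$, note that the Hrubes--Wigderson pencil $M_\Psi$ (which is exactly the matrix $M'$ appearing later in Section~\ref{sec-rit-ncrank-reduction}) is full because $\Phi$ is correct and $\hat\Phi\ne 0$, apply the Derksen--Makam regularity bound to obtain an invertible evaluation of $M_\Psi$ in matrix dimension $k>2s$, and finish with the Polynomial Identity Lemma. You also correctly flag the two places requiring care --- that invertibility of the pencil at a point forces definedness of the entire formula there (true, by induction on the block structure of the construction in \cite{HW15}), and that the precise threshold $k>2s$ needs the pencil-size bound and the Derksen--Makam constant to line up. Since there is no proof in the paper to compare against, nothing further is needed.
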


If $\F$ is small then we can pick the random matrices over a suitable
extension field, and by ``high'' probability we mean, say,
$1-2^{-\Omega(s+n)}$.

Let $\Phi_1$ and $\Phi_2$ be correct rational formulas of size at most
$s$ computing rational functions in $\fF$. By Theorem \ref{thm-hwdm} and a union bound argument, for a
random matrix substitution $(M_1, M_2, \ldots, M_n)$ from
$M_{k \times k} (\F)^n$, inputs to all the inversion gates in $\Phi_1$
and $\Phi_2$ simultaneously evaluate to non-singular matrices with
"high" probability. Hence, for $k$ sufficiently large we have
$\mathcal{D}_{k,\Phi_1} \cap \mathcal{D}_{k,\Phi_2} \neq
\emptyset$. It follows that a random matrix tuple is in
$\mathcal{D}_{k,\Phi_1} \cap \mathcal{D}_{k,\Phi_2}$ with high
probability.


By Theorem \ref{thm-hwdm} and the definition of correct rational
formulas (Definition~\ref{correct-def}), it follows that if $\Phi_1$
and $\Phi_2$ are size $s$ correct formulas that are \emph{not}
equivalent then for a random matrix substitution of dimension $k>2s$
both $\Phi_1$ and $\Phi_2$ are defined and they disagree with high
probability. As noted in Section~\ref{prel-sec}, equivalent formulas
need not have the same domain of definition.

Lemma \ref{lem-rational-formula-depth-reduction} is the main technical
result of this section. It describes an $\NC$ algorithm for
depth-reduction of correct formulas assuming an oracle for bivariate
$\NCRANK$. The next lemma is useful for establishing equivalences of
formulas arising in the proof of Lemma
\ref{lem-rational-formula-depth-reduction}. Suppose $\Phi$ is a
rational formula computing the rational function $\hat{\Phi}$.  For a
gate $v$ in formula $\Phi$, $\Phi_v$ denotes the subformula rooted at
$v$. The formula $\Phi_{v\leftarrow z} \in \F \newbrak{X\cup\{z\}}$ is
obtained from $\Phi$ by replacing subformula $\Phi_v$ with fresh
variable $z$.
\begin{lemma}[local surgery]\label{lem-equivalence}
  Let $\Phi$ be a correct rational formula and
  $v$ be a gate in $\Phi$. Suppose $\Psi$ is a correct rational formula equivalent to $\Phi_v$. Let $\Psi' = (A\cdot z + B) \cdot (C \cdot z +D)^{-1}$ is a formula equivalent to $\Phi_{v \leftarrow z}$ such that $A, B, C, D$ are correct formulas which do not depend upon $z$ and $\hat{C} \cdot \hat{\Delta} + \hat{D} \neq 0$ for any formula $\Delta$ such that $\Phi_{v \leftarrow \Delta}$ is correct. Let
  $\Phi'$ denote the rational formula obtained by replacing
  $z$ in $\Psi'$ by $\Psi$. Then $\Phi'$ is correct and it is equivalent to $\Phi$.
\end{lemma}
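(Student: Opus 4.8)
The plan is to work directly with the definition of equivalence (Definition~\ref{correct-def}) via domains of definition over matrix algebras $\M_k(\F)$. First I would fix a large enough dimension $k$ and argue that there is a matrix tuple $\bar a=(a_1,\ldots,a_n)\in\M_k(\F)^n$ at which \emph{everything relevant} is simultaneously defined: the formula $\Phi$, the subformula $\Phi_v$, the formula $\Psi$ (which is equivalent to $\Phi_v$), the formulas $A,B,C,D$, and the formula $\Psi'=(Az+B)(Cz+D)^{-1}$ evaluated at $z\leftarrow\Phi_v$. Such a tuple exists because all these formulas are correct, hence each has nonempty domain of definition; by Theorem~\ref{thm-hwdm} (applied to the finitely many subformulas that feed inversion gates, together with a union bound), a random tuple of dimension $k>2s$ lies in the intersection of all their domains with high probability, so the intersection is nonempty. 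This is the step where I would be careful to also invoke the hypothesis $\hat C\hat\Delta+\hat D\neq 0$: taking $\Delta=\Phi_v$ (for which $\Phi_{v\leftarrow\Delta}=\Phi$ is correct), the element $\hat C\cdot\widehat{\Phi_v}+\hat D$ is a nonzero rational function, so by Theorem~\ref{thm-hwdm} it is invertible at a random tuple of large enough dimension; this is exactly what is needed to guarantee $\Phi'$ is correct.

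Next I would establish that $\Phi'$ is correct. The only inversion gate in $\Phi'$ not already present in $\Psi$ or in $A,B,C,D$ is the outer $(\,\cdot\,)^{-1}$ gate, whose input computes $\hat C\cdot\widehat{\Psi}+\hat D$. Since $\Psi\equiv\Phi_v$, we have $\widehat\Psi=\widehat{\Phi_v}$ as elements of the skew field, so this equals $\hat C\cdot\widehat{\Phi_v}+\hat D=\hat C\cdot\hat\Delta+\hat D$ with $\Delta=\Phi_v$, which is nonzero by hypothesis. Every other gate of $\Phi'$ has nonempty domain of definition because it is (a gate of) one of the correct formulas $\Psi,A,B,C,D$, or a $+$/$\times$ combination thereof, which never destroys correctness. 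Hence $\Phi'$ is correct.

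Finally I would prove the equivalence $\Phi'\equiv\Phi$. By the previous paragraphs both are correct, and I would exhibit a common point of their domains and check agreement there; by definition of $\equiv$ this suffices. Pick a tuple $\bar a$ of suitably large dimension in the intersection of the domains of $\Phi$, $\Phi_v$, $\Psi$, $A$, $B$, $C$, $D$, $\Psi'$, and $\Phi'$ — nonempty by the first paragraph. At $\bar a$ write $m=\widehat{\Phi_v}(\bar a)$; since $\Psi\equiv\Phi_v$ and $\bar a$ lies in both domains, $\widehat\Psi(\bar a)=m$ as well. Because $\Phi$ is built from $\Phi_{v\leftarrow z}$ by substituting $z\leftarrow\Phi_v$, and $\Psi'\equiv\Phi_{v\leftarrow z}$ as formulas in $X\cup\{z\}$, evaluating both at the same point $(\bar a, m)$ of the common domain of $\Phi_{v\leftarrow z}$ and $\Psi'$ gives $\widehat\Phi(\bar a)=\widehat{\Phi_{v\leftarrow z}}(\bar a,m)=\widehat{\Psi'}(\bar a,m)=\widehat{\Phi'}(\bar a)$, where the last equality is just the definition of $\Phi'$ as $\Psi'$ with $z$ replaced by $\Psi$, evaluated at a point where $\widehat\Psi(\bar a)=m$. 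Thus $\Phi$ and $\Phi'$ agree at $\bar a$, and therefore $\Phi'\equiv\Phi$.

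The main obstacle is the bookkeeping around domains of definition: equivalence of formulas does \emph{not} imply equal domains (as the $z_1z_2z_3$ example in the preliminaries shows), so substituting $\Psi$ for the subformula $\Phi_v$ could in principle shrink the domain to the empty set or push the outer inversion gate onto a singular matrix. The hypothesis $\hat C\hat\Delta+\hat D\neq 0$ is precisely the condition that rules this out for the one newly-created inversion gate, and the role of Theorem~\ref{thm-hwdm} is to turn that nonvanishing-as-a-rational-function statement into the genuine existence of a common evaluation point where all the relevant matrices are invertible. Making that quantifier-juggling (choosing one dimension $k$ that works simultaneously for all finitely many subformulas, via a single union bound) precise is the only real content; the algebraic identity $\widehat\Phi=\widehat{\Phi'}$ at a good point is then immediate from how $\Phi'$ is assembled.
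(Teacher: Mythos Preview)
Your approach is essentially the paper's, and the correctness argument for $\Phi'$ (apply the hypothesis with $\Delta=\Phi_v$, then transfer via $\Psi\equiv\Phi_v$) is exactly what the paper does. There is, however, one real slip in the equivalence step. You write: ``I would exhibit a common point of their domains and check agreement there; by definition of $\equiv$ this suffices.'' That is not what Definition~\ref{correct-def} says: equivalence requires agreement at \emph{every} point of $\mathcal D_\Phi\cap\mathcal D_{\Phi'}$, not just one (e.g.\ $x_1$ and $x_2$ agree at many matrix tuples without being equivalent).

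The fix is painless and is precisely the paper's argument: take an \emph{arbitrary} $\tau\in\mathcal D_\Phi\cap\mathcal D_{\Phi'}$ and run your substitution chain for that $\tau$. No appeal to Theorem~\ref{thm-hwdm} is needed here, only subformula inclusions: $\tau\in\mathcal D_\Phi$ gives $\tau\in\mathcal D_{\Phi_v}$, and $\tau\in\mathcal D_{\Phi'}$ gives $\tau\in\mathcal D_\Psi$ (and $\tau\in\mathcal D_A,\mathcal D_B,\mathcal D_C,\mathcal D_D$); hence $\Phi_v(\tau)=\Psi(\tau)=:m$ by $\Phi_v\equiv\Psi$. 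Then $\tau\in\mathcal D_\Phi$ forces $(\tau,m)\in\mathcal D_{\Phi_{v\leftarrow z}}$, and $\tau\in\mathcal D_{\Phi'}$ forces $(\tau,m)\in\mathcal D_{\Psi'}$, so $\Phi_{v\leftarrow z}\equiv\Psi'$ yields $\Phi(\tau)=\Phi'(\tau)$. Your invocation of Theorem~\ref{thm-hwdm} is not wasted, though: it is the clean way to see that $\mathcal D_\Phi\cap\mathcal D_{\Phi'}\neq\emptyset$, which Definition~\ref{correct-def} also demands and which the paper leaves implicit.
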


\begin{proof}
From the definitions of $\Phi_v$ and $\Phi_{v\leftarrow z}$ it follows that $\Phi = \Phi_{v \leftarrow \Phi_v}$. As $\Phi$ is correct, from the properties of formulas $C,D$ as stated in the lemma it follows that $\hat{C}\hat{\Phi_v}+\hat{D} \neq 0$. Which implies $\hat{C}\hat{\Psi}+\hat{D} \neq 0$ as $\Psi \equiv \Phi_v$. This shows that the formula $\Phi' = (A \cdot \Psi + B) \cdot (C \cdot \Psi + D)^{-1}$ is correct. 
Now let $\tau= (M_1, \ldots, M_n)$ is a matrix tuple in $\mathcal{D}_{\Phi} \cap \mathcal{D}_{\Phi'}$, the intersection of domains of definition of $\Phi$ and $\Phi'$. This implies $\tau \in \mathcal{D}_{\Phi_v}$ as $ \mathcal{D}_{\Phi} \subseteq \mathcal{D}_{\Phi_v}$, $\Phi_v$ being subformula of $\Phi$. Similarly, $\tau \in \mathcal{D}_{\Psi}$ as $\mathcal{D}_{\Phi'} \subseteq \mathcal{D}_{\Psi}$, $\Psi$ being a subformula of $\Phi'$. So $\tau \in  \mathcal{D}_{\Phi_v} \cap \mathcal{D}_{\Psi}$. As $\Phi_v \equiv \Psi$, it follows that $\Phi_v(\tau)= \Psi(\tau)$. 
As $\tau \in \mathcal{D}_{\Phi}$. It implies that $(M_1, M_2,\ldots,M_n, \Phi_v(\tau))= (\tau, \Phi_v(\tau)) \in \mathcal{D}_{\Phi_{v\leftarrow z}}$. 
Similarly, as $\tau \in \mathcal{D}_{\Phi'}$, it follows that $(\tau, \Psi(\tau)) \in \mathcal{D}_{\Psi'}$. As $\Phi_v(\tau)= \Psi(\tau)$, it implies \[(\tau, \Phi_v(\tau))=(\tau, \Psi(\tau)) \in \mathcal{D}_{\Phi_{v\leftarrow z}} \cap \mathcal{D}_{\Psi'}\]
As $\Phi_{v\leftarrow z} \equiv \Psi'$, this implies $\Phi_{v\leftarrow z}(\tau, \Phi_v(\tau)) = \Psi'(\tau, \Psi(\tau))$. But $\Phi_{v\leftarrow z}(\tau, \Phi_v(\tau)) = \Phi(\tau)$ and $\Psi'(\tau, \Psi(\tau))= \Phi'(\tau)$. So we get $\Phi(\tau) = \Phi'(\tau)$ for any $\tau \in \mathcal{D}_{\Phi} \cap \mathcal{D}_{\Phi'}$. Thus proving $\Phi \equiv \Phi'$. 
\end{proof}

\begin{lemma}\label{lem-rational-formula-depth-reduction}
Given a \emph{correct} formula $\Phi$ of size $s$ computing a rational function $f \in \fF$, for sufficiently large $s$ and absolute constants $c,b$
\begin{enumerate}
\item we give an $\NC$ algorithm, with oracle access to bivariate $\NCRANK$, that outputs a correct formula $\Phi'$ of depth at most $c \log _2 s$ which is equivalent to $\Phi$.
\item If a variable $z$ occurs in $\Phi$ at most once then we give an $\NC$ algorithm, with bivariate $\NCRANK$ as oracle, that constructs \emph{correct} rational formulas $A, B, C, D$ which do not depend on $z$ with depth at most $c \log_2 s +b$ and the formula $\Phi' = (A \cdot z+B)\cdot (C \cdot z + D)^{-1}$ is equivalent to $\Phi$. Moreover, the rational function $\hat{C} \hat{\Psi} +\hat{D}\neq 0$ for any formula $\Psi$ such that $\Phi_{z \leftarrow \Psi}$ is correct.
\end{enumerate}
\end{lemma}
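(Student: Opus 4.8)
The plan is to prove both parts of Lemma~\ref{lem-rational-formula-depth-reduction} simultaneously by a single (strong) induction on the formula size $s$, following the two-part structure of Brent's construction as adapted by Hrubes and Wigderson, but recording carefully how each step is implemented in parallel and where oracle calls to bivariate $\NCRANK$ (equivalently, to $\RIT$ via Theorem~\ref{thm-n-rit-2-rit} and the $\NCRANK$-to-$\SING$ equivalence) are needed. For part (1), given $\Phi$ of size $s$, I would first locate in $\NC$ a gate $v$ whose subformula $\Phi_v$ has weight $wt(v)$ in the range $[s/3, 2s/3]$: such a gate always exists and can be found in parallel since the weights $wt(\cdot)$ are prefix-sum-like quantities computable in $\NC$ by the standard tree-contraction / Euler-tour technique. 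Then $\Phi = \Phi_{v\leftarrow z}$ with $z$ substituted by $\Phi_v$, and since $z$ occurs exactly once in $\Phi_{v\leftarrow z}$, I invoke part (2) of the induction hypothesis on $\Phi_{v\leftarrow z}$ (which has size $< s$ after the subformula is collapsed to a leaf, but with one extra variable) to get correct formulas $A,B,C,D$ of depth $c\log_2(\text{size}) + b$ not depending on $z$, with $\Phi_{v\leftarrow z} \equiv (Az+B)(Cz+D)^{-1}$ and the nonvanishing guarantee on $\hat C\hat\Psi + \hat D$. Recursively apply part (1) of the induction hypothesis to $\Phi_v$ (size $\le 2s/3$) to obtain an equivalent correct formula $\Psi$ of depth $\le c\log_2(2s/3)$. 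Finally set $\Phi' = (A\Psi+B)(C\Psi+D)^{-1}$; correctness and equivalence to $\Phi$ follow from Lemma~\ref{lem-equivalence} (local surgery), and the depth bound is $\max(\text{depth of }A,B,C,D,\Psi) + O(1) \le c\log_2(2s/3) + b + O(1) \le c\log_2 s$ provided $c$ is chosen large enough that the $\log_2(3/2)$ gain absorbs the additive $b + O(1)$.

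For part (2), where $z$ occurs at most once in $\Phi$, I follow the Normal-Form procedure of \cite{HW15}: walk down from the root of $\Phi$ along the unique path to the leaf labelled $z$. At each step the partial evaluation of $\Phi$ as a function of $z$ is a "fractional linear" expression $(Az+B)(Cz+D)^{-1}$ in $z$ over the rational subformulas hanging off the path, and the recursion maintains this invariant while splitting the path at a roughly-balanced point (the gate $w$ on the $z$-path with $wt(w)$ near $s/2$) so that the subformulas $A,B,C,D$ are each of size $\poly(s)$ and, by the induction hypothesis applied to the off-path subformulas and to the two halves of the path, of depth $\le c\log_2 s + b$. The delicate point — case (2) in the Normal-Form description, which \cite{HW15} did not need to treat carefully because they only argued existence — is the situation where the "denominator" branch $(Cz+D)$ could a priori be identically zero as a rational function on the relevant domain; there one must algorithmically decide whether $\hat C\hat\Psi + \hat D \equiv 0$, and if so restructure the formula (using a commutator/Brent-style trick to re-expose a nonzero denominator, exactly as in the argument alluded to in the introduction). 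This is where the oracle is used: each such identity test is an instance of $\RIT$ on a formula of size $\poly(s)$, which we solve by first recursively depth-reducing that subformula (a smaller instance, handled by the induction) and then applying the Hrubes--Wigderson reduction from $\RIT$ to $\NCRANK$ \cite[Theorem 2.6]{HW15} — now an $\NC$ reduction because the formula has been made log-depth — composed with Theorem~\ref{thm-n-rit-2-rit} to land in the bivariate case. The nonvanishing clause "$\hat C\hat\Psi+\hat D \neq 0$ for all $\Psi$ with $\Phi_{z\leftarrow\Psi}$ correct" is exactly what the restructuring in case (2) is designed to enforce, and it is what makes the hypothesis of Lemma~\ref{lem-equivalence} available when part (2) is invoked from part (1).

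The main obstacle I expect is twofold. First, bounding the depth: the naive recursion composes formulas at each level, and unless the size-balancing is done so that sizes shrink by a constant factor $<1$ at every recursive call, the depth accumulates to $\omega(\log s)$; the fix is the standard Brent bookkeeping, choosing the split gate in the range $[s/3,2s/3]$ and picking the constants $c,b$ after the fact so that $c\log_2(2s/3)+b+O(1)\le c\log_2 s$ and $c\log_2 s + b + O(1)$ is still of the required form. Second, and more substantively, is showing that the whole recursion has $\NC$ depth: the recursion tree has depth $O(\log s)$ (each call reduces size by a constant factor), and at each node the work — finding balanced split gates via tree contraction, assembling $A,B,C,D$, and making $\poly(s)$ oracle calls to bivariate $\NCRANK$ — is in $\NC$, so the overall algorithm runs in $O(\log^{O(1)} s)$ parallel time; one must be careful that the oracle calls at different recursion nodes are not nested in a way that multiplies depths (they are not, because an oracle call for a size-$t$ subformula first depth-reduces it, which is a self-contained recursive instance of total $\NC$ depth $O(\log^{O(1)} t)$, and these compose additively over the $O(\log s)$ levels). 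Once these accounting issues are handled, parts (1) and (2) close the induction, and the reduction from $\RIT$ to bivariate $\NCRANK$ follows: depth-reduce the input formula by part (1), then apply the (now parallel) Hrubes--Wigderson reduction.
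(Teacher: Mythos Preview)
Your high-level architecture (two mutually recursive procedures, balanced splitting for part (1), the local surgery Lemma~\ref{lem-equivalence} to glue pieces back) matches the paper. But your treatment of part (2) and of the oracle call has a genuine gap.

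\textbf{What the oracle actually tests.} You say the delicate case is when the denominator $Cz+D$ might vanish, and that the oracle is used to decide ``$\hat C\hat\Psi+\hat D\equiv 0$'' and then ``restructure'' if so. That is not where the identity test enters. The nonvanishing of $\hat C\hat\Psi+\hat D$ (for all $\Psi$ making $\Phi_{z\leftarrow\Psi}$ correct) is maintained purely as an inductive invariant; no oracle is needed for it. The oracle is used for a different decision: in the Normal-Form procedure, when you walk down the root-to-$z$ path, it can happen that no gate on the path is balanced because a single \emph{off-path sibling} $\Phi_{u_i}$ has weight exceeding $s/6$. In that case one must test whether $\hat\Phi_{u_i}\equiv 0$. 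If it is zero, replace $\Phi_{u_i}$ by $0$ and recurse on the (now constant-factor smaller) formula. If it is nonzero, the composition through the parent $\times$-gate with $v_i$ as \emph{left} child requires writing $(A_2z+B_2)(C_2z+D_2)^{-1}\cdot\Phi_{u_i}$ in normal form, which forces you to use $\Phi_{u_i}^{-1}$ inside the new $C,D$; this is only legal when $\hat\Phi_{u_i}\ne 0$. So the construction literally branches on whether $\hat\Phi_{u_i}=0$, and that is the RIT instance you send (after depth-reducing $\Phi_{u_i}$) to the bivariate $\NCRANK$ oracle.

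\textbf{The missing case split and the parallel-time recurrence.} Your description of Normal-Form only covers the ``balanced path-gate exists'' case. You need the second case above, and it breaks your claim that ``each call reduces size by a constant factor'': the call Depth-Reduce$(\Phi_{u_i})$ inside Normal-Form is on an instance of size up to $s-1$, not $\alpha s$ for some $\alpha<1$. The paper handles this with coupled recurrences $t_1(s)\le\polylog(s)+\max(t_1(2s/3),t_2(2s/3))$ and $t_2(s)\le\polylog(s)+\max(t_2(5s/6),t_1(s))$; substituting the first into the second shows that after one Depth-Reduce unroll the size does drop by a constant factor, giving $t_1,t_2=\polylog(s)$. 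Your proposal needs this two-case analysis and the coupled recurrence; without them neither the depth bound in part (2) nor the $\NC$ running time is established.
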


The proof of the above lemma is given in the appendix.

\paragraph{Hrubes-Wigderson reduction from $\RIT$ to $\NCRANK$}

Now we recall the polynomial-time reduction from $\RIT$ to $\NCRANK$
from \cite[Theorem 2.6]{HW15}. Given a rational formula $\Phi$ their reduction
outputs an invertible linear matrix $M$ in the variables
$X$.\footnote{Notice that the entries of $M^{-1}$ are elements of the
  skew field $\fF$} Their reduction ensures that the top right entry
of $M^{-1}$ is $\hat{\Phi}$, the rational function computed by the
formula $\Phi$. It turns out that if $\Phi$ is already of logarithmic
depth then their reduction can be implemented in $\NC$.




\begin{theorem}[\cite{HW15}]\label{thm-formula-inverse-completeness}
Let $\Phi$ be a rational formula of size $s$ and depth $O(\log s)$ computing a rational expression in $\fF$ there is an $\NC$ algorithm to construct an invertible linear matrix $M_{\Phi}$ such that the top right entry of $M_{\Phi}^{-1}$ is $\hat{\Phi}$. 
\end{theorem}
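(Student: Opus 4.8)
The plan is to take the Hrubes--Wigderson construction from \cite[Theorem 2.6]{HW15}, which builds $M_\Phi$ by structural recursion on the rational formula $\Phi$, and to observe that this recursion parallelizes once $\Phi$ has logarithmic depth. Recall the shape of their construction: to each correct rational subformula $\Psi$ of $\Phi$ one associates a linear matrix $M_\Psi$ that is invertible over $\fF$, together with a distinguished ``source'' row index $i_\Psi$ and ``target'' column index $j_\Psi$, such that $(M_\Psi^{-1})_{i_\Psi,j_\Psi}=\hat{\Psi}$. For a leaf (a scalar $c$ or a variable $x_i$), $M_\Psi$ is a fixed matrix of dimension $O(1)$ with affine-linear entries; for an internal gate, $M_\Psi$ is assembled from the matrices of its one or two children by placing them on disjoint row/column index sets and adding a constant number of further affine (in fact scalar or $\pm x_i$) entries, whose positions depend only on the gate type, on whether a child is the left or the right operand, and on the dimensions of the children's blocks. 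Correctness of $\Phi$ guarantees that at every inversion gate the relevant Schur complement is a \emph{nonzero} element of $\fF$, so each $M_\Psi$ arising in the recursion is invertible over $\fF$; and since each gate contributes $O(1)$ to the dimension, $\dim(M_\Phi)=O(s)$, so the output has polynomial size.

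Given this local structure, the $\NC$ algorithm is as follows. First, for every gate $v$ of $\Phi$ compute in parallel the dimension $d_v$ of its block $M_{\Phi_v}$; since $d_v$ is additive over the subformula tree (leaves contribute their $O(1)$ size, inversion gates add $1$, and $+$, $\times$ gates add nothing beyond the sum of the children), this is a subtree-sum computation and lies in $\NC$. Second, perform a parallel top-down pass to compute, for each gate $v$, the offset $(r_v,c_v)$ at which $M_{\Phi_v}$ sits inside $M_\Phi$, together with the propagated source/target indices: the root occupies all of $M_\Phi$, and the data of a child is obtained from that of its parent by a local increment determined by the parent's gate type and the sibling's dimension, so the data at $\Phi_v$ is a sum of these increments along the root-to-$v$ path; this is $\NC$-computable, and since $\Phi$ has depth $O(\log s)$ a direct divide-and-conquer recursion of depth $O(\log s)$ with $\poly(s)$ processors already suffices. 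Third, in parallel over all gates, write the entries of $M_\Phi$: initialize to $0$, write each leaf's fixed block at its offset, and write each internal gate's $O(1)$ connecting entries at the positions determined by its own offset and the offsets and dimensions of its children. Finally, if the construction does not already place the source and target indices at row $1$ and the last column, permute the rows and columns of $M_\Phi$ to bring them there; this preserves linearity and invertibility over $\fF$ and makes the top-right entry of the inverse equal to $\hat{\Phi}$. All steps run in $\polylog(s)$ time with $\poly(s)$ processors.

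Correctness of the resulting matrix is exactly \cite[Theorem 2.6]{HW15}; what we add is the parallel organization. Accordingly, the main point that must be verified is the claim used above that the Hrubes--Wigderson assembly at each gate is genuinely \emph{local} --- a disjoint placement of the children's matrices together with $O(1)$ extra entries whose positions are fixed by local combinatorial data --- and that the source and target indices can be propagated by a purely combinatorial top-down pass. This is implicit in their proof but needs to be checked case by case ($+$, $\times$, and inversion), the inversion case being the delicate one: it uses a bordering step and relies on the subformula under the inversion being not identically zero, which is where we invoke correctness of $\Phi$. Once this locality is established, the hypothesis that $\Phi$ has depth $O(\log s)$ makes the parallel divide-and-conquer over the formula tree immediate.
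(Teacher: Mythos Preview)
Your proposal is correct and follows essentially the same approach as the paper's proof: both observe that the Hrubes--Wigderson recursive construction is local (each gate contributes $O(1)$ to the dimension and determines block placements from children's dimensions), compute subformula sizes/dimensions in $\NC$ by parallel tree algorithms, determine each block's offset inside $M_\Phi$, and then use the $O(\log s)$ depth bound to conclude the whole assembly is in $\NC$. One small inaccuracy: in the actual construction the $+$-gate also adds $1$ to the dimension (not just inversion), but this is still $O(1)$ per gate and does not affect your argument.
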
 

\begin{proof}
  We only briefly sketch the $\NC$ algorithm. Their reduction
  recursively constructs the matrix $M_{\Phi}$, using the formula
  structure of $\Phi$.

Given a formula $\Phi$ we can compute the sizes of all its subformulas
in $\NC$ using a standard pointer doubling algorithm. This allows us
to estimate the dimensions of matrices $M_{\Phi_v}$ for subformulas
$\Phi_v$ for each gate $v$ of $\Phi$. We can also compute in $\NC$ the
precise location for placement of the sub-matrices $M_{\Phi_v}$ inside
the matrix $M_\Phi$ following their construction. Assuming that $\Phi$
is already of logarithmic depth, there are only $O(\log s)$ nested
recursive calls for this recursive procedure. This ensures that the
overall process can be implemented in $\NC$.
\end{proof}

After constructing linear matrix $M_{\Phi}$ such that the top right
entry of $M_{\Phi}^{-1}$ is $\hat{\Phi}$, define matrix $M'$ as $M'= \left(
\begin{array}{cc}
v^T & M_{\Phi} \\
0 & -u
\end{array}
\right)$

where $u,v$ are $1\times k$ vectors, such that $u=(1, 0, \ldots, 0)$
and $v=(0, 0, \ldots, 0, 1)$ where $k$ is the dimension of the matrix
$M_{\Phi}$. It follows that $\hat{\Phi}\neq 0$ iff $M'$ is invertible
in the skew field $\fF$ (see e.g. \cite[Proposition 3.29]{GGOW15}). So
we have the following theorem.


\begin{theorem}[\cite{HW15}]\label{thm-rit-ncrank-reduction}
Let $\Phi$ be a rational formula of size $s$ and depth $O(\log s)$ computing a rational expression in $\fF$ then there is an $\NC$ algorithm to construct a linear matrix $M$ such that $\hat{\Phi} \neq 0$ iff $M$ is invertible in the skew field $\fF$. 
\end{theorem}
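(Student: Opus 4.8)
The plan is to obtain the reduction by composing Theorem~\ref{thm-formula-inverse-completeness} with the one-step bordering construction sketched just before the statement. Since $\Phi$ has size $s$ and depth $O(\log s)$, the first step is to feed $\Phi$ to the deterministic $\NC$ algorithm of Theorem~\ref{thm-formula-inverse-completeness}, producing an invertible linear matrix $M_\Phi$ over $\FX$ of dimension $\poly(s)$ whose inverse $M_\Phi^{-1}$ has top-right entry equal to $\hat{\Phi}$. (Correctness of $\Phi$ is what makes $M_\Phi$ and the recursion defining it well defined.) This is the only step of substance, and its $\NC$-ness is exactly what Theorem~\ref{thm-formula-inverse-completeness} supplies: the log-depth hypothesis bounds the nesting of the Hrube\v{s}--Wigderson recursion by $O(\log s)$, and the per-layer bookkeeping (subformula sizes, submatrix placements) is $\NC$-computable by pointer doubling.

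The second step is purely syntactic. Letting $k$ be the dimension of $M_\Phi$, set
\[
M \;=\; \left(
\begin{array}{cc}
v^{T} & M_\Phi \\
0 & -u
\end{array}
\right),
\]
where $u=(1,0,\dots,0)$ and $v=(0,\dots,0,1)$ are $1\times k$ row vectors. Then $M$ is a $(k+1)\times(k+1)$ matrix whose entries are affine linear forms in $x,y$ — the only new scalar entries being $0$, $1$, and $-1$ — and $M$ is assembled from $M_\Phi$ in logspace, hence in $\NC$. The output of the reduction is this matrix $M$.

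It then remains to certify the equivalence $\hat{\Phi}\neq 0 \iff M$ is invertible over $\fF$. For this I would invoke the standard Schur-complement fact (e.g.\ \cite[Proposition 3.29]{GGOW15}): when $N$ is invertible over $\fF$ and one borders $N$ by a row and column selecting the $(1,k)$ entry of $N^{-1}$, the resulting matrix is invertible over $\fF$ precisely when that entry is nonzero. Instantiating with $N=M_\Phi$ and $(M_\Phi^{-1})_{1,k}=\hat{\Phi}$ gives the claim. Since the reduction is the composition of the $\NC$ algorithm of Theorem~\ref{thm-formula-inverse-completeness} with an $\NC$ (indeed logspace) assembly step, it is in $\NC$, which proves the theorem.

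I do not expect any real obstacle inside this theorem itself; the genuine difficulty lies upstream, in the log-depth hypothesis. Producing a logarithmic-depth equivalent of an arbitrary rational formula is exactly what Lemma~\ref{lem-rational-formula-depth-reduction} does, using oracle access to bivariate $\NCRANK$; granted that hypothesis here, Theorem~\ref{thm-rit-ncrank-reduction} is a short two-step composition plus one algebraic identity.
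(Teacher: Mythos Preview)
Your proposal is correct and follows essentially the same two-step argument as the paper: invoke Theorem~\ref{thm-formula-inverse-completeness} to build $M_\Phi$ in $\NC$, border it with $u,v$ as you wrote, and cite \cite[Proposition~3.29]{GGOW15} for the equivalence. One small slip: the entries of $M$ are affine linear forms in the variables $X=\{x_1,\dots,x_n\}$, not in $x,y$ (the bivariate reduction is applied separately, via Theorem~\ref{thm-n-rit-2-rit}).
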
 
Now, from Lemma \ref{lem-rational-formula-depth-reduction}, Theorem
\ref{thm-rit-ncrank-reduction}, and Theorem \ref{thm-n-rit-2-rit}, we
obtain an $\NC$ Turing reduction from multivariate $\RIT$ to bivariate
$\NCRANK$.

\begin{theorem}
There is a deterministic $\NC$ Turing reduction from $\RIT$ problem to $\NCRANK$ problem for bivariate linear matrices.
\end{theorem}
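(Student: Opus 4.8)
The plan is to simply chain together the three results already established: the logspace multivariate-to-bivariate reduction for $\RIT$ (Theorem~\ref{thm-n-rit-2-rit}), the $\NC$ depth-reduction for rational formulas with a bivariate $\NCRANK$ oracle (Lemma~\ref{lem-rational-formula-depth-reduction}(1)), and the $\NC$ reduction from log-depth $\RIT$ to invertibility of a linear matrix (Theorem~\ref{thm-rit-ncrank-reduction}). Each of these is a deterministic $\NC$ (Turing) reduction and all intermediate objects remain of polynomial size, so their composition is again a deterministic $\NC$-Turing reduction; moreover the only oracle invoked anywhere in the composition is bivariate $\NCRANK$.

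Concretely, given a correct rational formula $\Phi$ of size $s$ in the noncommuting variables $X=\{x_1,\dots,x_n\}$, I would proceed as follows. First, apply Theorem~\ref{thm-n-rit-2-rit} to obtain, in logspace, a bivariate rational formula $\Psi(x,y)$ of size $\poly(s)$ with $\hat{\Phi}\neq 0$ iff $\hat{\Psi}\neq 0$; as noted in the proof of that theorem, $\Psi$ is again correct because $\beta$ extends to an embedding of skew fields, so a nontrivial domain of definition of $\Phi$ forces one for $\Psi$. Second, feed $\Psi$ to the $\NC$ algorithm of Lemma~\ref{lem-rational-formula-depth-reduction}(1), using the bivariate $\NCRANK$ oracle, to get an equivalent correct bivariate formula $\Psi'$ of depth $O(\log s)$ and size $\poly(s)$; since $\Psi'\equiv\Psi$, we have $\hat{\Psi'}\neq 0$ iff $\hat{\Phi}\neq 0$. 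Third, apply Theorem~\ref{thm-rit-ncrank-reduction} to the log-depth formula $\Psi'$ to obtain in $\NC$ a bivariate linear matrix $M$ over $\F\angle{x,y}$ such that $\hat{\Psi'}\neq 0$ iff $M$ is invertible over $\fFxy$, i.e.\ iff $M$ is full, i.e.\ iff $\ncrk(M)$ equals the dimension of $M$. Finally, a single call to the bivariate $\NCRANK$ oracle on $M$ returns $\ncrk(M)$, and comparing it with $\dim(M)$ decides whether $\hat{\Phi}\equiv 0$.

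For values of $s$ below the threshold required by Lemma~\ref{lem-rational-formula-depth-reduction} the question can be decided directly, since the formula then has bounded size. As for difficulties: essentially all the genuine work has been isolated into Lemma~\ref{lem-rational-formula-depth-reduction}, whose appendix proof must carefully parallelize the Hrubes--Wigderson (Brent-style) construction and resolve the identity-testing subqueries recursively via depth reduction followed by Theorem~\ref{thm-rit-ncrank-reduction}. For the present theorem the remaining points are merely bookkeeping: verifying that $\Psi$ inherits correctness from $\Phi$, checking that sizes and matrix dimensions stay polynomial through all four steps so that every oracle query is a legitimate bivariate $\NCRANK$ instance, and observing that the linear matrix produced by Theorem~\ref{thm-rit-ncrank-reduction} is indeed bivariate because $\Psi'$ involves only the two variables $x$ and $y$. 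Assembling these observations yields the claimed deterministic $\NC$-Turing reduction from multivariate $\RIT$ to bivariate $\NCRANK$.
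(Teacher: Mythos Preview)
Your proposal is correct and follows essentially the same approach as the paper: the paper's proof is a single sentence invoking Lemma~\ref{lem-rational-formula-depth-reduction}, Theorem~\ref{thm-rit-ncrank-reduction}, and Theorem~\ref{thm-n-rit-2-rit}, and you have spelled out exactly how these three pieces compose. The only cosmetic difference is that you apply Cohn's embedding (Theorem~\ref{thm-n-rit-2-rit}) \emph{first} and then run depth reduction on a bivariate formula, whereas the paper's Lemma~\ref{lem-rational-formula-depth-reduction} is phrased for a general multivariate input and absorbs the passage to bivariate $\NCRANK$ inside each oracle call; both orderings are equally valid and yield the same reduction.
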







\paragraph{Concluding Remarks.}

\vspace{2mm}

Motivated by the question whether $\RIT$ and $\NCRANK$ have
deterministic $\NC$ algorithms, we show that multivariate $\RIT$ is
$\NC$-reducible to bivariate $\RIT$ and multivariate $\NCRANK$ is
$\NC$-reducible to bivariate $\NCRANK$.  $\RIT$ is known to be
polynomial-time reducible to $\NCRANK$, and indeed that is how the
polynomial-time algorithm for $\RIT$ works, by reducing to $\NCRANK$
and solving $\NCRANK$. We show that $\RIT$ is deterministic
$\NC$-Turing reducible to $\NCRANK$. We prove this by showing that
noncommutative rational formula depth reduction is $\NC$-Turing
reducible to $\NCRANK$. The main open problem is to obtain deterministic $\NC$ algorithms for bivariate $\NCRANK$ and bivariate
$\RIT$. We also leave open finding an unconditional $\NC$ algorithm
for depth-reduction of noncommutative rational formulas.

\newpage

\bibliographystyle{plain}

\bibliography{references}

\newpage

\begin{center}
  \LARGE{Appendix}
\end{center}

\section{Cohn's Embedding Theorem}\label{cohn-sec}

Let $X=\{x_1, x_2,\ldots,x_n\}$ be a set of $n$ noncommuting variables, and let
$x, y$ be a pair of noncommuting variables. The goal of Cohn's
construction \cite{Co90} is to obtain an honest embedding from
$\FX\to \F\angle{x,y}$. Indeed, his construction gives an honest
embedding map even for a countable set of variables
$X=\{x_1,x_2,\ldots\}$.

The first point is that the free noncommutative rings $\FX$ and
$\F\angle{x,y}$ have both enough structure\footnote{Technically, both these rings are semifir \cite{Co90,Cohnfir}.} that guarantees the
following.

\begin{lemma}
  If a homomorphic embedding $\phi:\FX\to \F\angle{x,y}$ can be
  extended to a homomorphic embedding $\phi: \fF\to \fFxy$ then,
  in fact, $\phi$ is an honest embedding.
\end{lemma}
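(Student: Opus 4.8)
The plan is to unwind the definition of an \emph{honest embedding} and reduce everything to the characterization recalled above: an $n\times n$ matrix over a free ring is full precisely when it becomes invertible over the corresponding free skew field. So fix an $n \times n$ matrix $M$ that is full over $\FX$; writing $\phi(M)$ for the matrix over $\F\angle{x,y}$ obtained by applying $\phi$ entrywise, we must show $\phi(M)$ is full.

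First I would use the cited proposition (Cohn) inside $\FX$: since $M$ has inner rank $n$ it is invertible over $\fF$, so there is $N = M^{-1} \in \M_n(\fF)$ with $MN = NM = I_n$. The key point here is that $N$ has its entries in $\fF$, which is exactly the domain of the \emph{extended} map; this is where the hypothesis ``$\phi$ extends to $\phi\colon\fF\to\fFxy$'' is used. Next, I would note that applying a unital ring homomorphism entrywise to $n\times n$ matrices is again a unital ring homomorphism $\M_n(\fF)\to\M_n(\fFxy)$, since it respects matrix sum and product. Hence $\phi(M)\phi(N) = \phi(MN) = \phi(I_n) = I_n$ and likewise $\phi(N)\phi(M) = I_n$, so $\phi(M)$ is invertible in $\M_n(\fFxy)$; invertibility, being an equational condition, is automatically transported by the homomorphism, so injectivity of $\phi$ is not even needed for this implication.

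Finally, the entries of $\phi(M)$ lie in $\F\angle{x,y}$, because $\phi$ maps $\FX$ into $\F\angle{x,y}$; applying the same proposition a second time, now in the free ring $\F\angle{x,y}$ (which is the two-variable instance of $\FX$ and is likewise a semifir, so the proposition applies verbatim), we conclude that $\phi(M)$, being invertible over $\fFxy$, is full. Since $M$ was an arbitrary full matrix over $\FX$, the embedding $\phi$ is honest.

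I do not expect a genuine obstacle here — the real content lies in Cohn's theorem that $\beta$ extends to the skew fields (sketched separately), and once the ``full $\Leftrightarrow$ invertible over the skew field'' dictionary is available in both rings the argument is a two-line diagram chase. The only things to be careful about are the two points flagged above: that $M^{-1}$ genuinely lives over $\fF$ so that the extended $\phi$ can be applied to it, and that invertibility passes through any unital ring homomorphism. If one additionally wants the sharper statement that $\phi$ preserves $\ncrk$ of \emph{every} matrix, one appends the trivial reverse observation: any factorization $M = PQ$ with inner dimension $r < n$ maps to $\phi(M) = \phi(P)\phi(Q)$ of the same shape, so a non-full matrix stays non-full; combined with honesty this yields $\ncrk(\phi(M)) = \ncrk(M)$ in general.
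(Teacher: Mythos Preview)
Your proposal is correct and follows exactly the approach the paper sketches: use the ``full $\Leftrightarrow$ invertible over the skew field'' characterization on both sides and transport the inverse $M^{-1}\in\M_n(\fF)$ through the extended homomorphism to exhibit an inverse of $\phi(M)$ over $\fFxy$. The paper states only this idea in one line, so your version is in fact a more carefully written-out form of the same argument.
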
  

The reason for this is basically, that if an $n\times n$ $M$ over
$\FX$ is a full matrix then it is invertible with an inverse $M^{-1}$
over the skew field $\fF$. Thus, it suffices to find a homomorphic
embeddings that extends to one between the corresponding free
skew fields.

Cohn solved this problem of finding such an embedding \cite[Theorem
4.7.3]{Cohnfir} by an ingenious construction using skew polynomial
rings. 

\paragraph{Skew Polynomial Rings}

We recall the definition of skew polynomial rings and state some basic
properties (details can be found in Cohn's text \cite[Chapter
  1.1]{Cohnring}).  Let $R$ be an \emph{integral
domain}\footnote{That means $R$ is a, possibly noncommutative, ring
with unity $1$ and without zero divisors.} and let $\sigma:R\to R$ be
a ring \emph{endomorphism}.  Let
\[
A=\{x^n a_n + x^{n-1}a_{n-1}+ \ldots + a_0\mid \textrm{ each } a_i \in R\}
\]
be the set of all formal univariate polynomials in the indeterminate
$x$ which is assumed to not commute with elements of $R$. Addition
of elements in $A$ can be defined component-wise as usual. The multiplication
operation is defined with a "twist" to it, using the ring endomorphism 
$\sigma$, which we briefly explain below.

A \emph{$\sigma$-derivation} on $R$ is defined as an additive
homomorphism $\delta: R\to R$ such that
\[
\delta(ab) = \sigma(a)\delta(b) + \delta(a) b \textrm{ for all } a,b\in R.
\]

We define
\[
ax=x\sigma(a) + \delta(a), \textrm{ for all } a\in R,
\]
which extends to ring multiplication in $A$. Under these operations the set $A$ is the
\emph{skew polynomial ring} denoted $R[x;\sigma,\delta]$. If the $R$-endomorphism 
$\sigma$ is the identity map $1$ and $\delta=0$ in this definition, then we 
obtain the univariate polynomial ring $R[x]$ in which the variable $x$ commutes 
with elements of $R$. 

Following Cohn's construction in his embedding theorem, we consider
skew polynomial rings of the form $R[x; 1, \delta]$, where the endomorphism
$\sigma=1$. We refer to $\delta$ as a derivation and we have:

  \begin{eqnarray}
    \delta(ab) & =& a\delta(b) + \delta(a) b \textrm{ for all } a,b\in
    R, \textrm{ and}\label{skew-eqn1}\\ ax & = & xa + \delta(a),
    \textrm{ for all } a\in R.\label{skew-eqn2}
   \end{eqnarray}

\paragraph{Cohn's Construction}

We now describe the construction, adding some details to the rather terse
description in \cite{Cohnfir}.


We set the integral domain $R$ to be $\FX$ for
$X=\{x_1,x_2,\ldots,\}$. Consider the map $\delta: X \mapsto X$
defined as
\[
\delta(x_i)= x_{i+1}.
\]
The map $\delta$ naturally extends to a unique derivation on $\FX$ as
follows. For scalars $a\in \F$, we define
$\delta (a x_i)=a\delta(x_i)$, by linearity. Next, define $\delta$ on
all monomials in $X^*$. Let
$\delta(x_ix_j)=\delta(x_i)x_j + x_i\delta(x_j)$. In general, for a
degree-$\ell$ monomial $m = x_{i_1} x_{i_2} \ldots x_{i_\ell}$ we
define
\[
\delta(m) =\delta(x_{i_1}x_{i_2}\ldots x_{i_k}) x_{i_{k+1}} x_{i_{k+2}}\ldots
x_{i_\ell} + x_{i_1}x_{i_2}\ldots x_{i_k} \delta ( x_{i_{k+1}} x_{i_{k+2}}\ldots x_{i_{\ell}}).
\]

It is easy to verify that the above definition of $\delta(m)$ is
independent of $k\in[\ell]$. We now extend this definition to the
entire ring $\FX$ by linearity.  By an easy induction on the degree of
polynomials in $\FX$ we obtain the following.


\begin{lemma}
The mapping $\delta$ defined above is a derivation on the ring $R=\FX$.
\end{lemma}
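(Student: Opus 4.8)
The plan is to verify that $\delta$ as defined satisfies the two defining properties of a derivation on $R = \FX$, namely additivity and the Leibniz rule $\delta(ab) = a\delta(b) + \delta(a)b$ for all $a, b \in \FX$. Additivity holds by construction, since $\delta$ was extended by linearity from its values on monomials. So the real content is the Leibniz rule, and since both sides are $\F$-bilinear in $(a,b)$, it suffices to check it when $a = m$ and $b = m'$ are monomials in $X^*$.

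First I would make precise the claim, asserted in the excerpt, that for a degree-$\ell$ monomial $m = x_{i_1}\cdots x_{i_\ell}$ the expression
\[
\delta(x_{i_1}\cdots x_{i_k})\, x_{i_{k+1}}\cdots x_{i_\ell} + x_{i_1}\cdots x_{i_k}\, \delta(x_{i_{k+1}}\cdots x_{i_\ell})
\]
is independent of the split point $k \in \{0,1,\ldots,\ell\}$ (with the convention that the empty product is $1$ and $\delta(1) = 0$). This is proved by induction on $\ell$: writing $\delta$ on the length-$k$ prefix and on the length-$(\ell-k)$ suffix via any sub-split and using the inductive hypothesis shows that moving the split point from $k$ to $k+1$ does not change the value — concretely, both $\delta(\text{split at }k)$ and $\delta(\text{split at }k+1)$ expand, after applying the inductive independence to the shorter factors, to the same sum $\sum_{j=1}^{\ell} x_{i_1}\cdots x_{i_{j-1}}\,\delta(x_{i_j})\,x_{i_{j+1}}\cdots x_{i_\ell}$ of ``single-letter'' terms. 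In fact this gives a clean closed form: $\delta(m) = \sum_{j=1}^\ell x_{i_1}\cdots x_{i_{j-1}}\, x_{i_j + 1}\, x_{i_{j+1}}\cdots x_{i_\ell}$, the sum over all ways of incrementing one letter's index by one.

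With the closed form (or directly with well-definedness of the split), the Leibniz rule for two monomials $m = x_{i_1}\cdots x_{i_p}$ and $m' = x_{j_1}\cdots x_{j_q}$ is immediate: $\delta(mm')$ is computed by applying the definition to the degree-$(p+q)$ monomial $mm'$ and choosing the split point $k = p$, which yields exactly $\delta(m)\,m' + m\,\delta(m')$. Extending by bilinearity in $a$ and $b$ gives $\delta(ab) = a\delta(b) + \delta(a)b$ for all $a,b\in\FX$, completing the proof.

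The main obstacle — really the only nontrivial step — is the independence of $\delta(m)$ from the split point $k$; once that is in hand everything else is a formality. I would handle it by the degree induction sketched above (or, alternatively, simply \emph{define} $\delta$ on monomials by the explicit single-letter-increment formula and then derive the recursive splitting identity as a lemma, which makes well-definedness trivial and isolates the combinatorial bookkeeping into one routine computation).
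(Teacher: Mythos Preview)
Your proposal is correct and follows essentially the same approach the paper indicates: the paper merely asserts that independence of $\delta(m)$ from the split point $k$ is ``easy to verify'' and that the lemma then follows ``by an easy induction on the degree,'' and your write-up supplies exactly those details (well-definedness via the single-letter-increment closed form, then Leibniz on monomials by choosing $k=p$, then bilinear extension).
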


Let $H=R[x;1,\delta]$ be the skew polynomial ring defined by the
derivation $\delta$ described above. By definition, $\delta$ satisfies
Equations~\ref{skew-eqn1} and \ref{skew-eqn2}. Therefore, putting
$a=x_i$ in Equation \ref{skew-eqn2}, for each $i\ge 1$ we have
\[
x_{i+1}=x_ix - xx_i=[x_i,x], 
\]
This actually gives a homomorphic embedding from the ring $\FX$ to
the bivariate ring $\F\angle{x,y}$. To see this, we define a map 
$\beta: X \mapsto \F\angle{x,y}$ as follows:
\begin{itemize}
\item Let $\beta(x_1)=y$. For $i\geq 2$, let $\beta(x_i)=
  [\beta(x_{i-1}),x]$.
\item We can then naturally extend $\beta$ to a homomorphism from $\FX$ to
  $\F\angle{x,y}$, and it is easy to check that it is injective.
\end{itemize}  
Hence we have

\begin{theorem}{\rm\cite[Theorem 4.5.3]{Cohnfir}}\label{beta-homomorph1}
  $\beta$ is a homomorphic embedding from $\FX$ to $\F\angle{x,y}$.
\end{theorem}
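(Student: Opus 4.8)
The plan is to realize $\beta$ as a composition of two injective ring homomorphisms
\[
\FX \;\xrightarrow{\ \iota\ }\; H \;\xrightarrow{\ \gamma\ }\; \F\angle{x,y},
\]
where $H=R[x;1,\delta]$ is exactly the skew polynomial ring constructed above (with $R=\FX$ and $\delta(x_i)=x_{i+1}$), $\iota$ is the canonical inclusion of $R$ as the ``degree‑$0$'' part of $H$, and $\gamma$ is the unique ring homomorphism with $\gamma|_R=\beta$ and $\gamma(x)=x$. Since $\beta$ is a ring homomorphism by construction (it is defined on the free generators $x_i$ and extended), the whole content of the theorem is the injectivity of $\beta$, and this will follow at once from injectivity of $\iota$ (standard for skew polynomial rings: the elements of $H$ are formal polynomials $\sum_k x^k a_k$ with $a_k\in R$, and $R$ sits inside as the $k=0$ component) together with injectivity of $\gamma$.

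First I would construct $\gamma$ using the universal property of the skew polynomial ring $H=R[x;1,\delta]$: a ring homomorphism out of $H$ into a ring $T$ is the same as a pair $(f,t)$ with $f\colon R\to T$ a ring homomorphism and $t\in T$ satisfying $f(a)\,t=t\,f(a)+f(\delta(a))$ for all $a\in R$ (this is the defining relation $ax=xa+\delta(a)$ of Equation~\eqref{skew-eqn2}). I take $T=\F\angle{x,y}$, $f=\beta$, and $t=x$; the required compatibility then reads $\beta(\delta(a))=\beta(a)x-x\beta(a)=[\beta(a),x]$ for all $a\in R$. To verify this, observe that both $a\mapsto \beta(\delta(a))$ and $a\mapsto[\beta(a),x]$ are additive maps $R\to\F\angle{x,y}$ satisfying the Leibniz rule $D(ab)=\beta(a)D(b)+D(a)\beta(b)$ — the first because $\delta$ is a derivation of $R$ and $\beta$ a homomorphism, the second by the commutator identity $[\beta(a)\beta(b),x]=\beta(a)[\beta(b),x]+[\beta(a),x]\beta(b)$. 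Any two such maps that agree on the free generators $x_1,x_2,\ldots$ of $R=\FX$ coincide (by a trivial induction on monomial length), and they do agree there: $\beta(\delta(x_i))=\beta(x_{i+1})=[\beta(x_i),x]$ by the defining recursion for $\beta$. Hence $\gamma$ exists.

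Next I would produce injectivity of $\gamma$. In $H$ we have $x_{i+1}=x_ix-xx_i=[x_i,x]$, so by induction every $x_i$ lies in the subring of $H$ generated by $x_1$ and $x$; since $H$ is generated as a ring by $R\cup\{x\}$ and $R$ is generated by the $x_i$, the ring $H$ is generated by $x_1$ and $x$. Thus there is a surjective ring homomorphism $\pi\colon \F\angle{a,b}\twoheadrightarrow H$ from the free algebra on two generators, $a\mapsto x_1$, $b\mapsto x$. Then $\gamma\circ\pi\colon\F\angle{a,b}\to\F\angle{x,y}$ sends $a\mapsto\gamma(x_1)=\beta(x_1)=y$ and $b\mapsto\gamma(x)=x$, i.e.\ it is the evident relabelling isomorphism of free algebras, in particular injective. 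Combined with surjectivity of $\pi$ this forces $\gamma$ to be injective (if $\gamma(h)=0$, write $h=\pi(p)$, then $\gamma\pi(p)=0$, so $p=0$, so $h=0$); in fact $\gamma$ is an isomorphism and $H\cong\F\angle{a,b}$ is itself free on two generators. Finally $\beta=\gamma\circ\iota$ is a composition of injective homomorphisms, hence injective; for a finite variable set one first uses the obvious inclusion $\F\angle{x_1,\dots,x_n}\hookrightarrow\FX$ into the countable case. This proves the theorem.

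I expect the only genuinely delicate point to be the well‑definedness of $\gamma$, i.e.\ checking the compatibility relation $\beta(\delta(a))=[\beta(a),x]$ on all of $R$; the derivation/uniqueness argument above handles it cleanly, after which everything reduces to the soft facts that $R$ embeds in $H$ and that $x_1,x$ generate $H$. (The genuinely clever ingredient — Cohn's choice of $\delta$ and the passage to the skew polynomial ring — is already supplied in the excerpt.)
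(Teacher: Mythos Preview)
Your proof is correct and follows the same skew-polynomial-ring route the paper (following Cohn) sketches: factor $\beta$ through the inclusion $R\hookrightarrow H=R[x;1,\delta]$ and the extension $\gamma:H\to\F\angle{x,y}$, which the paper also asserts (in the paragraph immediately after this theorem) is an isomorphism. The paper itself offers essentially no argument beyond ``it is easy to check that it is injective'' and a citation to Cohn, so your proposal supplies precisely the omitted details --- the compatibility check $\beta(\delta(a))=[\beta(a),x]$ via the Leibniz rule, and the slick injectivity of $\gamma$ via the surjection $\F\angle{a,b}\twoheadrightarrow H$ whose composite with $\gamma$ is the relabelling isomorphism.
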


%
Furthermore, by the definition of $\delta$, the elements of the skew
polynomial ring $R[x;1,\delta]$ are also polynomials in the ring
$\F\angle{x,y}$.  Indeed, the map $\beta$ can be extended to an
isomorphism from the ring $R[x;1,\delta]$ to $\F\angle{x,y}$ as
follows: for
$f=x^n a_n + x^{n-1}a_{n-1}+ \ldots + a_0 \in R[x;1,\delta]$ define
$\beta(f)= x^n \beta(a_n)+ x^{n-1}\beta(a_{n-1})+ \ldots +
x\beta(a_1)+ \beta(a_0)$.

\begin{theorem}{\rm\cite[Theorem 4.5.3]{Cohnfir}}\label{beta-homomorph}
  $\beta$ is a homomorphic embedding from $\FX$ to $\F\angle{x,y}$.
  Furthermore, $\beta$ is an isomorphism from $R[x;1,\delta]$ to
  $\F\angle{x,y}$.
  \end{theorem}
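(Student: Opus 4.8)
The plan is to extend $\beta$ to the whole skew polynomial ring $H=R[x;1,\delta]$, where $R=\F\langle X\rangle$, obtaining a ring homomorphism $\tilde\beta\colon H\to\F\langle x,y\rangle$ that sends the skew variable $x$ to the polynomial variable $x$ and agrees with $\beta$ on $R$, and then to prove that $\tilde\beta$ is bijective. Since a bijective ring homomorphism is an isomorphism, this yields the second assertion of the theorem, and restricting $\tilde\beta$ to $R$ recovers $\beta$ and shows it is an injective ring homomorphism, i.e.\ a homomorphic embedding, which is the first assertion. Thus the work splits into three steps: (i) constructing $\tilde\beta$, which reduces to verifying one compatibility identity; (ii) surjectivity of $\tilde\beta$; and (iii) injectivity of $\tilde\beta$, for which I would exhibit an explicit left inverse.

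For step (i), I would use the universal property of the skew polynomial ring: a ring homomorphism $\phi\colon R\to S$ together with an element $s\in S$ extends to a ring homomorphism $R[x;1,\delta]\to S$ with $x\mapsto s$ precisely when $\phi(a)\,s=s\,\phi(a)+\phi(\delta(a))$ for all $a\in R$ (equivalently, one may simply define $\tilde\beta\bigl(\sum_k x^k a_k\bigr)=\sum_k x^k\beta(a_k)$ and check multiplicativity, which comes down to the same identity). With $S=\F\langle x,y\rangle$, $\phi=\beta$ and $s=x$, the condition becomes $[\beta(a),x]=\beta(\delta(a))$ for all $a\in R$; that is, $\beta$ must intertwine the derivation $\delta$ on $R$ with the inner derivation $\delta_x := [\,\cdot\,,x]$ on $\F\langle x,y\rangle$. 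To see this, observe that both maps $D_1=\delta_x\circ\beta$ and $D_2=\beta\circ\delta$ from $R$ to $\F\langle x,y\rangle$ are $\F$-linear and satisfy the twisted Leibniz rule $D(ab)=D(a)\beta(b)+\beta(a)D(b)$ --- for $D_1$ because $\delta_x$ is a derivation and $\beta$ a homomorphism, and for $D_2$ because $\delta$ is a derivation and $\beta$ a homomorphism. Any $\F$-linear map out of the free algebra $R=\F\langle X\rangle$ obeying this rule is determined by its values on the generators $x_i$ (their difference is again such a map, vanishing on the $x_i$, hence on all monomials by induction on length, hence on $R$). Finally, by the definitions of $\beta$ and $\delta$ we have $D_1(x_i)=[\beta(x_i),x]=\beta(x_{i+1})=\beta(\delta(x_i))=D_2(x_i)$. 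Hence $D_1=D_2$, the compatibility identity holds, and $\tilde\beta$ exists.

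Step (ii) is immediate: $\mathrm{im}(\tilde\beta)$ is an $\F$-subalgebra of $\F\langle x,y\rangle$ containing $\tilde\beta(x)=x$ and $\tilde\beta(x_1)=\beta(x_1)=y$, and $\{x,y\}$ generates $\F\langle x,y\rangle$. For step (iii), I would define a ring homomorphism $\gamma\colon\F\langle x,y\rangle\to H$ by $\gamma(x)=x$ and $\gamma(y)=x_1$ (possible since $\F\langle x,y\rangle$ is free on $\{x,y\}$), and show $\gamma\circ\tilde\beta=\mathrm{id}_H$ by checking it on the algebra generators $X\cup\{x\}$ of $H$. Indeed $\gamma\tilde\beta(x)=\gamma(x)=x$, and $\gamma\tilde\beta(x_i)=x_i$ follows by induction on $i$: the base case is $\gamma\tilde\beta(x_1)=\gamma(y)=x_1$, and for the step, using $\tilde\beta(x_i)=\beta(x_i)=[\beta(x_{i-1}),x]$, the induction hypothesis, and the defining relation $[x_{i-1},x]=\delta(x_{i-1})=x_i$ of $H$ (Equation~\ref{skew-eqn2}), we get $\gamma\tilde\beta(x_i)=[\gamma\tilde\beta(x_{i-1}),\gamma(x)]=[x_{i-1},x]=x_i$. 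A ring endomorphism of $H$ fixing a generating set is the identity, so $\gamma\circ\tilde\beta=\mathrm{id}_H$; in particular $\tilde\beta$ is injective. Being also surjective, $\tilde\beta\colon H\xrightarrow{\ \sim\ }\F\langle x,y\rangle$ is a ring isomorphism, and its restriction $\beta=\tilde\beta|_R$ is a homomorphic embedding, as claimed.

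The one genuinely non-formal point --- and hence the main obstacle --- is the compatibility verification in step (i): that $\beta$ carries the (non-inner) derivation $\delta$ on $R$ to the inner derivation $\delta_x=[\,\cdot\,,x]$ on $\F\langle x,y\rangle$. Once that is established, everything else (the universal properties, surjectivity, and the left-inverse bookkeeping) is routine. The crux is the observation that a twisted derivation out of a free algebra is pinned down by its values on the generators, so that Cohn's iterated-commutator definition $\beta(x_{i+1})=[\beta(x_i),x]$ is exactly what forces $\delta_x\circ\beta=\beta\circ\delta$.
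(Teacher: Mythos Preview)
Your proposal is correct and follows essentially the same route the paper sketches: the paper defines the extension by $\tilde\beta\bigl(\sum_k x^k a_k\bigr)=\sum_k x^k\beta(a_k)$ and then cites Cohn's text \cite{Cohnfir} for the verification that this is an isomorphism, without supplying details; you carry out precisely those details via the universal property (the compatibility $\delta_x\circ\beta=\beta\circ\delta$), surjectivity from the generators, and an explicit left inverse $\gamma$. There is no substantive difference in approach, only in level of detail.
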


Using properties of the field of fractions of the skew polynomial
ring $R[x;1,\delta]$ Cohn shows that $\beta$ extends to an
embedding between the skew fields.

\begin{theorem}[Cohn's embedding theorem]
  The embedding map $\beta:\FX\to \F\angle{x,y}$ extends to an
  embedding $\beta : \fF\to \fFxy$ which implies that $\beta$
  is an honest embedding.
\end{theorem}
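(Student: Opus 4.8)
The plan is to build one ambient skew field $\mathcal{E}$ that simultaneously contains $\fF=\ratX$ and an isomorphic copy of $\F\angle{x,y}$ compatibly with $\beta$, and then move invertibility of matrices between these two copies. First I would extend the derivation $\delta$ on $R=\FX$ (recall $\delta(x_i)=x_{i+1}$) to a derivation $\bar\delta$ on the free skew field $\fF$; derivations on a semifir extend to its universal field of fractions (this is part of Cohn's theory of firs), and any such $\bar\delta$ restricts to $\delta$ on $R$, so $R[x;1,\delta]$ is a subring of $S:=\fF[x;1,\bar\delta]$. Since $\fF$ is a skew field, $S$ is a left and right principal ideal domain (degree gives a Euclidean algorithm), hence an Ore domain, and so it has a skew field of fractions $\mathcal{E}:=\fF(x;1,\bar\delta)$. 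We then have ring inclusions $\FX=R\subseteq\fF\subseteq S\subseteq\mathcal{E}$ and $R[x;1,\delta]\subseteq S\subseteq\mathcal{E}$, and, composing $\beta^{-1}\colon\F\angle{x,y}\to R[x;1,\delta]$ (an isomorphism by Theorem~\ref{beta-homomorph}) with the inclusion $R[x;1,\delta]\hookrightarrow\mathcal{E}$, an embedding $\iota\colon\F\angle{x,y}\hookrightarrow\mathcal{E}$ whose composite with $\beta|_R$ is the inclusion $R\hookrightarrow\mathcal{E}$; in particular $\iota(\beta(x_i))=x_i$ in $\mathcal{E}$ for every $i$.

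Next I would show that $\beta$, regarded as a map $\FX\to\F\angle{x,y}\subseteq\fFxy$, is an honest embedding. Let $M$ be a full matrix over $\FX$. By the inner-rank Proposition of Section~\ref{prel-sec}, $M$ is invertible over $\fF$, hence over $\mathcal{E}$ since $\fF\subseteq\mathcal{E}$; and because $\iota\circ(\beta|_R)$ is the inclusion into $\mathcal{E}$, the matrix $\iota(\beta(M))$ is exactly $M$ viewed over $\mathcal{E}$, so it is invertible over $\mathcal{E}$. The key point is now immediate: if $\beta(M)$ were not full over $\F\angle{x,y}$, say $\beta(M)=PQ$ with $P$ of size $k\times r$ and $Q$ of size $r\times k$ with $r<k$, this factorization would persist over $\mathcal{E}$ and contradict invertibility; hence $\beta(M)$ is full over $\F\angle{x,y}$, and therefore, by the inner-rank Proposition applied with variable set $\{x,y\}$, it is invertible over the free skew field $\fFxy$. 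Thus $\beta$ carries full matrices to invertible matrices, i.e.\ it is honest.

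It remains to extend $\beta$ to the skew fields. Since $\fF=\ratX$ is the universal field of fractions of $\FX$ and $\beta\colon\FX\to\fFxy$ is honest (hence injective, as nonzero elements are full $1\times 1$ matrices), the canonical specialization of $\fF$ into the sub-skew-field of $\fFxy$ generated by $\beta(\FX)$ is total, and a total homomorphism out of a field is injective; this gives the desired embedding $\beta\colon\fF\to\fFxy$, which restricts on $\FX$ to the iterated-commutator map. The closing ``honest embedding'' statement then follows at once (it is also exactly the content of the Lemma above). I expect the first step to be the main obstacle: extending $\delta$ to a derivation of the skew field $\fF$ and checking that $S=\fF[x;1,\bar\delta]$ is an Ore domain having $R[x;1,\delta]$ as a subring is precisely where Cohn's theory of skew polynomial rings and of universal fields of fractions of firs is used; once the ambient field $\mathcal{E}$ is available, the rest is bookkeeping with inner ranks and the universal property of $\fF$.
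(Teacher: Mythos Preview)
Your proposal is correct and follows essentially the same route the paper sketches (and attributes to Cohn): the ambient skew field is built from the skew polynomial construction over the derivation $\delta(x_i)=x_{i+1}$, and invertibility of full matrices is transported through it. The paper does not actually spell out a proof here but only sets up $R[x;1,\delta]\cong\F\angle{x,y}$ and defers to \cite{Cohnfir} for the passage to fields of fractions; your write-up fills in precisely those details, with the minor twist that you first establish honesty inside $\mathcal{E}=\fF(x;1,\bar\delta)$ and then invoke the universal property of $\fF$ to obtain the skew-field extension, whereas the paper's phrasing runs the implication the other way via the preceding Lemma.
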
  

\paragraph{Proof of Lemma~\ref{lem-simple-formula-depth-reduction}}

\vspace{2mm}

\begin{proof}
  First we describe a recursive construction to compute a formula
  $\Phi'$ equivalent to $\Phi$ and inductively prove that the depth of
  $\Phi'$ is $c \log_2 s$ for an absolute constant $c$. Then we
  analyze the parallel time complexity of the construction and prove that
  it can be implemented in $\NC$.

  Let $A_\Phi$ be $s \times s$ matrix such that for gates
  $u,v \in \Phi$, $(u,v)^{th}$ entry of $A_\Phi$ is $1$ if gate $v$ is
  a descendent of gate $u$. Using the well-known pointer doubling
  strategy (see e.g. \cite{Wy79}, \cite{RR07}) we can compute matrix
  $A_\Phi$ in $\NC$. So by adding elements in each row of $A_\Phi$, we
  can compute $wt(u)$ (that is the number of descendants of gate
  $u \in \Phi$) in $\NC$. Let $v$ be a gate in $\Phi$ such that
  $\frac{s}{3} \leq wt(v) < \frac{2s}{3}$. Such a gate always exists
  by a standard argument.
Since we can compute the number of descendants of a gate in $\NC$, we
can also find such a gate $v$ in $\NC$, by simply having a processor
associated to each gate to check the above inequalities. Now we are
ready to describe recursive construction of $\Phi'$.
\begin{enumerate}
\item In $\NC$ find a gate $v$ in $\Phi$ such that $\frac{s}{3} \leq wt(v) < \frac{2s}{3}$.
\item Let $r=v_0$ be the root of $\Phi$ and
  $v_1,v_2, \ldots, v_{\ell-1}$ be gates on $r$ to $v$ path in
  $\Phi$. Let $v=v_\ell$. For $1\leq i \leq \ell$, let $u_i$ denote a
  sibling of $v_i$. Let $S_1$ be collection of all indices $j$ such
  that $1\leq j \leq \ell$, $v_j$ is a product gate and is a right
  child of its parent. Similarly let $S_2$ be collection of all
  indices $j$ such that $1\leq j \leq \ell$, $v_j$ is a product gate
  and is a left child of its parent. Define formula
  $\Psi_1 = \prod _{j \in S_1} \Phi_{u_j}$. The product is computed
  using sequence of multiplication gates, starting with $\Phi_{u_j}$
  for the first $u_j$ (one with smallest index $j\in S_1$) each
  multiplication gate multiplies the product so far from right by
  $\Phi_{u_j}$ for the next gate $u_j$, $j\in S_1$, along the root to
  $v$ path. Similarly define formula
  $\Psi_2 = \prod _{j \in S_2} \Phi_{u_j}$. Let $\Psi_3$ be a formula
  obtained from $\Phi$ by replacing subformula $\Phi_v$ by zero.
\item Recursively in parallel compute log-depth formulas
  $\Psi_1', \Psi_2', \Psi_3', \Phi_v'$ equivalent to $\Psi_1$,
  $\Psi_2$, $\Psi_3$ and $\Phi_v$ respectively.
\item Define formula
  $\Phi'$ as $(\Psi_1'\cdot \Phi_v')\cdot \Psi_2'+\Psi_3$.
\end{enumerate} 
From the definitions of $\Psi_1, \Psi_2$ and $\Psi_3$ it is clear that
the polynomial computed by $\Phi$ equals
$(\hat{\Psi_1}\cdot \hat{\Phi_v})\cdot \hat{\Psi_2}+\hat{\Psi_3}$,
where $\hat{\Psi_1},\hat{\Psi_1},\hat{\Psi_1}, \hat{\Phi_v}$ are the
polynomials computed by $\Psi_1$, $\Psi_2$, $\Psi_3$ and $\Phi_v$
respectively. Hence, $\Phi'$, defined in Step 4, is equivalent to $\Phi$.

Let $d(s)$ denote the upper bound on the depth of the formula output
by the above procedure if size $s$ formula is given to it as input. We
use induction on the size $s$ to prove that $d(s)\leq c \log_2 s$. As
$\Psi_1, \Psi_2$ are disjoint subformulas of $\Phi_{v\leftarrow z}$,
clearly we have $|\Psi_1|+|\Psi_2| \leq |\Phi_{v\leftarrow z}|$. Since
$|\Phi_v| \geq \frac{s}{3}$, it implies
$|\Psi_1|, |\Psi_2| \leq |\Phi_{v\leftarrow z}|\leq
\frac{2s}{3}$. From the definition of $\Psi_3$, it is clear that
$|\Psi_3| \leq |\Phi_{v\leftarrow z}|\leq \frac{2s}{3}$. So the size
of each formula $\Psi_1$, $\Psi_2$, $\Psi_3$ and $\Phi_v$ is upper
bounded by $\frac{2s}{3}$. Hence, inductively, for each of the
formulas $\Psi_1', \Psi_2', \Psi_3', \Phi_v'$ the depth is upper
bounded by $c \log_2 \frac{2s}{3}$. As $\Phi'$ is obtained from
$\Psi_1', \Psi_2', \Psi_3', \Phi_v'$ using two multiplications and an
addition as in Step 4, it follows that the depth of
$\Phi'=d(s)\leq c\log_2\frac{2s}{3} + 3$. Choosing
$c\geq \frac{3}{(\log_2 3-1)}$ we get
$d(s)\leq c\log_2\frac{2s}{3} + 3 \leq c \log _2 s$. This completes
the induction, proving that the depth of $\Phi'$ is at most
$c \log_2 s$.

Let $t(s)$ denotes parallel time complexity of the above
procedure. Since Steps 1, 2, 4 can be implemented in $\NC$ they
together take $(\log s)^k$ parallel time for an absolute constant
$k$. As all the recursive calls in Step 3 are processed in parallel,
we have the recurrence $t(s)\leq t(2s/3)+(\log s)^k$. Hence,
$t(s)\leq (\log s)^{(k+1)}$. This shows that the above procedure can
be implemented in $\NC$, completing the proof of the theorem.
\end{proof}

\paragraph{Proof of Theorem~\ref{thm-parallel-effective-Higman}}

\vspace{2mm}

\begin{proof}
  By Lemma \ref{lem-simple-formula-depth-reduction} we have an $\NC$
  algorithm to convert every entry of $A$ to a log-depth formula. We
  will first describe parallel algorithm for Higman linearization of
  single polynomial $f$ given by noncommutative log-depth formula
  $\Phi$. Higman linearization of a polynomial matrix can be handled
  similarly.

\begin{claim} \label{clm-higman-linearization-for-polynomial}
Given a noncommutative formula $\Phi$ of size $s$ and depth $O(\log s)$ computing a polynomial $f$ in $\FX$. We can compute Higman linearization of $f$ in deterministic $\NC$. More precisely, we can compute a linear matrix $L_\Phi \in \FX^{(s+1)\times (s+1)}$, invertible upper and lower triangular matrices $P$,$Q$ $\in \FX^{(s+1)\times (s+1)}$ with all diagonal entries $1$ such that $P(f \oplus I_s)Q = L_{\Phi}$. 
All the entries of $P,Q$ are computable by algebraic branching programs of size $\poly(s)$. Moreover, $f \not\equiv 0$ iff $L_{\Phi}$ is a full noncommutative rank matrix.
\end{claim}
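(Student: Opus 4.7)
The plan is to parallelise the gate-by-gate Higman linearization by exploiting the $O(\log s)$ depth of $\Phi$. Associate to each gate $v$ of $\Phi$ a local triple $(L_v, P_v, Q_v)$ satisfying $P_v (\hat{\Phi_v} \oplus I_{k_v - 1}) Q_v = L_v$ with $L_v$ linear, $P_v$ unit upper-triangular, and $Q_v$ unit lower-triangular. At a leaf $v$ the triple is trivial: $L_v = (\hat{\Phi_v})$ and $P_v = Q_v = (1)$. At an internal gate $v$ with children $v_1, v_2$, a single application of the Higman template $f + gh$ from Theorem~\ref{higthm} produces $(L_v, P_v, Q_v)$ from $(L_{v_i}, P_{v_i}, Q_{v_i})_{i=1,2}$: for a $\times$-gate use $(f,g,h)=(0,\hat{\Phi_{v_1}},\hat{\Phi_{v_2}})$, and for a $+$-gate use $(f,g,h)=(\hat{\Phi_{v_1}},1,\hat{\Phi_{v_2}})$. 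The combination adds exactly one new row and one new column; the accumulated triangular factors are $P_v = \widetilde{P}\cdot \diag(P_{v_1}, P_{v_2}, 1)$ and $Q_v = \diag(Q_{v_1}, Q_{v_2}, 1) \cdot \widetilde{Q}$, where $\widetilde{P}, \widetilde{Q}$ are the unit-triangular factors of that single Higman step.

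Two preprocessing steps make this recursion parallelisable. First, a pointer-doubling prefix-sum pass on the formula tree (as in the proof of Lemma~\ref{lem-simple-formula-depth-reduction}) computes all subformula sizes $k_v$ and fixes the exact placement of each gate's block inside the final $(s+1)\times(s+1)$ matrix $L_\Phi$. Second, because $\Phi$ is a tree, the blocks for incomparable gates are pairwise disjoint, so once placements are fixed, all gates at the same depth can be processed in parallel with no interference. Since $\Phi$ has depth $O(\log s)$, the bottom-up assembly finishes in $O(\log s)$ synchronised layers, each using $\polylog(s)$ parallel time for the index arithmetic and constant-size matrix manipulations per gate, giving total $\polylog(s)$ parallel time on $\poly(s)$ processors.

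It remains to handle $P,Q$ and the full-rank characterisation. By construction, each of $P,Q$ is a product of $O(s)$ elementary unit-triangular matrices whose off-diagonal entries are polynomials computed by subformulas of $\Phi$ of size at most $s$; hence the entries of $P$ and $Q$ are computable by iterated-matrix-multiplication-style noncommutative ABPs of size $\poly(s)$. The full-rank claim follows from Theorem~\ref{higthm}: since $P,Q$ are invertible over $\fF$ we have $\ncrk(L_\Phi) = \ncrk(f \oplus I_s) = s + \ncrk(f)$, so $L_\Phi$ has full noncommutative rank $s+1$ iff $f \not\equiv 0$. The main technical hurdle will be to verify the invariants through the recursion --- in particular, that embedding the children's linearisations $L_{v_i}$ in place of their distinguished scalar blocks keeps the outer matrix linear (using that only one row and one column are added per Higman step, and that the only non-linear entries introduced, namely $\hat{\Phi_{v_i}}$ in the new row/column, are absorbed into $L_{v_i}$ after conjugation by the children's $P_{v_i}, Q_{v_i}$) and that the tree disjointness of blocks lets us write the global $P,Q$ as the claimed triangular product despite the parallel, out-of-order application of elementary operations.
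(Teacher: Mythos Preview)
Your proposal is essentially the paper's proof: recurse on the formula tree, spend one Higman step per internal gate (with the same $(f,g,h)$ choices you list), use pointer doubling to precompute subtree sizes and block placements, and let the $O(\log s)$ depth give $\NC$; the ABP bound on $P,Q$ and the full-rank equivalence via $\ncrk(L_\Phi)=\ncrk(f\oplus I_s)$ are exactly as you state.

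One concrete correction: the composition $P_v=\widetilde P\cdot\diag(P_{v_1},P_{v_2},1)$ (and the analogous $Q_v$) is not the right shape. After the Higman step at a $\times$-gate the top-left $2\times 2$ block is $\left(\begin{smallmatrix}0&g\\-h&1\end{smallmatrix}\right)$, so $g$ lands at position $(1,2)$ and $-h$ at $(2,1)$; the child linearisation for $g$ must therefore act on row set $\{1\}\cup\{3,\dots,s_1+2\}$ but column set $\{2\}\cup\{3,\dots,s_1+2\}$, which a plain block-diagonal embedding cannot do. The paper handles this by splitting each $P_{v_i}$ into its first row $(1,R_{v_i})$ and its lower-right block $P_{v_i}'$ and routing these pieces explicitly (and similarly for $Q_{v_i}$). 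This is precisely the ``main technical hurdle'' you flag in your last paragraph; once the blocks are written out, linearity of $L_v$ and unit-triangularity of $P_v,Q_v$ are immediate.
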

\begin{proof}
  The basic idea is to compute the Higman linearization recursively in
  parallel. Since the formula size is known and for every $+$ or
  $\times$ gate in $\Phi$ we need to add a new row and column for the
  Higman process, we know exactly the dimension of the final matrix
  $L$ and the precise location for placement of the submatrices inside
  $L$, corresponding to each subformula of $\Phi$. To obtain Higman
  linearization of $\Phi$ we recursively in parallel compute Higman
  linearization of the subformulas rooted at the left and the right
  child of the root. Based on whether at root of $\Phi$ we have a
  $+$-gate or a $\times$-gate, we appropriately compose Higman
  linearizations of the sub-formulas rooted at the left and the right
  child of the
  root. 
  As depth of $\Phi$ is logarithmic, there are only $O(\log s)$ nested
  recursive calls, this ensures that the overall process can be
  implemented in $\NC$.

We now give the details of the inductive proof. 

Let $r$ be the root of $\Phi$, and $g$ and $h$ be the polynomials
computed at the left and right child of $r$, respectively. If $r$ is a
$\times$-gate, the matrix obtained after the first step of the Higman
Linearization is

\[
\left(
\begin{array}{cc}
1 & g \\
0 & 1
\end{array}
\right)
\left(
\begin{array}{cc}
g\cdot h & 0 \\
0 & 1
\end{array}
\right)
\left(
\begin{array}{cc}
1 & 0 \\
-h & 1
\end{array}
\right) ~=~
\left(
\begin{array}{cc}
0 & g \\
-h & 1
\end{array}
\right)
\]

If $r$ is a $+$-gate, we do not explicitly deal with it, as we are
eventually interested only in \emph{linearizing} the
matrix. Nevertheless, for the $+$-gate case we carry out following
step.\footnote{This facilitates the presentation and calculation of
  parallel placement of linear submatrices corresponding to the
  recursive calls inside the final Higman linearized matrix for $f$.}

\[
\left(
\begin{array}{cc}
1 & g \\
0 & 1
\end{array}
\right)
\left(
\begin{array}{cc}
g+h & 0 \\
0 & 1
\end{array}
\right)
\left(
\begin{array}{cc}
1 & 0 \\
-1 & 1
\end{array}
\right) ~=~
\left(
\begin{array}{cc}
h & g \\
-1 & 1
\end{array}
\right)
\]
It reduces computing Higman linearization of $g+h$ to computing it for
$g$ and for $h$.

We can in $\NC$ compute the size of sub-tree rooted at any gate of
$\Phi$. Let $s_1$ and $s_2$ be sizes of the left and right sub-trees
of the root $r$. Let $w=-h$ if $r$ is a $\times$-gate and is equal to
$h$ when $r$ is a $+$-gate. Suppose we compute Higman linearization of
$g$ and $w$ recursively in parallel. More precisely, we obtain
invertible upper and lower triangular matrices
$P_g,Q_g \in \FX^{(s_1+1)\times (s_1+1)}$ respectively, with all
diagonal entries $1$ and linear matrix
$L_g \in \FX^{(s_1+1)\times (s_1+1)}$ such that
$P_g~ (g \oplus I_{s_1})~ Q_g=L_g$. The entries of $P_g, Q_g$ are
given by ABPs of size $\poly(s_1)$. Similarly, we obtain invertible
upper and lower triangular matrices
$P_{w},Q_{w} \in \FX^{(s_2+1)\times (s_2+1)}$ respectively, with all
diagonal entries $1$ and linear matrix
$L_w \in \FX^{(s_2+1)\times (s_2+1)}$ such that
$P_w~ (w \oplus I_{s_2})~ Q_w = L_w$. The entries of $P_w,Q_w$ are
given by ABPs of size $\poly(s_2)$. Let $R_g \in \FX^{1\times s_1}$ be
a row matrix obtained by dropping $(1,1)^{th}$ entry from the first
row of $P_g$. Similarly, column matrix $C_g$ is obtained by
dropping$(1,1)^{th}$ entry from the first column of
$Q_g$. Analogously, define row and column matrices $R_w$ and
$C_w$. Let $P_g',Q_g'$ denote bottom right $s_1 \times s_1$ blocks of
matrices $P_g$ and $Q_g$ respectively. Similarly, let $P_w',Q_w'$
denote bottom right $s_2 \times s_2$ blocks of matrices $P_w$ and
$Q_w$ respectively. So we have,
\[
L_g = \left(
\begin{array}{cc}
1 & R_g \\
0 & P_g'
\end{array}
\right)
\left(
\begin{array}{cc}
g & 0 \\
0 & I_{s_1}
\end{array}
\right)
\left(
\begin{array}{cc}
1 & 0 \\
C_g & Q_g'
\end{array}
\right)
=
\left(
\begin{array}{cc}
g+R_gC_g & R_gQ_g' \\
P_g'C_g & P_g'Q_g'
\end{array}
\right)
\]   
Similarly, we have
\[
L_w = \left(
\begin{array}{cc}
1 & R_w \\
0 & P_w'
\end{array}
\right)
\left(
\begin{array}{cc}
w & 0 \\
0 & I_{s_2}
\end{array}
\right)
\left(
\begin{array}{cc}
1 & 0 \\
C_w & Q_w'
\end{array}
\right)
=
\left(
\begin{array}{cc}
w+R_wC_w & R_wQ_w' \\
P_w'C_w & P_w'Q_w'
\end{array}
\right)
\]   

Now we are ready to define matrices $P,Q,L$ using matrices
$P_g,Q_g,P_w,Q_w$ so that $P(f\oplus I_s)Q = L$.

\begin{itemize}
\item {\bf $r$ is $\times$-gate}

After the first step of Higman linearization on $f$ the matrix
obtained is
$\left(
\begin{array}{cc}
0 & g \\
w & 1
\end{array}
\right) $.

When we obtain Higman linearization for $g$, $w$ recursively, these
polynomials sit at $(1,1)^{th}$ entry of $g\oplus I_{s_1}$ and
$w \oplus I_{s_2}$ respectively. Whereas in the matrix above, the
polynomials $g,w$ sit at $(1,2)^{th}, (2,1)^{th}$ entries
respectively. Now we define block matrices $\tilde{P_g},\tilde{P_w}$
and $\tilde{Q_g},\tilde{Q_w}$ for performing row and column
operations, respectively, on the appropriate rows and columns of the
matrix, taking into account the location of $g$ and $h$. Whenever we
are carrying out linearization for $g$ it keeps the block
corresponding to linearization of $h$ intact and vice-versa.  Define
$\tilde{P_g}, \tilde{P_w}$ as
\[
\tilde{P_g}=
\left(
\begin{array}{c|c}
\begin{array}{cc}
1 & 0\\
0 & 1
\end{array} & 
\begin{array}{cc}
R_g & 0\\
0 & 0
\end{array}\\
\hline
0 & 
\begin{array}{cc}
P_g' & 0\\
0 & I_{s_2}
\end{array}
\end{array}
\right), \tilde{P_w}=
\left(
\begin{array}{c|c}
\begin{array}{cc}
1 & 0\\
0 & 1
\end{array} & 
\begin{array}{cc}
0 & 0\\
0 & R_w
\end{array}\\
\hline
0 & 
\begin{array}{cc}
I_{s_1} & 0\\
0 & P_w'
\end{array}
\end{array}
\right)
\]

Similarly, define $\tilde{Q_g}, \tilde{Q_w}$ as
\[
\tilde{Q_g}=
\left(
\begin{array}{c|c}
\begin{array}{cc}
1 & 0\\
0 & 1
\end{array} & 0\\
\hline
\begin{array}{cc}
0 & C_g\\
0 & 0
\end{array} & 
\begin{array}{cc}
Q_g' & 0\\
0 & I_{s_2}
\end{array}
\end{array}
\right), \tilde{Q_w}=
\left(
\begin{array}{c|c}
\begin{array}{cc}
1 & 0\\
0 & 1
\end{array} & 0\\
\hline
\begin{array}{cc}
0 & 0\\
C_w & 0
\end{array} & 
\begin{array}{cc}
I_{s_1} & 0\\
0 & Q_w'
\end{array}
\end{array}
\right)
\]

$\tilde{P_g}, \tilde{Q_g}$ carry out linearization of $g$, $\tilde{P_w}, \tilde{Q_w}$ carry out linearization of $w$. Define $P_1 = \left(
\begin{array}{cc}
1 & g \\
0 & 1
\end{array}
\right) \oplus I_{s_1+s_2}
$ and $Q_1 = \left(
\begin{array}{cc}
1 & 0 \\
w & 1
\end{array}
\right) \oplus I_{s_1+s_2}
$, which would carry out the first step of linearization.

Finally, define $P=\tilde{P_w}\tilde{P_g}P_1$. We can see that $P=\left(
\begin{array}{c|c}
\begin{array}{cc}
1 & g\\
0 & 1
\end{array} & \begin{array}{cc}
R_g & 0\\
0 & R_w
\end{array}\\
\hline
0 & 
\begin{array}{cc}
P_g' & 0\\
0 & P_w'
\end{array}
\end{array}
\right)$. Similarly, define $Q=\tilde{Q_w}\tilde{Q_g}Q_1=\left(
\begin{array}{c|c}
\begin{array}{cc}
1 & 0\\
w & 1
\end{array} & 0\\
\hline
\begin{array}{cc}
0 & C_g\\
C_w & 0
\end{array} & 
\begin{array}{cc}
Q_g' & 0\\
0 & Q_w'
\end{array}
\end{array}
\right)$

As $P_g'$ and $P_w'$ are bottom right blocks of the upper triangular
matrices $P_g$ and $P_w$, it follows that $P_g'$, $P_w'$, and hence
also $P$ is invertible, and also upper triangular with all diagonal
entries $1$. Similarly, $Q$ is invertible and lower triangular with all
diagonal entries $1$. As $P,Q$ are realized as a matrix product as
defined above and entries of $P_g, Q_g$ and $P_w, Q_w$ are given by
ABPs of size $\poly(s_1), \poly(s_2)$ respectively, we can construct
ABPs of $\poly(s)$ size for entries of $P$ and $Q$ in $\NC$.

Now, define $L_{\Phi}=P(f\oplus I_s)Q =\left(
\begin{array}{c|c}
\begin{array}{cc}
0 & g+R_gC_g\\
w+R_wC_w & 1
\end{array} & \begin{array}{cc}
R_gQ_g' & 0\\
0 & R_wQ_w'
\end{array}\\
\hline
\begin{array}{cc}
0 & P_g'C_g\\
P_w'C_w & 0
\end{array} & 
\begin{array}{cc}
P_g'Q_g' & 0\\
0 & P_w'Q_w'
\end{array}
\end{array}
\right)$

Clearly, $L_{\Phi}$ is linear as $L_g$ and $L_w$ are linear. Note that
we do not need to compute $L_{\Phi}$ as a product $P(f\oplus
I_s)Q$. If we explicitly know linear entries of $L_g, L_w$ (which we
do know recursively) we can explicitly compute linear entries of
$L_\Phi$.


\item{\bf $r$ is $+$-gate}

  This case is handled similarly. In case of $+$-gate, the matrix
  obtained after first step is
  $\left(
\begin{array}{cc}
w & g \\
-1 & 1
\end{array}
\right)$.
So to ensure the Higman linearized matrices of $g$ and $w$ are
correctly placed inside the Higman linearized matrix for $f$ we define
$P,Q$ as
\[
P=\left(
\begin{array}{c|c}
\begin{array}{cc}
1 & g\\
0 & 1
\end{array} & \begin{array}{cc}
R_g & R_w\\
0 & 0
\end{array}\\
\hline
0 & 
\begin{array}{cc}
P_g' & 0\\
0 & P_w'
\end{array}
\end{array}
\right), Q=\left(
\begin{array}{c|c}
\begin{array}{cc}
1 & 0\\
-1 & 1
\end{array} & 0\\
\hline
\begin{array}{cc}
0 & C_g\\
C_w & 0
\end{array} & 
\begin{array}{cc}
Q_g' & 0\\
0 & Q_w'
\end{array}
\end{array}
\right)
\]
and define $L_\Phi$ as 
\[L_\Phi=P(f\oplus I_s)Q=\left(
\begin{array}{c|c}
\begin{array}{cc}
w+R_wC_w & g+R_gC_g\\
-1 & 1
\end{array} & \begin{array}{cc}
R_gQ_g' & R_wQ_w'\\
0 & 0
\end{array}\\
\hline
\begin{array}{cc}
0 & P_g'C_g\\
P_w'C_w & 0
\end{array} & 
\begin{array}{cc}
P_g'Q_g' & 0\\
0 & P_w'Q_w'
\end{array}
\end{array}
\right)\] Clearly, any entry of $L_\Phi$ is a scalar or some entry of
$L_g$ or $L_w$, so $L_\Phi$ is linear. Again, entries of $L_\Phi$ can
be computed explicitly, given the matrices $L_g$ and $L_w$
explicitly. This completes the Case 2.

As $P,Q$ are invertible, it implies $\ncrk(f\oplus I_s) = \ncrk(L)$,
which implies $L$ is full iff $f \not\equiv 0$. This proves the
claim.
\end{itemize}
\end{proof}

In the general case, we need to Higman linearize an $n \times n$
polynomial matrix $A$. We will in parallel compute Higman
linearization matrices for each entry of $A$ using the $\NC$ algorithm
described in the above claim. Let $M_{i,j}$ be
$(s_{i,j}+1) \times (s_{i,j}+1)$ linear matrix corresponding to
$(i,j)^{th}$ entry for $1\leq i,j \leq n$. That is
$M_{i,j}= P_{i,j} (A_{i,j}\oplus I_{s_{i,j}}) Q_{i,j}$ where
$P_{i,j}, Q_{i,j}$ are invertible upper and lower triangular matrices
respectively. We have polynomial sized ABPs for entries of $P_{i,j}$
and $Q_{i,j}$.

Let $M_{i,j}'$ denote $s_{i,j}\times s_{i,j}$ bottom right block of $M_{i,j}$. Let $R_{i,j}$ denote $1 \times s_{i,j}$ row matrix obtained by dropping first entry of first row of $M_{i,j}$. Similarly, let $C_{i,j}$ be $s_{i,j} \times 1$ column matrix obtained by dropping first entry of the first column of $M_{i,j}$. 
Then, the matrix $L$ is a $2\times 2$ block matrix such that
\begin{enumerate}
\item The top left block of $L$ is $n\times n$ and for $1\leq i,j \leq n$, $(i,j)^{th}$ entry of the block is $(1,1)^{th}$ entry of $M_{i,j}$.
\item The top right block of $L$ is the matrix 
\[\left(\begin{array}{cccc}
R_{1,1}~R_{1,2}~\ldots R_{1,n} & ~~ & ~~ & ~~\\
~~ & R_{2,1}~R_{2,2}~\ldots R_{2,n}& ~~ & ~~\\
~~ & ~~ & \ddots&~~\\
~~&~~&~~&R_{n,1}~R_{n,2}~\ldots R_{n,n}
\end{array} \right)\]
\item The bottom left block of $L$ is the matrix $[B_1 B_2 \ldots B_n]^T$ where for $1\leq i\leq n$, $B_i$ is the matrix \[\left(\begin{array}{cccc}
C_{i,1} & ~~ & ~~ & ~~\\
~~ & C_{i,2}& ~~ & ~~\\
~~ & ~~ & \ddots&~~\\
~~&~~&~~&C_{i,n}
\end{array} \right)\]
\item The bottom right block of $L$ is a matrix $\left(\begin{array}{cccc}
D_1 & ~~ & ~~ & ~~\\
~~ & D_2& ~~ & ~~\\
~~ & ~~ & \ddots&~~\\
~~&~~&~~&D_n
\end{array} \right)$ where $D_i$ is a matrix $\left(\begin{array}{cccc}
M_{i,1}' & ~~ & ~~ & ~~\\
~~ & M_{i,2}'& ~~ & ~~\\
~~ & ~~ & \ddots&~~\\
~~&~~&~~&M_{i,n}'
\end{array} \right)$
\end{enumerate}
   
All unspecified entries in the above matrices are zero. As done in
the proof of Claim \ref{clm-higman-linearization-for-polynomial},
we can easily define invertible upper and lower triangular matrices
$P$ and $Q$ such that $P(A\oplus I_k)Q =L$. Also we can obtain ABPs
computing entries of $P,Q$ from ABPs for the entries of the matrices
$P_{i,j}, Q_{i,j}$, $1\leq i, j \leq n$. Since $P,Q$ are invertible,
it implies that $\ncrk(L)$ is equal to $\ncrk(A\oplus I_k)$
which is equal to $\ncrk(A)+k$. Clearly, the dimension of the bottom
right block of $L$ is $O(n^2 \cdot s)$ which implies similar bound on
$k$. This completes the proof of the theorem.
\end{proof}

\paragraph{Proof of Lemma~\ref{lem-rational-formula-depth-reduction}}

\vspace{2mm}

\begin{proof}
  We give a recursive construction for both the parts and prove the
  correctness by induction on $s$, the size of the formula $\Phi$.

{\bf Depth-reduce($\Phi$)}\\
{\bf Input:} A correct formula $\Phi$ of size $s$ computing a rational function in $\fF$.\\
{\bf Output:} A correct formula $\Phi'$ of depth at most $c \log_2 s$ such that $\Phi \equiv \Phi'$.
\begin{enumerate}
\item Find a gate $v \in \Phi$ such that $\frac{s}{3}< wt(v) \leq \frac{2s}{3}$.
\item In Parallel construct formulas $\Psi$, $\Delta$ such that $\Psi= \text{Depth-Reduce}(\Phi_v)$ and $\Delta = \text{Normal-Form}(\Phi_{v\leftarrow z},z)$.
\item Obtain formula $\Phi'$ from $\Delta$ by replacing $z$ in $\Delta$ by $\Psi$.
\item Output $\Phi'$.
\end{enumerate}

{\bf Normal-Form($\Phi$, $z$)}\\
{\bf Input:} A correct formula $\Phi$ of size $s$ computing a rational function in $\fF$ and a variable $z \in \Phi$ which appears at most once in $\Phi$\\
{\bf Output:} A correct formula $\Phi'$ which is of the form \[
\Phi' = (Az+B)(Cz+D)^{-1}
\]
where $A, B, C$ and $D$ are correct rational formulas which do not depend on $z$ with depth at most $c \log_2 s+b$. Moreover, the rational function $\hat{C} \hat{\Psi} +\hat{D}\neq 0$ for any formula $\Psi$ such that $\Phi_{z \leftarrow \Psi}$ is correct.

Let $v_1, v_2, \ldots, v_\ell=z$ be gates on the path from root $r$ of
$\Phi$ to the leaf gate $z$, such that $v_j$ is not an inverse
gate. So parent of each $v_j$ has two children, and let $u_i$ denote
the sibling of $v_i$. Use pointer doubling based parallel algorithm
(as mentioned in the proof of Lemma
\ref{lem-simple-formula-depth-reduction}) to compute $wt(u_i)$ and
$wt(v_i)$ for all $i\in [\ell]$. We call gate $v_i$ for $i\in [\ell]$
as a \emph{balanced} gate if
$wt(\Phi_{v_i}), wt(\Phi_{v_i \leftarrow z'}) \leq \frac{5s}{6}$. Now
we consider two cases.

\begin{enumerate}
\item There exist a balanced gate $v_i$:\\
\begin{enumerate}
\item Let $v= v_i$. In parallel compute formulas $\Psi_1, \Psi_2$ such that 
\begin{eqnarray*}
\Psi_1&=&(A_1z'+B_1)(C_1z'+D_1)^{-1}=\text{Normal-Form}(\Phi_{v \leftarrow z'},z')\\
\Psi_2&=&(A_2z+B_2)(C_2z+D_2)^{-1}=\text{Normal-Form}(\Phi_{v},z)
\end{eqnarray*}
\item Define formulas $A, B, C, D$ as $A_1\cdot A_2 + B_1 \cdot C_2$, $A_1\cdot B_2 + B_1 \cdot D_2$, $C_1\cdot A_2 + D_1 \cdot C_2$ and $C_1\cdot B_2 + D_1 \cdot D_2$ respectively.
\item Let $\Phi'=(A\cdot z+B)\cdot(C \cdot z+D)^{-1}$
\item Output $\Phi'$ and halt
\end{enumerate}
\item There does not exist a balanced gate:\\
In this case, we can prove that there is a unique $i\in[\ell]$ such that $wt(u_i)> \frac{s}{6}$.
\begin{enumerate}
\item Let $v$ be parent of the gate $v_i$.
\item In Parallel find formulae $\Psi_1,\Psi_2,\Psi_3,\Psi_4$ such that 
\begin{eqnarray*}
\Psi_1&=&(A_1z'+B_1)(C_1z'+D_1)^{-1}=\text{Normal-Form}(\Phi_{v \leftarrow z'},z')\\
\Psi_2&=&(A_2z+B_2)(C_2z+D_2)^{-1}=\text{Normal-Form}(\Phi_{v_i},z)\\
\Psi_3&=&\text{Depth-Reduce}(\Phi_{u_i})\\
\Psi_4&=& (Az+B)(Cz+D)^{-1}=\text{Normal-Form}(\Phi'',z)
\end{eqnarray*}
where $\Phi''$ is obtained by replacing sub-tree rooted at $u_i$ by $0$ in $\Phi$.
\item Using the algorithm of Theorem \ref{thm-rit-ncrank-reduction}
  check if {$\hat{\Psi_3}\equiv 0$} in $\NC$ with oracle access to
  bivariate $\NCRANK$. If $\hat{\Psi_3}\equiv 0$ then output $\Psi_4$
  and halt.
\item If $\hat{\Psi_3}\not\equiv 0$ then let $\Phi'=(A\cdot z+B)\cdot(C \cdot z+D)^{-1}$, where
\begin{equation*}
A = \begin{cases}
A_1A_2+B_1C_2+A_1\hat{\Psi_3}C_2&\text{ if $v_i$ is a $+$-gate}\\
A_1\hat{\Psi_3}A_2+B_1C_2 &\text{if $v_i$ is a $\times$-gate and is a right child of $v$}\\
A_1A_2+B_1\hat{\Psi_3}^{-1}C_2&\text{if $v_i$ is a $\times$-gate and is a left child of $v$} 
\end{cases}
\end{equation*}
\begin{equation*}
B = \begin{cases}
A_1B_2+B_1D_2+A_1\hat{\Psi_3}D_2&\text{ if $v_i$ is a $+$-gate}\\
A_1\hat{\Psi_3}B_2+B_1D_2 &\text{if $v_i$ is a $\times$-gate and is a right child of $v$}\\
A_1B_2+B_1\hat{\Psi_3}^{-1}D_2&\text{if $v_i$ is a $\times$-gate and is a left child of $v$}
\end{cases}
\end{equation*}
\begin{equation*}
C = \begin{cases}
C_1A_2+D_1C_2+C_1\hat{\Psi_3}C_2&\text{ if $v_i$ is a $+$-gate}\\
C_1\hat{\Psi_3}A_2+D_1C_2 &\text{if $v_i$ is a $\times$-gate and is a right child of $v$}\\
C_1A_2+D_1\hat{\Psi_3}^{-1}C_2&\text{if $v_i$ is a $\times$-gate and is a left child of $v$}
\end{cases}
\end{equation*}
\begin{equation*}
D = \begin{cases}
C_1B_2+D_1D_2+C_1\hat{\Psi_3}D_2&\text{ if $v_i$ is a $+$-gate}\\
C_1\hat{\Psi_3}B_2+D_1D_2&\text{if $v_i$ is a $\times$-gate and is a right child of $v$}\\
C_1B_2+D_1\hat{\Psi_3}^{-1}D_2&\text{if $v_i$ is a $\times$-gate and is a left child of $v$}
\end{cases}
\end{equation*}
\item Output $\Phi'$ and halt.
\end{enumerate}
\end{enumerate}

Case 2 above is not explicitly dealt with in \cite{HW15} as their
focus is on the \emph{existence} of a log-depth formula equivalent to
$\Phi$. In contrast to that, in our case we want to
\emph{algorithmically construct} log-depth formula $\Phi'$ equivalent
to $\Phi$. This makes the details of Case 2 crucial as the
construction of $\Phi'$ in Case 2 depends on whether
$\hat{\Phi_{u_i}} \equiv 0$ or not. To solve this $\RIT$ instance we
need to recursively compute log-depth formula $\Psi_3$ equivalent to
$\Phi_{u_i}$ and then invoke algorithm of theorem
\ref{thm-rit-ncrank-reduction} to carry out $\RIT$ test in NC with
oracle access to bivariate $\NCRANK$, as in Step (c).
 
We first prove the correctness of both the algorithms described above using induction on $s$, then we analyze the parallel complexity of both the algorithms. We will choose appropriate constants $c, b$ during the proof.

\paragraph{Correctness of the algorithm Depth-Reduce} 
We know that there exists a gate $v\in \Phi$ such that $\frac{s}{3}< wt(v) \leq \frac{2s}{3}$. As in proof of Lemma \ref{lem-simple-formula-depth-reduction} we can find such a gate $v$ as required by Step 1 of the Depth-Reduce algorithm.  
Clearly, the formulas $\Phi_v$ and $\Phi_{v\leftarrow z}$ are of size at most $2s/3$. Using inductive assumption, we can construct a correct formula $\Psi$ such that depth of $\Psi$ is at most $c \log_2 \frac{2s}{3}$ and $\Psi \equiv \Phi_v$. Again using inductive assumption we can construct correct formulas $A, B, C, D$ (which do not depend on $z$) of depth at most $c \log_2 \frac{2s}{3}+b$ such that the formula $\Delta = (A\cdot z+B)\cdot(C\cdot z+D)^{-1}$ is equivalent to $\Phi_{v \leftarrow z}$.
Since $\Phi$ is equal to the formula obtained from $\Phi_{v\leftarrow z}$ by replacing $z$ by $\Phi_v$ and $\Phi$ is correct so from inductive assumption it follows that $\hat{C} \hat{\Phi_v} + \hat{D} \neq 0$. 

As $\Psi \equiv \Phi_v$, $\Delta \equiv \Phi_{v \leftarrow z}$ and $\hat{C} \hat{\Phi_v} + \hat{D} \neq 0$ from Lemma \ref{lem-equivalence} it follows that $\Phi'$ is correct and $\Phi'\equiv \Phi$. Since $\Phi' = (A\cdot \Psi+B)\cdot(C\cdot \Psi+D)^{-1}$ we get that depth of $\Phi'$ is at most $c \log_2 \frac{2s}{3}+b+4$. By choosing constant $c \geq \frac{b+4}{(\log_2 3 -1)}$, we get that the depth of $\Phi'$ is at most $c \log_2 s$. Completing the inductive argument for the correctness proof of the algorithm Depth-Reduce.  

\paragraph{Correctness of the algorithm Normal-Form}
  In case (1) we know that there exists a balanced gate $v=v_i$. We have $wt(\Phi_{v}), wt(\Phi_{v \leftarrow z'}) \leq \frac{5s}{6}$. So by inductive assumption we know that the formulas $A_j,B_j, C_j, D_j$ for $j\in \{1,2\}$ are correct, and their depths are at most $c \log_2 \frac{5s}{6} + b$. Now using compositionality of the z-normal forms as in Proposition 4.1 of \cite{HW15} it follows that the formula $\Phi' = (A\cdot z+B)\cdot(C \cdot z+D)^{-1}$ is equivalent to $\Phi$ where $A, B, C, D$ are $A_1\cdot A_2 + B_1 \cdot C_2$, $A_1\cdot B_2 + B_1 \cdot D_2$, $C_1\cdot A_2 + D_1 \cdot C_2$ and $C_1\cdot B_2 + D_1 \cdot D_2$ respectively. Also, clearly the depth of $A, B, C, D$ is at most $c \log_2 \frac{5s}{6} + b+2$. By choosing $c\geq \frac{2}{(\log_2 6 - \log_2 5)}$ it follows that the depths of formulas $A, B, C, D$ are at most $c \log_2 s+ b$. 
To complete the inductive proof we need to prove that $\hat{C}\hat{\pi}+ \hat{D} \neq 0$ for any formula $\pi$ such that $\Phi_{z\leftarrow \pi}$ is correct. Let $\pi$ be such that $\Phi_{z\leftarrow \pi}$ is correct. For simplicity lets denote formulas $\Phi_v$ and $\Phi_{v\leftarrow z'}$ by $\alpha$ and $\beta$ respectively. Since $\Phi_{z\leftarrow \pi}$ is correct, it implies $\alpha_{z\leftarrow \pi}$ is correct as $\alpha$ is a subformula of $\Phi$. So by inductive assumption $\hat{C_2} \hat{\pi} + \hat{D_2} \neq 0$. Since $\Phi_{z\leftarrow \pi}$ is correct, $\beta$ being a subformula of $\Phi$ it also implies $\beta_{z'\leftarrow \gamma}$ is correct where $\gamma= (A_2\cdot \pi + B_2)\cdot (C_2 \cdot \pi +D_2)^{-1}$. 
By inductive assumption we get 
\begin{eqnarray*}
\hat{C_1} \hat{\gamma} + D_1 &\neq& 0~ \text{which implies}\\
\hat{C_1} [(\hat{A_2}\cdot \hat{\pi} + \hat{B_2})\cdot (\hat{C_2} \cdot \hat{\pi} +\hat{D_2})^{-1}]+D_1 &\neq& 0~\text{which implies}\\
\hat{C_1} (\hat{A_2}\cdot \hat{\pi} + \hat{B_2})+ \hat{D_1} (\hat{C_2} \cdot \hat{\pi} +\hat{D_2}) &\neq& 0~\text{as}~\hat{C_2} \hat{\pi} + \hat{D_2} \neq 0\\
\text{So}~(\hat{C_1}\hat{A_2}+ \hat{D_1}\hat{C_2})\hat{\pi} + (\hat{C_1}\hat{B_2}+\hat{D_1}\hat{D_2}) &\neq& 0~\text{which implies}\\
\hat{C}\hat{\pi}+\hat{D} &\neq& 0 
\end{eqnarray*}     
This completes the proof for case 1. 

Next we argue that case 1 and case 2 together cover all the
possibilities. To see this, we will argue that if there does not exist a unique $u_i$ with $wt(u_i)\geq \frac{s}{6}$ then there must exist a balanced gate. There are two possibilities: either for every
gate $u_i$, $wt(u_i)\leq \frac{s}{6}$ or there are two or more gates
$u_i$'s with $wt(u_i)> \frac{s}{6}$ If for every $i\in [\ell]$,
$wt(u_i) \leq \frac{s}{6}$, we find smallest $j$ such that
$\sum_{i=1}^j wt(u_i)>\frac{s}{6}$. Clearly
$\sum_{i=1}^j wt(u_i)\leq \frac{2s}{6}$, which implies
$wt(\Phi_v), wt(\Phi_{v\leftarrow z'}) \leq \frac{5s}{6}$ for
$v= v_{i+1}$. So $v$ is balanced. If there are two or more $u_i$'s
such that $wt(u_i)> \frac{s}{6}$ then $v$ be parent of gate $u_i$ such
that $i$ is a largest index with $wt(u_i)> \frac{s}{6}$. Clearly $v$
is a balanced gate. This proves that case 1, 2 together cover all
possibilities.

Assume that there is a unique $i\in[\ell]$ such that $wt(u_i)> \frac{s}{6}$. We first apply Depth-Reduce on formula $\Phi_{u_i}$ and get a log-depth formula $\Psi_3$ equivalent to $\Phi_{u_i}$, we carry out this depth reduction to efficiently test if $\Phi_{u_i}\equiv 0$?. We will give details on this later when we figure out the parallel time complexity of the algorithm. Now when $\Phi_{u_i}\equiv \Psi_3 \equiv 0$, clearly formula $\Phi\equiv \Phi''$ where $\Phi''$ is a formula obtained from $\Phi$ by replacing sub-formula rooted at $u_i$ by $0$. As $wt(u_i)> \frac{s}{6}$, we have $|\Phi''|\leq \frac{5s}{6}$. So by inductive assumption, the correct sub-formulas $A, B, C, D$  of $\Psi_4$ obtained by recursive call $\text{Normal-Form}(\Phi'',z)$ have depth at most $c\log_2 \frac{5s}{6}+b \leq c\log_2 s+b$ and $\Psi_4 \equiv \Phi''\equiv \Phi$. So it follows, $\Phi_{z\leftarrow \pi} \equiv \Phi''_{z\leftarrow \pi}$. Consequently $\Phi_{z\leftarrow \pi}$ is correct iff $\Phi''_{z\leftarrow \pi}$ is correct. So from inductive hypothesis it follows that $\hat{C} \hat{\pi} + \hat{D} \neq 0$ for any formula $\pi$ such that $\Phi_{z\leftarrow \pi}$ is correct. This proves the correctness of Normal-form procedure when $\Phi_{u_i}\equiv 0$.    

 
Now let $\Phi_{u_i}\not \equiv 0$. Let $v$ be the parent of $u_i$. 
Below we discuss the case when $v$ is $\times$-gate and $u_i$ is a right child of $v$. 

We have $\Psi_2 \equiv \Phi_{v_i}\equiv (A_2\cdot z +B_2)\cdot (C_2\cdot z + D_2)^{-1}$. Let $h_1=A_2\cdot z+B_2$ and $h_2= C_2 \cdot z+ D_2)$. So $\Phi_{v_i}\equiv h_1 \cdot h_2^{-1}$. Now as $v$ is $\times$-gate and $v_i,u_i$ are left and right children of $v$ respectively. So we get $\Phi_v \equiv h_1h_2^{-1}\Phi_{u_i} \equiv h_1h_2^{-1}\Psi_3$. We have $\Psi_1 \equiv \Phi_{v\leftarrow z'}=(A_1z'+B_1)(C_1z'+D_1)^{-1}$. So we get
\begin{eqnarray*}
\Phi &\equiv& (~A_1 \cdot (h_1h_2^{-1}\Psi_3)+B_1)\cdot (~C_1 \cdot(h_1h_2^{-1}\Psi_3)+D_1)^{-1}\\
&\equiv& (A_1 \cdot h_1+B_1\cdot \Psi_3^{-1}h_2)\cdot h_2^{-1}\Psi_3 \cdot [~(C_1\cdot h_1+D_1 \cdot \Psi_3^{-1}h_2)\cdot h_2^{-1}\Psi_3~]^{-1}\\
&\equiv& (A_1 \cdot h_1+B_1 \cdot \Psi_3^{-1}h_2)\cdot h_2^{-1}\Psi_3 \cdot \Psi_3^{-1}h_2~ \cdot (C_1\cdot h_1+D_1\cdot \Psi_3^{-1}h_2)^{-1}\\
&\equiv& (A_1 \cdot h_1+B_1 \cdot \Psi_3^{-1}h_2)\cdot(C_1 \cdot h_1+D_1\cdot \Psi_3^{-1}h_2)^{-1}
\end{eqnarray*}
By substituting values of $h_1, h_2$ and simplifying we get that $\Phi \equiv (A\cdot z+B)\cdot(C \cdot z+D)^{-1}$ where $A, B, C, D$ are $A_1\cdot A_2+B_1 \cdot \Psi_3^{-1}\cdot C_2$, $A_1\cdot B_2+B_1 \cdot \Psi_3^{-1}\cdot D_2$, $C_1\cdot A_2+D_1\cdot \Psi_3^{-1}\cdot C_2$ and $C_1\cdot B_2+D_1 \cdot \Psi_3^{-1} \cdot D_2$ respectively as defined in Step 2(d). 

As $wt(u_i)> \frac{s}{6}$, clearly $|\Phi_{v_i}|, |\Phi_{v\leftarrow z'}| \leq \frac{5s}{6}$. 
Since
\begin{eqnarray*}
\Psi_1&=&(A_1z'+B_1)(C_1z'+D_1)^{-1}=\text{Normal-Form}(\Phi_{v \leftarrow z'},z')\\
\Psi_2&=&(A_2z+B_2)(C_2z+D_2)^{-1}=\text{Normal-Form}(\Phi_{v_i},z)
\end{eqnarray*}

So by inductive assumptions the sub-formulas $A_1, B_1, C_1, D_1$ of $\Psi_1$ and the sub-formulas $A_2, B_2, C_2, D_2$ of $\Psi_2$ are correct and have depths at most $c \log_2 \frac{5s}{6}+b$. As $\Psi_3 = \text{Depth-Reduce}(\Phi_{u_i})$ and $|\Phi_{u_i}|< s$ by inductive assumption we get that the depth of $\Psi_3$ is at most $c\log_2 s$. Clearly $c \log_2 \frac{5s}{6}+b \leq c \log_2 s$ for $c\geq \frac{b}{\log_2 6 - \log_2 5)}$. So from expressions for $A, B, C, D$ it follows that the depth of $A, B, C, D$ is at most depth of $\Psi_3$ plus $4$. Which implies that the depth of $A, B, C, D$ is at most $c \log_2 s + 4 \leq c \log_2 s + b$ if the constant $b\geq 4$. This gives us the desired bound on the depth of $A, B, C, D$.
To summarize if we choose constant $b\geq 4$ and choose constant $c$ such that it satisfies all the lower bounds required in different steps of the above proof, we will get the desired bound on the depth of $A, B, C, D$. We can show that $\hat{C}\hat{\pi}+\hat{D} \neq 0$ for any formula $\pi$ such that $\Phi_{z\leftarrow \pi}$ is correct. The proof is similar to one for Case (1), we additionally need to take into account $\times$-gate at $v$ while composing z-Normal forms $\Psi_1$ and $\Psi_2$. We skip the details. 

When $v$ is a $v$ is a $\times$-gate and $u_i$ is a left child of $v$, the composition of z-normal forms is easy as we do not need an oracle access for $\RIT$ as in the case discussed above when $u_i$ is the right child. In the commutative case we can by default assume that $u_i$ is the left child. Precisely for this reason Brent's construction \cite{Br74} is independent of whether $\Phi_{u_i} \equiv 0$ or not.
 We skip the details of cases when $v$ is a $+$-gate or $v$ is
 $\times$-gate and $u_i$ is the left child which can be handled
 similar to case (1). This proves the correctness of the procedure
 Normal-Form.
 

 Next we analyze the parallel time complexity of both the
 procedures. Let $t_1(s), t_2(s)$ denote the parallel time
 complexities of the procedures Depth-Reduce and Normal-Form
 respectively. The step (1) of the Depth reduce procedure can be
 implemented in $\NC$ so it has parallel time complexity
 $(\log_2 s)^k$ for some absolute constant $k$. As both the formulas
 $\Phi_v$ and $\Phi_{v\leftarrow z}$ have sizes at most $2s/3$, the
 parallel time complexity of step (2) is at most
 $\max(t_1(2s/3), t_2(2s/3))$. So we get the following recurrence for
 $t_1(s)$.
\[ t_1(s) \leq (\log_2 s)^a + \max\left(~t_1 \left( \frac{2s}{3}\right)~,~ t_2\left(\frac{2s}{3}~\right)\right)\]
where $a$ is an absolute constant.
Now we obtain recurrence for $t_2(s)$. In Case (1) of Normal-Form when there exist a balanced gate $v_i$, we can find such a gate in $\NC$. Both the formulas $\Phi_v$ and $\Phi_{v\leftarrow z'}$ have the sizes at most $5s/6$ so step 1(a) takes parallel time $\max(t_1(5s/6), t_2(5s/6))$. In case (2) the formulas $\Phi_{v\leftarrow z'}$, $\Phi_{v_i}$ and $\Phi''$ all have sizes at most $5s/6$ and formula $\Phi_{u_i}$ has size at most $s-1$. So collectively we get the following recurrence for $t_2(s)$
\begin{eqnarray*}
 t_2(s) &\leq& (\log_2 s)^{b} + \max \left( ~t_1 \left( \frac{5s}{6}\right)~,~ t_2\left(\frac{5s}{6}~\right)~,~  t_1(s-1) \right)\\
 &\leq& (\log_2 s)^{b} + \max \left( ~ t_2\left(\frac{5s}{6}~\right)~,~  t_1(s) \right)
\end{eqnarray*}
for an absolute constant $b$. The upper bound $t_1(s), t_2(s) \leq (\log_2 s)^c$ for sufficiently large constant $c$ follows from an easy induction. Hence both the procedures can be implemented in deterministic $\NC$.

\end{proof}

\end{document}